\pdfoutput=1
\documentclass{lmcs}

\keywords{conservativity, coherence, strictification, dependent type theory, homotopy type theory, definitional equalities, strict equalities}

\usepackage[english]{babel}
\usepackage[utf8]{inputenc}
\usepackage[T1]{fontenc}

% \usepackage{palatino}
% \usepackage{mathpazo}

% \usepackage{todonotes}
% \presetkeys{todonotes}{inline}{}

\usepackage{tikz}
\usetikzlibrary{cd}
\usetikzlibrary{decorations.pathmorphing}

\usepackage{xcolor}
\usepackage{multirow}
\usepackage{marginnote}

\usepackage{mathpartir}

\usepackage{fontawesome}

\usepackage{stmaryrd}
\usepackage{amsmath}
\usepackage{amsthm}
\usepackage{amssymb}
\usepackage{mathtools}
\usepackage{environ}
\usepackage{bm}
\usepackage{bbm}

\usepackage[nameinlink]{cleveref}

% LMCS environments
% I'm redefining the LMCS environments to make cleverref aware of them

\newcommand{\renewtheorem}[1]{%
  \expandafter\let\csname #1\endcsname\relax
  \expandafter\let\csname c@#1\endcsname\relax
  \expandafter\let\csname end#1\endcsname\relax
  \newtheorem{#1}%
}

% the following environments switch to a slanted font:
\theoremstyle{plain}

\renewtheorem{thm}{Theorem}[section]
\renewtheorem{cor}[thm]{Corollary}
\renewtheorem{lem}[thm]{Lemma}
\renewtheorem{slem}[thm]{Sublemma}
\renewtheorem{prop}[thm]{Proposition}
\renewtheorem{asm}[thm]{Assumption}

% the following environments keep the roman font:
\theoremstyle{definition}

\renewtheorem{rem}[thm]{Remark}
\renewtheorem{rems}[thm]{Remarks}
\renewtheorem{exa}[thm]{Example}
\renewtheorem{exas}[thm]{Examples}
\renewtheorem{defi}[thm]{Definition}
\renewtheorem{conv}[thm]{Convention}
\renewtheorem{conj}[thm]{Conjecture}
\renewtheorem{prob}[thm]{Problem}
\renewtheorem{oprob}[thm]{Open Problem}
\renewtheorem{oprobs}[thm]{Open Problems}
\renewtheorem{algo}[thm]{Algorithm}
\renewtheorem{obs}[thm]{Observation}
\renewtheorem{desc}[thm]{Description}
\renewtheorem{fact}[thm]{Fact}
\renewtheorem{qu}[thm]{Question}
\renewtheorem{oqu}[thm]{Open Question}
\renewtheorem{pty}[thm]{Property}
\renewtheorem{clm}[thm]{Claim}
\renewtheorem{nota}[thm]{Notation}
\renewtheorem{com}[thm]{Comment}
\renewtheorem{coms}[thm]{Comments}

\theoremstyle{defC}
\renewtheorem{defiC}[thm]{Definition}

\theoremstyle{thmC}
\renewtheorem{thmC}[thm]{Theorem}
\renewtheorem{propC}[thm]{Proposition}
\renewtheorem{lemC}[thm]{Lemma}

% Custom
\theoremstyle{plain}\newtheorem*{thm*}{Theorem}
\theoremstyle{definition}\newtheorem{con}[thm]{Construction}
\theoremstyle{definition}\newtheorem*{exas*}{Examples}

% \newcommand{\AgdaLink}[2]{\def\AgdaLinkDef{https://rafaelbocquet.gitlab.io/Agda/CoherenceStrictEqs/#1.html\##2}\href{\AgdaLinkDef}{{#1}.agda}}

% Custom qed for
\DeclareRobustCommand{\defiEnd}{%
  \leavevmode\unskip\penalty9999 \hbox{}\nobreak\hfill
  \quad\hbox{$\lrcorner$}%
}

% Lowercase mathcal
\DeclareFontFamily{U}{mathc}{}
\DeclareFontShape{U}{mathc}{m}{it}%
{<->s*[1.03] mathc10}{}
\DeclareMathAlphabet{\mathcal}{U}{mathc}{m}{it}

% Japanese letters
\DeclareFontFamily{U}{min}{}
\DeclareFontShape{U}{min}{m}{n}{<-> udmj30}{}

% Emphasis
\newcommand{\defemph}[1]{\textbf{#1}}

% Double langle/rangle
\makeatletter
\newsavebox{\@brx}
\newcommand{\llangle}[1][]{\savebox{\@brx}{\(\m@th{#1\langle}\)}%
  \mathopen{\copy\@brx\kern-0.5\wd\@brx\usebox{\@brx}}}
\newcommand{\rrangle}[1][]{\savebox{\@brx}{\(\m@th{#1\rangle}\)}%
  \mathclose{\copy\@brx\kern-0.5\wd\@brx\usebox{\@brx}}}
\makeatother

% Operators

\DeclareMathOperator{\colima}{colim}
\newcommand{\colim}{\mathop{\colima}}

\newcommand{\cxlim}{\operatorname{cxlIm}}

\newcommand{\coim}{\operatorname{coim}}
\newcommand{\cxl}{\operatorname{cxl}}

% Arrows

\newcommand{\Ra}{\Rightarrow}

\newcommand{\La}{\Leftarrow}

\newcommand{\rat}{\rightarrowtail}

% Brackets
\newcommand{\abs}[1]{{\left| #1 \right|}}

% Categories
\newcommand{\Ob}{\mathsf{Ob}}

\newcommand{\id}{\mathsf{id}}
\newcommand{\op}{\mathsf{op}}

% Contextual categories / type theory
\newcommand{\type}{\mathsf{type}}

\newcommand{\Ty}{\mathsf{Ty}}
\newcommand{\ty}{\mathsf{ty}}
\newcommand{\Tm}{\mathsf{Tm}}
\newcommand{\tm}{\mathsf{tm}}

% Modalities

% MTT

% Elim

% Booleans

% Natural numbers
\newcommand{\Nat}{\mathbb{N}}

% Id-types
\newcommand{\Id}{\mathsf{Id}}
\newcommand{\J}{\mathsf{J}}
\newcommand{\refl}{\mathsf{refl}}

\newcommand{\Singl}{\mathsf{Singl}}

% Equivalences / hLevels

% Eq-types
\newcommand{\Eq}{\mathsf{Eq}}

% Pi-types
\newcommand{\app}{\mathsf{app}}
\newcommand{\lam}{\mathsf{lam}}
\newcommand{\happly}{\mathsf{happly}}
\newcommand{\funext}{\mathsf{funext}}

% Sigma and Unit-types

\newcommand{\Unit}{\mathbf{1}}

\newcommand{\Init}{\mathbf{0}}

\newcommand{\Quot}{\mathbf{Q}}
\newcommand{\quot}{\mathbf{q}}

% Universe
\newcommand{\UU}{\mathcal{U}}

\newcommand{\El}{\mathsf{El}}

% Yoneda
\newcommand{\yo}{\text{\usefont{U}{min}{m}{n}\symbol{'210}}}

% Model-structures

\newcommand{\cof}{\mathsf{cof}}

% Normalization

% Short names for theories and models (mathbb)
\newcommand{\Th}{\mathbb{T}}

\newcommand{\MC}{\mathbb{C}}
\newcommand{\MD}{\mathbb{D}}
\newcommand{\ME}{\mathbb{E}}

% Short names for sets and internal things (mathsf)
\newcommand{\SProp}{\mathsf{Prop}}
\newcommand{\SSet}{\mathsf{Set}}

\newcommand{\SPsh}{\mathsf{Psh}}
\newcommand{\SRepPsh}{\mathsf{RepPsh}}

% Short names for categories (mathbf or mathcal)
\newcommand{\CSet}{\mathbf{Set}}

\newcommand{\CPsh}{\mathbf{Psh}}

\newcommand{\CCwf}{\mathbf{CwF}}
\newcommand{\CA}{\mathcal{A}}
\newcommand{\CB}{\mathcal{B}}
\newcommand{\CC}{\mathcal{C}}
\newcommand{\CD}{\mathcal{D}}
\newcommand{\CE}{\mathcal{E}}

\newcommand{\CL}{\mathcal{L}}
\newcommand{\CM}{\mathcal{M}}

\newcommand{\CQ}{\mathcal{Q}}
\newcommand{\CR}{\mathcal{R}}

\newcommand{\CMod}{\mathbf{Mod}}

% Short names for 2-categories (mathfrak)

%%% Local Variables:
%%% mode: latex
%%% TeX-master: "main"
%%% End:

\author{Rafaël Bocquet}
\address{Department of Programming Languages and Compilers, Eötvös Loránd University, Budapest, Hungary}
\email{bocquet@inf.elte.hu}
\urladdr{\url{https://rafaelbocquet.gitlab.io/}}
\thanks{The author was supported by the European Union, co-financed by the European Social Fund (EFOP-3.6.3-VEKOP-16-2017-00002).}

\title{Coherence of strict equalities in dependent type theories}
% \date{\today}

\begin{document}

\begin{abstract}
  We study the coherence and conservativity of extensions of dependent type theories by additional strict equalities.
  By considering notions of congruences and quotients of models of type theory, we reconstruct Hofmann's proof of the conservativity of Extensional Type Theory over Intensional Type Theory.
  We generalize these methods to type theories without the Uniqueness of Identity Proofs principle, such as variants of Homotopy Type Theory, by introducing a notion of higher congruence over models of type theory.
  Our definition of higher congruence is inspired by Brunerie's type-theoretic definition of weak $\infty$-groupoid.
  For a large class of type theories, we reduce the problem of the conservativity of equational extensions to more tractable acyclicity conditions.
\end{abstract}

\maketitle

\section{Introduction}

Equality and computation are central components of type theories.
The computational content of a type theory is presented by elimination rules (often called $\beta$-rules), and perhaps uniqueness rules (usually called $\eta$-rules) or more exotic rules (such as the $\nu$-rules considered in \cite{NewEqsNeutrals}).
This computational content is typically explained by the means of a normalization algorithm.
In presence of identity types, there is a distinction between two notions of equality between terms of a type theory.
Internally, the identity types provide the notion of \emph{internal equality}, also often called \emph{propositional equality}, or sometimes \emph{typal equality} to emphasize that it does not have to be truncated.
Externally, we can also compare terms up to \emph{strict equality}, which is the proof-irrelevant equality of our metatheory.
Strict equality is also often called \emph{definitional} or \emph{judgemental} equality.

When working internally to a type theory, it is desirable to have as many strict equalities as possible.
Indeed, equalities that hold strictly are equalities that can be implicitly and silently coerced over.
On the other hand, equalities that are only internal require explicit transports and coercions, which quickly clutter the terms of the theory.
Conversely, the trade-off is that type theories with additional strict equalities have fewer models, and their semantics are therefore more complicated.

Hofmann proved in \cite{HofmannCons} a conservativity theorem, showing that all equalities in a type theory can conservatively be made strict, in the presence of enough extensionality principles in the base type theory.
The most important of these extensionality principles is the Uniqueness of Identity Proofs (UIP) principle, which states that any two proofs of a weak equality are themselves weakly equal.
A more syntactic proof was later given by Oury~\cite{OuryConservativity} for the calculus of constructions.
Oury's proof had some issues, mainly due to a presentation of the syntax of type theory with too few annotations.
An improvement of Oury's proof and a presentation of this result as a constructive and effective syntactic translation has been given recently by Winterhalter et al~\cite{ElimRefl}.

Since Hofmann's proof of conservativity, there has been a lot of interest going into the study of type theories with non-trivial higher dimensional content inconsitent with UIP~\cite{GroupoidModel}, and their semantics in homotopy theoretic~\cite{HomotopyTheoreticModels, SimplicialModel} and $\infty$-categorical structures~\cite{LangLexInftyCats}.
For type theories without UIP, strict equalities are even more important, because they are automatically coherent.
Thus having more strict equalities means that we escape not only ``transport hell'', but also ``higher-dimensional transport and coherence hell''.
Conversely, it is in practice much harder to justify strict equalities in many homotopy theoretic models.
Some authors have even considered weak variants of the basic computation rules of type theories.
For instance, weak identity types, whose computation rule only holds up to internal equality, have been introduced by \cite{PropIdTypes}, under the name of propositional identity types.
The path types of cubical type theories~\cite{CTT} also only satisfy weakly the computation rule of identity types.
Other type structures can be weakened similarly, and we can even consider type theories whose computation rules are all expressed by internal equalities instead of strict equalities.
At the level of types and universes, instead of assuming that each type former is strictly classified by some code in the universe, we can ask for them to be classified only up to type equivalence.
These weak Tarski universes have been introduced in \cite{WeakTarski}.
The fact that homotopy type theory with strict univalent universes, rather than weak universes, can be interpreted in every $(\infty, 1)$-topos has only been established recently~\cite{InftyToposesUniverses}.

In this setting, we can wonder how type theories with varying amounts of strict equalities can be compared.
More precisely, we wish to know how to establish coherence and strictification theorems, that would allow us, when working internally to a model of a weak type theory, to pretend that it satisfies more strict equalities than it actually does, by replacing it by an equivalent stricter model.
The question of the conservativity of strong identity types over weak identity types has been asked at the TYPES 2017 conference~\cite{WeakJTypes}, motivated by the fact that the path types in cubical type theory only satisfy the elimination principle of weak identity types.
This was also the original motivation for the present paper.

We give some examples of weakenings and extensions of homotopy type theory that ought to be equivalent to standard homotopy type theory.
We believe that the coherence theorems presented in this paper brings the proofs of these equivalences within reach.
\begin{exas}\label{exas:applications} \hfill
  \begin{itemize}
    \item Weakening the $\beta$ and $\eta$ computation rules of identity types, $\Sigma$-types, inductive types, etc, gives a weaker variant of HoTT.
    \item We can add strict equalities that make the addition on natural numbers into a strictly associative and commutative operation.
      That is, while the inductive definition of $(- + -) : \Nat \to \Nat \to \Nat$ only satisfies the strict equalities $0 + y = y$ and $(\mathsf{suc}\ x) + y = \mathsf{suc}\ (x + y)$, we would add the strict equalities $x + 0 = 0$, $x + (\mathsf{suc}\ y) = \mathsf{suc}\ (x + y)$, $x + y = y + x$, $(x + y) + z = x + (y + z)$, etc.
    \item Similarly, we could make the composition of equalities into a strictly associative operation, optionally with strict inverses.
    \item We can extend the theory with a universe of strict proposition~\cite{SProp} that is equivalent to the universe of homotopy propositions.
    \item Similarly, we can extend the theory with universes of strict categories, strict rings, etc, that satisfy strictly the equations of the theories of categories, rings, etc.
    \item We can extend the theory with a universe of ``strictly'' pointed types $\mathsf{SPtType}$, equivalent to the universe of pointed types $\mathsf{PtType} \triangleq (A : \UU) \times A$, with a smash product operation $(- \wedge -) : \mathsf{SPtType} \to \mathsf{SPtType} \to \mathsf{SPtType}$ with more strict equalities than the smash product of $\mathsf{PtType}$.
      This would provide an alternative interpretation of Brunerie's rewriting based method to prove that the smash product is a symmetric monoidal product on pointed types~\cite{BrunerieSmash}.
  \end{itemize}
\end{exas}

Some progress has been made by Isaev in \cite{IsaevMorita}.
In that paper, Isaev defines the notion of Morita equivalence between type theories, and gives some characterizations of that notion.
A first conservativity result in the absence of UIP is also proven, showing that type theories with weak or strong unit types are Morita equivalent.

The constructions by Isaev~\cite{IsaevMS} and Kapulkin and Lumsdaine~\cite{HoThTT,HoInvCwA}, of Quillen model or semi-model structures over the categories of models of type theories, are also extremely relevant for our work.
In particular, as remarked in \cite{HoThTT}, Hofmann's conservativity theorem proves exactly that the morphism $\Init_{\mathsf{ITT}} \to \Init_{\mathsf{ETT}}$ between the initial models of intensional type theory and extensional type theory is a trivial fibration of their semi-model structure.
The weak equivalences of the same semi-model structure correspond to a weaker notion of conservativity than trivial fibrations.
Isaev's definition of Morita equivalence relies on that notion of weak equivalence.

This paper builds on top of the aforementioned work.
While Isaev considers the notion of Morita equivalence for arbitrary morphisms between type theories, we restrict our attention to the equational extensions of a weak type theory $\Th_{w}$ to a strong type theory $\Th_{s}$, by a family of equations $\Th_{e}$, which should hold weakly in $\Th_{w}$ and strictly in $\Th_{s}$.
We then establish sufficient conditions for the theories $\Th_{w}$ and $\Th_{s}$ to be Morita equivalent.

The situation can be compared to other well-known coherence theorems, such as Mac Lane's coherence theorem for monoidal categories~\cite{MacLaneCoh}.
They can often be stated in multiple different ways.
For example, here are two related ways to state the coherence theorem for monoidal categories.
\begin{enumerate}
  \item\label{itm:coh_moncat_1} Every (weak) monoidal category is monoidally equivalent to a strict monoidal category.
  \item\label{itm:coh_moncat_2} In a freely generated monoidal category, every diagram made up of associators and unitors commutes.
\end{enumerate}

The statement (\ref{itm:coh_moncat_1}) is generally the one that we want to use: it allows us to work with any weak monoidal category as if it was strict.
The statement (\ref{itm:coh_moncat_2}) is however perhaps easier to prove, because free monoidal categories can be seen as syntactic objects, that are relatively easy to describe explicitly and understand.
See \cite{joyal1991geometry} for a proof of the statement (\ref{itm:coh_moncat_1}) that relies on the statement (\ref{itm:coh_moncat_2}).
In the case of monoidal category, it is actually possible to prove the statement (\ref{itm:coh_moncat_1}) more directly using representation theorems similar to the Yoneda lemma.
This kind of approach does not seem suitable for the coherence theorems that we are interested in.

The main result of this paper is a coherence theorem for type theories that is analogous to the fact the statement (\ref{itm:coh_moncat_1}) can be deduced from the statement (\ref{itm:coh_moncat_2}).
It states that to establish the conservativity of the extension of a weak type theory $\Th_{w}$ to a strong type theory $\Th_{s}$ by a family of equations $\Th_{e}$, it suffices to check, for every cellular model $\CC$ of $\Th_{w}$ (a cellular model is a model that is freely generated by some types and terms), that the higher congruence on $\CC$ freely generated by the equations of $\Th_{e}$ exists and is acyclic.
The acyclicity condition encodes the same idea as the fact that every diagram made up of associators and unitors commutes in a freely generated monoidal category.

The main problem lies in the details of the definition of the notion of higher congruence.
An ordinary congruence over a model $\CC$ consists of equivalence relations on the families of types and terms of $\CC$, that should be preserved by all type-theoretic operations.
Equivalently, a congruence can be seen as an extension of the set-valued model $\CC$ to a model valued in setoids.
A higher congruence over $\CC$ should instead be an extension of $\CC$ to a model valued in weak $\infty$-groupoids, or spaces.

Defining higher congruences requires choosing a model of weak $\infty$-groupoids among many.
Our solution is to base our definition of higher congruences on a reformulation of Brunerie's type-theoretic definition of weak $\infty$-groupoid~\cite[Appendix A]{BrunerieThesis}.
We note that Brunerie $\infty$-groupoids are known to be equivalent to the other models of spaces, thanks to work by Henry~\cite{HenryHoHyp}.
Using other models of weak $\infty$-groupoids, such as simplicial or cubical Kan complexes, could also potentially work, but using a type-theoretic definition seems to make the shapes of different objects involved (models of the base type theories and higher congruences) match up.
Concretely, we will define a new type theory $\Th_{w,2}$ extending the weak type theory $\Th_{w}$, and define a higher congruence over a model $\CC$ of $\Th_{w}$ to be a model $\CD$ of $\Th_{w,2}$ whose underlying model of $\Th_{w}$ is equivalent to $\CC$.
The higher congruences freely generated by some equations then become the initial models of some type theories, which are syntactic objects that can be handled using standard type-theoretic techniques (such as parametricity, logical relations, etc).

We don't give any application of our coherence theorem in this paper, which is instead focused on proving general results that hold for a wide class of type theories.
Another paper is in preparation with proofs of acyclicity for type theories with weak variants of the standard type-theoretic structures (identity types, $\Pi$-types, $\Sigma$-types, inductive types, universe, ...).
We also hope to include some of the other examples of \cref{exas:applications}.
The proof of acyclicity relies on ideas from a paper of Lasson~\cite{Lasson14}, which proves the canonicity of the weak $\infty$-groupoid laws definable in Brunerie's type theory.
In our setting that result can be reinterpreted as a proof of the fact that the higher congruence that is freely generated by the empty family of equations is acyclic.
For higher congruences generated by non-empty families of equations, Lasson's construction can be combined with a normalization proof.
As a general heuristic, we expect acyclicity, and hence coherence and conservativity, to hold whenever the strong type theory admits a well-behaved normalization algorithm.

\subsection*{Outline of the paper}

In \cref{sec:background}, we introduce our notations and conventions, and review the notion of Category with Families (CwFs) and associated definitions (contextual CwFs, cumulative CwFs and families of telescopes).
We make extensive use of the internal language of presheaf categories to describe our constructions.
We also give a formal definition of type theory signature, extending the notion of QIIT-signature of \cite{QIITs}, although we only use it informally in the rest of the paper.

In \cref{sec:weak_id_types}, we define the structures of weak identity types and weak $\Pi$-types, and derive some basic tools that are necessary to work with them.
In particular, we don't assume that our type theories include $\Sigma$-types, but we prove results that allow us to work as if we had $\Sigma$-types.

In \cref{sec:homotopy_theory} we recall, and adapt to our setting, the classes of maps of the semi-model structure introduced in \cite{HoThTT}.
We also study further the trivial fibrations, which are defined by some surjectivity conditions, and show that they correspond (up to (contextual) isomorphism) to the quotients of a well-behaved class of congruences, which we call fibrant congruences.

In \cref{sec:equivalence_models} we define the notion of equational extension of a theory by a family of internal equalities and recall the notion of Morita equivalence between type theories of \cite{IsaevMorita}.

In \cref{sec:conservativity_strict} we use the notion of fibrant congruence and its relationship with trivial fibrations to obtain characterizations of Morita equivalences of type theories for strict type theories, i.e. type theories that satisfy the UIP principle.
We obtain the following variant of Hofmann's conservativity theorem.
\begin{thm*}[Simplified statement of \cref{thm:conservativity_strict}]
  Let $\Th_{w}$ be a type theory with a cumulative hierarchy of universes and weak identity types satisfying the UIP principle.
  Let $\Th_{s}$ be the extension of $\Th_{w}$ with the equality reflection rule.

  If either of the following two conditions is verified, then $\Th_{w}$ and $\Th_{s}$ are Morita equivalent.
  \begin{enumerate}
    \item The theory $\Th_{w}$ includes $\Pi$-types with a strict $\beta$-rule.
    \item The category $\CMod_{w}^{\cxl}$ of contextual models of $\Th_{w}$, equipped with the classes of weak equivalences, fibrations and cofibrations defined in \cref{sec:homotopy_theory}, is a semi-model category.
      \defiEnd
  \end{enumerate}
\end{thm*}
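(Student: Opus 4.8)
The plan is to exhibit the strong theory $\Th_{s}$, locally over each cofibrant model of $\Th_{w}$, as a quotient of $\Th_{w}$ by a well-chosen congruence, and to deduce the statement from the correspondence between such quotients and trivial fibrations established in \cref{sec:homotopy_theory}. First I would reduce to a statement about cofibrant models. Since $\Th_{s}$ is an equational extension of $\Th_{w}$, the restriction functor $U \colon \CMod_{s}^{\cxl} \to \CMod_{w}^{\cxl}$ is fully faithful with replete image — being a model of $\Th_{s}$ is a property of a model of $\Th_{w}$, not extra structure — so its counit is an isomorphism, and it has a left adjoint $F_{s}$ sending a model $\CC$ to the model obtained by freely forcing the equality reflection rule. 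By the characterization of Morita equivalence recalled in \cref{sec:equivalence_models}, it then suffices to check that for every cofibrant (equivalently, for every cellular) model $\CC$ of $\Th_{w}$, the unit $\CC \to U F_{s} \CC$ is a weak equivalence. Under hypothesis (2) this is exactly the assertion that $(F_{s} \dashv U)$ is a Quillen equivalence of semi-model categories; under hypothesis (1) the relevant fragment of the homotopy theory of $\CMod_{w}^{\cxl}$ — that cofibrant replacements exist and that trivial fibrations are weak equivalences — is developed directly, and this is the place where the presence of $\Pi$-types with a strict $\beta$-rule enters.

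Next I would construct, over an arbitrary contextual model $\CC$ of $\Th_{w}$, the \emph{equality reflection congruence} $\sim_{\CC}$. On types it relates $A$ and $B$ (over a common context, at a common universe level) whenever the identity type $\Id_{\UU}(A, B)$ of the universe is inhabited in $\CC$; on terms it relates $a$ and $b$ over a common type whenever $\Id(a, b)$ is inhabited, and more generally relates $a$ with the transport of $a$ along an inhabitant of $\Id_{\UU}(A, B)$, so that the relation on terms is compatible with the relation on types. The decisive point is that UIP turns all of these identity types into strict propositions, so that $\sim_{\CC}$ is genuinely reflexive, symmetric and transitive — reflexivity from $\refl$, symmetry and transitivity from $\J$ — and so that the coherence conditions that would in general be required of a congruence hold vacuously, there being no non-trivial higher identifications to keep track of; compatibility with substitution and with each type- and term-former is then witnessed by the fact that identity types commute with substitution and by $\ap$. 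Here the tools developed in \cref{sec:weak_id_types} allow the constructions that would ordinarily call for $\Sigma$-types.

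I would then verify that $\sim_{\CC}$ is a \emph{fibrant} congruence in the sense of \cref{sec:homotopy_theory}: its relations are fibrant because they are represented, by construction, by identity types and by identity types of the universe, and the quotient $\CC / \sim_{\CC}$ again carries a cumulative hierarchy of universes and weak identity types. The correspondence of \cref{sec:homotopy_theory} then identifies the quotient map $q_{\CC} \colon \CC \to \CC / \sim_{\CC}$ as a trivial fibration, hence as a weak equivalence. Moreover $\CC / \sim_{\CC}$ satisfies equality reflection: any identity proof in $\CC / \sim_{\CC}$ lifts along the trivial fibration $q_{\CC}$ to an identity proof in $\CC$, whose endpoints are therefore already related by $\sim_{\CC}$; and since $\sim_{\CC}$ is by construction the smallest congruence whose quotient validates equality reflection, $\CC / \sim_{\CC}$ is the initial model of $\Th_{s}$ under $\CC$, i.e.\ $F_{s} \CC$. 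Taking $\CC$ cofibrant, the unit $\CC \to U F_{s} \CC$ is then precisely the trivial fibration $q_{\CC}$, which is a weak equivalence, and the Morita equivalence follows as in the first paragraph.

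I expect the main obstacle to be the proof that $\sim_{\CC}$ is a fibrant congruence whose quotient is again a model of $\Th_{w}$ — concretely, arranging the heterogeneous identifications between terms that live over identified types to cohere, and checking that the cumulative hierarchy of universes descends to the quotient and stays fibrant. This is exactly where one of the two hypotheses is required: hypothesis (1) supplies $\Pi$-types with a strict $\beta$-rule, which are used to manipulate the emulated $\Sigma$-types and the function extensionality principle entering the construction and verification of the congruence, whereas hypothesis (2) instead hands us the ambient semi-model structure outright, so that cofibrant replacements are available and $q_{\CC}$ can be recognized as the weak equivalence exhibiting the Quillen equivalence $(F_{s} \dashv U)$.
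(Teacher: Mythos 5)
Your overall plan coincides with the paper's: define a congruence on each cellular model by inhabitation of identity types (with UIP making the term-level relation independent of the chosen type equality), show it is fibrant and compatible with the theory, quotient to obtain a model of $\Th_{s}$, and conclude that the unit into $\Init_{s}[X]$ is a (strong) contextual equivalence. But there is a genuine gap at exactly the step you dismiss with ``compatibility with each type- and term-former is then witnessed by \dots $\ap$''. For an operation with a higher-order (bound) argument, such as $\Pi : (A : \Ty)(B : \overline{\Tm\ A} \to \Ty) \to \Ty$, the hypothesis of compatibility only hands you $p : \Id_{\UU}\,A_{1}\,A_{2}$ together with a \emph{family} of equalities $q(a) : \Id_{\UU}\,(B_{1}\,a)\,(B_{2}\,(p^{\star}a))$, and in a weak theory without $\Sigma$-types and without a strict $\beta$-rule for $\Pi$ there is no single first-order operation whose $\ap$ turns this pointwise data into one equality $\Id_{\UU}\,(\Pi\,A_{1}\,B_{1})\,(\Pi\,A_{2}\,B_{2})$. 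This is precisely where the two hypotheses of the theorem are consumed in the paper: under (1), strict-$\beta$ $\Pi$-types (plus the telescope machinery of \cref{ssec:lift_tele} and function extensionality) let one package the premises $(A,B)$ into a single type and apply $\ap$ of the internalized $\Pi$ operation; under (2), one instead takes the cellular model $\CC = \Init_{w}[\bm{A} : \UU, (a : \bm{A}) \vdash \bm{B}(a) : \UU]$ encoding the premises, a cylinder-like cellular model $\CD$ with two inclusions $i_{1}, i_{2} : \CC \to \CD$ which are trivial cofibrations and hence weak equivalences by the semi-model hypothesis, and lifts the reflexivity equality $\refl : \Id\,(\Pi\,\bm{A}\,\bm{B})\,(\Pi\,\bm{A}\,\bm{B})$ along them. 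Your account of how hypothesis (2) is used -- ``it hands us \dots\ the Quillen equivalence $(F_{s} \dashv U)$'' -- is circular: the Quillen equivalence is (by Isaev's characterization) equivalent to the Morita equivalence you are trying to prove, and the semi-model structure on $\CMod_{w}^{\cxl}$ alone does not supply it.

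A smaller but real issue is your identification $\CC/\!\sim_{\CC}\; = F_{s}\CC$ via ``$\sim_{\CC}$ is the smallest congruence whose quotient validates equality reflection''. That minimality claim is neither obvious nor needed: $L_{s}\CC$ is a quotient of the corresponding cellular strong model by a congruence that may fail to be fibrant, and general quotients of models are not computed pointwise, so the comparison is delicate. The paper sidesteps this entirely: it only needs $\sim \subseteq \ker\eta^{X}$ (true by definition of $\sim$) and that the quotient is a model of $\Th_{s}$; then the universal properties of $\Init_{w}[X]$ and $\Init_{s}[X]$ exhibit $\eta^{X}$ as a retract of the quotient map, and retract-closure of strong contextual equivalences finishes the argument. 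I would advise replacing your minimality step with this retract argument, and supplying the missing compatibility-with-binders argument in the two cases as above.
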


In \cref{sec:conservativity_nonstrict} we introduce our notion of type-theoretic higher congruence, and use it to obtain characterizations of Morita equivalences for equational extensions of type theories.

Given any type theory $\Th_{w}$, we define a type theory $\Th_{w,2}$ extending $\Th_{w}$, such that ordinary models of $\Th_{w,2}$ correspond to models of $\Th_{w}$ valued in $\infty$-groupoids.

Given any equational extension $\Th_{e}$ over $\Th_{w}$, we will define a further extension $\Th_{w,2,e}$ of $\Th_{w,2}$.
The left adjoint $L : \CMod_{w} \to \CMod_{w,2,e}$ of the adjunction between the categories of models of $\Th_{w}$ and $\Th_{w,2,e}$ can be seen as a functor associating to every model $\CC$ of $\Th_{w}$ the higher congruence over $\CC$ freely generated by the equations of $\Th_{e}$.

Using these notions, we prove the following coherence theorem.
\begin{thm*}[Simplified statement of \cref{thm:conservativity_nonstrict}]
  Let $\Th_{w}$ be a type theory with a cumulative hierarchy of universes and weak identity types, and let $\Th_{e}$ be a family of internal equalities of $\Th_{w}$.
 
  Let $\Th_{s}$ be the extension of $\Th_{w}$ obtained by making the internal equalities of $\Th_{e}$ strict.

  If for every cellular (i.e. freely generated) model $\CC$ of $\Th_{w}$, the morphism $\CC \to L_{w,2,e}\ \CC$ is a weak equivalence and $L_{w,2,e}\ \CC$ is acyclic, then $\Th_{w}$ and $\Th_{s}$ are Morita equivalent.
  \defiEnd
\end{thm*}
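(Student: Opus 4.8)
The plan is to follow the strategy of the strict case (\cref{thm:conservativity_strict}), with the higher congruences of \cref{sec:conservativity_nonstrict} taking the place of the fibrant congruences used there. Since $\Th_{s}$ is an equational extension of $\Th_{w}$, the forgetful functor $U \colon \CMod_{s} \to \CMod_{w}$ is fully faithful and exhibits $\CMod_{s}$ as the reflective subcategory of those $\Th_{w}$-models in which the equations of $\Th_{e}$ hold strictly, with reflector $L_{s}$ (so $L_{s} \dashv U$); in particular the ``reverse direction'' of a Morita equivalence is automatic. By the results of \cref{sec:equivalence_models,sec:conservativity_strict}, it then remains to show that for every cellular model $\CC$ of $\Th_{w}$ the unit $\CC \to UL_{s}\CC$ is a weak equivalence, and, exactly as in \cref{sec:conservativity_strict}, this reduces to producing a weak equivalence from $\CC$ into some model of $\Th_{s}$ whose underlying $\Th_{w}$-model is the quotient of a fibrant model by a fibrant congruence. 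Working with all cellular models, rather than with the initial model only, is harmless — every cofibrant model is a retract of a cellular one — and is precisely what makes the construction below go through.

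So fix a cellular model $\CC$ of $\Th_{w}$ and set $\CE \coloneqq L_{w,2,e}\,\CC$, the higher congruence over $\CC$ freely generated by $\Th_{e}$. Unfolding definitions, $\CE$ is a model of $\Th_{w}$ valued in Brunerie $\infty$-groupoids, carrying for each equation of $\Th_{e}$ an internal-equality witness between its two sides; write $\CC'$ for its underlying $\Th_{w}$-model. By hypothesis the comparison $\CC \to \CC'$ is a weak equivalence and $\CE$ is acyclic. The task is now to straighten the acyclic higher congruence $\CE$ into an ordinary fibrant congruence and pass to the quotient: choose a fibrant replacement $\CC' \hookrightarrow R$, and — using fibrancy of $R$ together with acyclicity of $\CE$ — rectify the higher-congruence data along $\CC' \hookrightarrow R$, level by level up the cumulative hierarchy of universes, into a fibrant congruence $\sim$ on $R$ in the sense of \cref{sec:homotopy_theory}. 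Acyclicity is exactly the hypothesis ensuring that the successive choices of coherence cells can be made compatibly, so that no obstruction to this rectification survives; this is the step that mirrors the passage from statement~(\ref{itm:coh_moncat_2}) to statement~(\ref{itm:coh_moncat_1}) in Mac Lane's coherence theorem. The quotient $\CD \coloneqq R/{\sim}$ is then a model of $\Th_{w}$, the quotient map $R \twoheadrightarrow \CD$ is a trivial fibration by the trivial-fibration/fibrant-congruence correspondence of \cref{sec:homotopy_theory}, and, since the $\Th_{e}$-witnesses carried by $\CE$ become reflexivity in the quotient, $\CD$ is in fact a model of $\Th_{s}$. The composite $\CC \xrightarrow{\sim} \CC' \hookrightarrow R \twoheadrightarrow \CD$ therefore exhibits a weak equivalence from $\CC$ into a $\Th_{s}$-model of the required form, which by the first paragraph finishes the proof.

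I expect the rectification step — turning an acyclic higher congruence into a fibrant $1$-congruence whose quotient is a $\Th_{s}$-model — to be the main obstacle. It is here that the precise definition of $\Th_{w,2}$, modelled on Brunerie's type-theoretic weak $\infty$-groupoids, and the precise formulation of acyclicity must be arranged so that the obstruction-theoretic induction closes up, and it is here that the cumulative hierarchy of universes is genuinely used, both to form fibrant replacements and to perform the level-by-level argument. By contrast, the hypotheses that $\CC \to L_{w,2,e}\,\CC$ be a weak equivalence and that $L_{w,2,e}\,\CC$ be acyclic are assumed outright; verifying them for concrete type theories is itself substantial, but it is left to the companion paper and lies outside the scope of this statement.
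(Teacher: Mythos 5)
Your overall architecture (reduce to cellular models, extract an ordinary fibrant congruence from the acyclic higher congruence, quotient to get a $\Th_{s}$-model, conclude) matches the paper's, but two of your steps have genuine gaps. First, the ``rectification'' step is both unnecessary and unjustified as stated. The paper does not pass to a fibrant replacement $R$ of the underlying $\Th_{w}$-model and does not run any obstruction-theoretic induction up the universe hierarchy: it defines the fibrant congruence directly on (the contextual core of) $L_{w,2,e}\,\CC$ itself, by declaring two inner types or terms related exactly when an \emph{outer} equality between them exists (\cref{con:cong_from_higher_cong}); acyclicity is used only to make this relation well-behaved (e.g.\ so that term-level witnesses can be chosen over the given type-level witness, which is what compatibility with the operations needs). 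Your version has an additional problem: the theorem assumes neither $\Pi$-types with strict $\beta$ nor that $\Th_{w}$ is semi-model, so the trivial cofibration $\CC' \hra R$ produced by the small object argument is \emph{not} known to be a weak contextual equivalence (that is precisely the content of being semi-model, \cref{def:wb_tcof}); hence your zig-zag $\CC \to \CC' \to R \to \CD$ is not known to be a weak equivalence at all.

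Second, and more seriously, your endgame ``producing a weak equivalence from $\CC$ into some model of $\Th_{s}$ \dots finishes the proof'' does not follow. If $F : \CC \to \CD$ is a weak equivalence with $\CD$ a $\Th_{s}$-model, it factors as $\CC \to L_{s}\,\CC \to \CD$, but the 2-out-of-3 properties available for weak contextual equivalences (\cref{prop:sweq_props}) do not let you conclude that the unit $\CC \to L_{s}\,\CC$ is one, since nothing is known about $L_{s}\,\CC \to \CD$. The paper closes this gap with a retract argument (\cref{lem:congruence_seq}, \cref{lem:higher_congruence_weq}): it arranges (i) that $\eta^{X} : \Init_{w}[X] \to \Init_{s}[X]$ factors through the higher congruence via a map $G$, supplied by the canonical $\Th_{w,2,e}$-structure $(\Init_{s}[X])^{=}$ of \cref{con:init_s_two} whose outer equality is strict equality, and (ii) that the extracted fibrant congruence is contained in $\ker G$, i.e.\ outer-equal types and terms are already identified in $\Init_{s}[X]$. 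Only then does the quotient map to $\Init_{s}[X]$, and since the quotient is a $\Th_{s}$-model the universal property of $\Init_{s}[X]$ produces a section, exhibiting $\eta^{X}$ as a retract of a (composite of) weak equivalences. Your quotient $\CD = R/{\sim}$ carries no comparison map to $\Init_{s}[X]$ and no kernel control, so the final implication cannot be completed as you state it.
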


In \cref{sec:first_weq}, we show that the first condition of \cref{thm:conservativity_nonstrict} holds as soon as the theory $\Th_{w}$ has $\Pi$-types with a strict $\beta$-rule.
We obtain the following coherence theorem, which is the main theorem of this paper.
\begin{thm*}[Simplified statement of \cref{thm:conservativity_nonstrict_pi}]
  Let $\Th_{w}$ be a type theory with a cumulative hierarchy of universes and weak identity types, and let $\Th_{e}$ be a family of internal equalities of $\Th_{w}$.
  We assume that $\Th_{w}$ also includes $\Pi$-types with a strict $\beta$-rule.
  We also assume that $\Th_{e}$ includes the computation rules of the weak identity types of $\Th_{w}$.
  Let $\Th_{s}$ be the extension of $\Th_{w}$ obtained by making the internal equalities of $\Th_{e}$ strict.

  If for every cellular (i.e. freely generated) model $\CC$ of $\Th_{w}$, $L_{w,2,e}\ \CC$ is acyclic, then $\Th_{w}$ and $\Th_{s}$ are Morita equivalent.
  \defiEnd
\end{thm*}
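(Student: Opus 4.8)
The plan is to deduce the statement from \cref{thm:conservativity_nonstrict}. Its second hypothesis, the acyclicity of $L_{w,2,e}\ \CC$ for every cellular model $\CC$ of $\Th_{w}$, is assumed here, so it remains to verify its first hypothesis: that for every such $\CC$ the unit $\eta_{\CC} : \CC \to U_{w,2,e}(L_{w,2,e}\ \CC)$ of the adjunction $L_{w,2,e} \dashv U_{w,2,e}$ between $\CMod_{w}$ and $\CMod_{w,2,e}$ --- written $\CC \to L_{w,2,e}\ \CC$ in the statement --- is a weak equivalence for the classes of maps of \cref{sec:homotopy_theory}. This is the content of \cref{sec:first_weq}, and it is here that the assumptions that $\Th_{w}$ has $\Pi$-types with a strict $\beta$-rule and that $\Th_{e}$ contains the computation rules of the weak identity types of $\Th_{w}$ are used; the argument applies to every model of $\Th_{w}$, not only the cellular ones.

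The core step is to show that every model $\CC$ of $\Th_{w}$ is a higher congruence over itself: it extends to a model $\widehat{\CC}$ of $\Th_{w,2,e}$ whose underlying model of $\Th_{w}$ is $\CC$ on the nose. The ``$\infty$-groupoid'' structure that $\Th_{w,2}$ adjoins on top of $\Th_{w}$ is interpreted using the weak identity types of $\CC$: as in the constructions of Lumsdaine and of van den Berg and Garner, the $\Id$-types of $\CC$ equip its types and terms with a weak $\infty$-groupoid structure, and since $\Th_{w,2}$ is modelled on Brunerie's type-theoretic definition of weak $\infty$-groupoid this provides exactly the data needed. The coherence operations of this structure are built by iterated path induction together with $\lambda$-abstraction and application, so the \emph{strict} $\beta$-rule for $\Pi$-types is what makes the interpretation a strict morphism of models (the CwF equations involved must hold on the nose). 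The inclusion of the weak identity computation rules in $\Th_{e}$ is what lets the interpretation extend from $\Th_{w,2}$ to $\Th_{w,2,e}$: the adjoined path structure of $\Th_{w,2,e}$ satisfies its computation rules only weakly, matching the weak $\Id$-types of $\CC$, while the remaining equations of $\Th_{e}$ hold in $\widehat{\CC}$ simply because they are internal equalities valid in every model of $\Th_{w}$.

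By the universal property of the free higher congruence, the identity $\CC \to \CC$ transposes across $L_{w,2,e} \dashv U_{w,2,e}$ to a morphism $r : L_{w,2,e}\ \CC \to \widehat{\CC}$ of models of $\Th_{w,2,e}$, and a triangle identity gives $U_{w,2,e}(r) \circ \eta_{\CC} = \id_{\CC}$. Hence $\eta_{\CC}$ is a section, in $\CMod_{w}$, of $r' \triangleq U_{w,2,e}(r)$; it is also a cofibration, since $L_{w,2,e}\ \CC$ is obtained from $\CC$ by freely adjoining generators. To finish it then suffices to show that $\eta_{\CC}$, equivalently $r'$, is a weak equivalence; by the retraction and 2-out-of-3 this amounts to a single relative-contractibility statement, namely that the higher-congruence structure freely adjoined to $\CC$ is contractible over $\CC$, or concretely that there is a homotopy $\eta_{\CC} \circ r' \simeq \id_{L_{w,2,e}\ \CC}$. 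Using the path-object and cylinder structures of \cref{sec:homotopy_theory} together with the universal property of $L_{w,2,e}$, I expect this to reduce to a purely syntactic contractibility statement about $\Th_{w,2,e}$, of the same flavour as ``a contractible type has a contractible identity type''. Once $\eta_{\CC}$ is a weak equivalence for every cellular $\CC$, \cref{thm:conservativity_nonstrict} yields that $\Th_{w}$ and $\Th_{s}$ are Morita equivalent.

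The step I expect to be the main obstacle is the one in the second paragraph: checking that the interpretation of the Brunerie-style $\infty$-groupoid signature of $\Th_{w,2}$ in an arbitrary model $\CC$, via its weak identity types, is \emph{total} --- every operation of the recursively specified signature receives an interpretation --- and is a \emph{strict} morphism of $\Th_{w}$-models. This is a delicate but essentially syntactic induction following the definition of $\Th_{w,2}$, and it is exactly the point where the strict $\beta$-rule for $\Pi$-types cannot be dispensed with. A secondary difficulty is the relative-contractibility argument of the third paragraph, showing that adjoining the free higher-congruence structure does not change the homotopy type of the model.
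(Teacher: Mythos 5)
Your first half matches the paper's route: you reduce to \cref{thm:conservativity_nonstrict}, and the construction you call $\widehat{\CC}$ is the paper's \cref{con:model_simeq}, which extends a model of $\Th_{w}$ to a model of $\Th_{w,2,e}$ and yields a retraction $r$ of $\eta^{X}_{w} : \Init_{w}[X] \to \Init_{w,2,e}[X]$. One correction of emphasis there: this construction is not a delicate Brunerie-style coherence induction and is not where the difficulty lies. Since $\Th_{w,2}$ only adds an outer layer with weak identity types and $\Pi$-types (with strict $\beta$) with arities in the inner family, one simply interprets the outer layer by the inner one (outer $\Id$ by inner $\Id$, outer $\Pi$ by inner $\Pi$ -- this is where strict $\beta$ for the inner $\Pi$-types enters), and the lifted marked equations $\widehat{p},\widetilde{p}$ by $p$ itself and the left unit law for path composition. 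No iterated tower of coherence operations has to be built.

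The genuine gap is in your third paragraph, which is exactly where the paper's real work happens. Having the retraction $r$ only produces a candidate lift $r\,a$ of each term $a$; one must still show $a$ is weakly equal to $\eta^{X}_{w}(r\,a)$, i.e. the weak term lifting property. You reduce this to a homotopy $\eta_{\CC} \circ r' \simeq \id$ to be obtained from ``path-object and cylinder structures'' and a ``purely syntactic contractibility statement'', but no such machinery is available: the paper does not know (and only conjectures, cf. \cref{conj:model_to_weq}) that these theories are semi-model, and weak contextual equivalence cannot be concluded from an abstract homotopy without it. The paper instead proves the lifting property by constructing a displayed model $\Init_{w,2,e}[X]^{\bullet}$ over $\Init_{w,2,e}[X]$ (a logical-relations/parametricity argument, \cref{thm:pi_first_weq}) whose section, given by initiality of the cellular model, witnesses $a \sim \eta^{X}_{w}(r\,a)$. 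This argument uses all three hypotheses again in the second half: acyclicity of $\Init_{w,2,e}[X]$ (to form the fibrant congruence of \cref{con:cong_from_higher_cong} and keep the displayed data propositional), the marked computation rule of identity types (in the displayed $\J^{o}$, $\J^{o}_{\beta}$ and $\widetilde{p}$ cases), and the strict $\beta$-rule for inner $\Pi$ (in the displayed outer $\Pi$, $\lam^{o}$ and $\app^{o}$ cases) -- your plan treats acyclicity as relevant only to the other hypothesis of \cref{thm:conservativity_nonstrict}. Relatedly, your claim that the argument ``applies to every model of $\Th_{w}$, not only the cellular ones'' is unsupported: the freeness of $\Init_{w,2,e}[X]$ is what provides the section of the displayed model, and the paper's statement and proof are confined to cellular models (which, by \cref{prop:weq_theories_char}, is all that is needed).
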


\subsection*{Agda formalization}
Some of our constructions are expressed in the type-theoretic internal language of presheaf categories, and have been formalized in Agda. The Agda development can be found in the files attached to the arXiv version of the paper or at \url{https://rafaelbocquet.gitlab.io/Agda/CoherenceStrict/}.

%%% Local Variables:
%%% mode: latex
%%% TeX-master: "main"
%%% End:
\section{Background}\label{sec:background}

We recall in this section the semantics of type theories in categories with families (CwFs), and introduce the tools and notations that we will use in this paper.
We will make use in particular of the internal type-theoretic language of the presheaf category $\CPsh\ \CC$ as a tool to define and work with type-theoretic structures over a CwF $\CC$.

\subsection{Metatheory and basic notations}
\begin{enumerate}
  \item We assume that a sufficiently large hierarchy of universes is available in the ambient metatheory.
    The sets of dependent function are written $(a : A) \to B(a)$, and dependent functions are introduced by $(a : A) \mapsto b(a)$.
    We often use braces $\{\}$ to indicate implicit arguments.
    For instance, given a function $f : \{a : A\} \to B\ a \to C\ a$, and elements $a : A$ and $b : B\ a$, we will just write $f\ b$ for the application of $f$ to $a$ and $b$.
    We write $f\ \{a\}\ b$ or $f_{a}\ b$ when we want to make the argument $a$ explicit.

    The sets of dependent pairs are written $(a : A) \times B(a)$, and dependent pairs are introduced by $(a,b)$.
  \item We assume the axiom of choice.
    As currently formulated, our results do not hold constructively, as we will note in \cref{rem:split_equiv}.
    We believe that they could be reformulated so as to hold constructively.
    Alternatively, it should be possible to bypass the non-constructive parts in the case of type theories with decidable equality and decidable type checking.
  \item We denote the set of objects of a small category $\CC$ by $\abs{\CC}$, and the set of morphisms between $x,y : \abs{\CC}$ by $\CC(x \to y)$.
    We may also quantify over objects of a category using either $(x : \CC)$ or $(x : \CC^{\op})$ instead of $(x : \abs{\CC})$; in that case, it is often understood that all constructions depending on $x$ are covariantly or contravariantly natural (or functorial) in $x$.
  \item We denote the composition of morphisms $f : \CC(A \to B)$ and $g : \CC(B \to C)$ by either $(f \cdot g) : \CC(A \to C)$ or $(g \circ f) : \CC(A \to C)$.
    The diagrammatic composition order $(f \cdot g)$ is preferred in presence of commutative diagrams, contravariant actions on the left (e.g. $f^{\star}\ (g^{\star}\ X) = (f \cdot g)^{\star}\ X$) and covariant actions on the right.
    The standard composition order $(g \circ f)$ is usually used in presence of contravariant actions on the right (e.g. $X[f][g] = X[f \circ g]$) and covariant actions on the left (e.g. $(f \circ g)(x) = f(g(x))$).
  \item Given a category $\CC$, we denote the slice category over an object $x : \abs{\CC}$ by $(\CC / X)$, and the coslice category under an object $x : \abs{\CC}$ by $(C \backslash x)$.
\end{enumerate}

\subsection{Internal language of presheaf categories}
Fix a base category $\CC$.
We recall how the presheaf category $\CPsh\ \CC$ is given the structure of a model of extensional type theory.
We refer the reader to \cite{SyntaxAndSemantics,LiftingUniverses} for a more detailed presentation of this structure.

\begin{enumerate}
  \item The objects of $\CPsh\ \CC$ are presheaves over $\CC$, i.e. functors from $\CC^{\op}$ to $\CSet$, and the morphisms are natural transformations.
    Given a presheaf $X : \CPsh\ \CC$, we denote by $\abs{X}_{\Gamma} : \SSet$ its component at an object $\Gamma : \abs{\CC}$, and write $x[f] : \abs{X}_{\Gamma}$ for the restriction of an element $x : \abs{X}_{\Delta}$ by a morphism $f : \CC(\Gamma \to \Delta)$.
    We may also occasionally write $\abs{X}_{f} : \abs{X}_{\Delta} \to \abs{X}_{\Gamma}$ for the restriction operation.
  \item A type of $\CPsh\ \CC$ over a presheaf $X : \CC$ is a dependent presheaf over $X$, or equivalently a presheaf over the category of elements $(\CC/X)$.
    Given a dependent presheaf $Y$ over $X$, we may denote its component at an object $\Gamma : \abs{\CC}$ by $\abs{Y}_{\Gamma} : (x : \abs{X}_{\Gamma}) \to \SSet$.
    We may also introduce a dependent presheaf $Y$ over $X$ by writing
    \[ Y : \{\Gamma : \CC^{\op}\} \to \abs{X}_{\Gamma} \to \SSet. \]
    A term of type $Y$ over a presheaf $X : \CC$ is a dependent natural transformation from $X$ to $Y$, or equivalently a global element of $Y$ when seen as a presheaf over $(\CC/X)$.
    We may introduce a dependent natural transformation $y$ from $X$ to $Y$ by writing
    \[ y : \{\Gamma : \CC^{\op}\} \to (x : \abs{X}_{\Gamma}) \to \abs{Y}_{\Gamma}\ x. \]
  \item The Yoneda embedding is written $\yo : \CC \to \CPsh\ \CC$.
    The presheaf represented by an object $\Gamma : \abs{\CC}$ is written $\yo_{\Gamma} : \CPsh\ \CC$.
  \item The presheaf universe $\SPsh_{\CC,i}$ is the classifier of $i$-small dependent presheaves.
    Given any presheaf $X$, a global element of $\SPsh_{\CC,i}$ over $X$ is a dependent presheaf over $X$.
    Its definition can be computed using the Yoneda lemma: for every object $\Gamma : \CC$, $\abs{\SPsh_{\CC,i}}_{\Gamma}$ has to be (isomorphic to) the set of $i$-small dependent presheaves over the representable presheaf $\yo_{\Gamma}$.

    We will ignore most size issues and omit universe levels in this paper, and write just $\SPsh_{\CC}$ for the presheaf universe at any universe level.
  \item Most type structures available externally, such as $\Pi$-types, $\Sigma$-types, quotients and indexed inductive types are also available in the presheaf model $\CPsh\ \CC$.
    The presheaf universes $\SPsh_{\CC}$ are closed under those.
    We use the same notations internally to $\CPsh\ \CC$ as in the external metatheory, e.g. $(a : A) \to B\ a$ for $\Pi$-types, etc.

    Even though we assume the axiom of choice in our external metatheory, it may not hold internally to $\CPsh\ \CC$.
  \item The presheaf model $\CPsh\ \CC$ supports extensional equality types; we can reason about equality internally in the same way as we do externally.
\end{enumerate}
We will also need the notion of locally representable dependent presheaf, which is the semantic counterpart of the syntactic notion of context extension.
\begin{defi}
  A dependent presheaf $Y : \CPsh\ (\CC/X)$ is said to be \defemph{locally representable} if for every $\Gamma : \abs{\CC}$ and $x : \abs{X}_{\Gamma}$, the presheaf $Y_{x} : \CPsh\ (\CC/\Gamma)$ defined by \[ (\CC/\Gamma)^{\op} \ni (\rho : \CC(\Delta \to \Gamma)) \mapsto \abs{Y}_{\Delta}\ x[\rho] \in \CSet \] is representable.

  This condition can be unfolded into one of the following equivalent definitions.
  \begin{enumerate}
    \item For every $\Gamma : \abs{\CC}$ and $x : \abs{X}_{\Gamma}$, we have an extended object $\Gamma \rhd x$, a projection map $\mathbf{p}_{x} : \CC(\Gamma \rhd x \to \Gamma)$ and a generic element $\mathbf{q}_{x} : \abs{Y}_{\Gamma \rhd x}\ x[\mathbf{p}_{x}]$, satisfying a universal property: for every $\Delta : \abs{\CC}$, $\rho : \CC(\Delta \to \Gamma)$ and $y : \abs{Y}_{\Delta}\ x[\rho]$, there is a unique map $\langle \rho, y \rangle : \CC(\Delta \to (\Gamma \rhd x))$ such that $\langle \rho, y \rangle \cdot \mathbf{p}_{x} = \rho$ and $\mathbf{q}_{x}[\langle \rho,y \rangle] = y$.

      The reader familiar with the notion of category with families will have recognized the combinators used in Dybjer's original definition of CwF~\cite{InternalTT}.
    \item For every $\Gamma : \abs{\CC}$ and $x : \abs{X}_{\Gamma}$, the category of elements $(\CC/\Gamma/Y_{x})$ has a terminal object $(\Gamma \rhd x, \mathbf{p}_{x}, \mathbf{q}_{x})$.
    \item The natural transformation $\pi_{1} : \Sigma\ X\ Y \to X$ is a representable natural transformation in $\CPsh\ \CC$, i.e. for every representable presheaf $\yo\ \Gamma$ of $\CC$ and natural transformation $x : \yo\ \Gamma \to X$, there is a pullback square
      \[ \begin{tikzcd}
          \yo\ (\Gamma \rhd x) \ar[r] \ar[d] \ar[rd, phantom, very near start, "\lrcorner"] & \Sigma\ X\ Y \ar[d, "\pi_{1}"] \\
          \yo\ \Gamma \ar[r, "x"] & X
        \end{tikzcd} \]
      where the pullback $\yo\ (\Gamma \rhd x)$ is representable.
      See also Awodey's definition of natural model of type theory~\cite{NaturalModels}.
    \item We work in the internal language of $\CPsh\ \CC$ and write $\Ob_{\CC}$ for the constant presheaf of objects of $\Ob_{\CC}$ and $\yo : \Ob_{\CC} \to \SPsh_{\CC}$ for the internalization of the Yoneda embedding. We have global elements $X : \SPsh_{\CC}$ and $Y : X \to \SPsh_{\CC}$.

      For every global object $\Gamma : \Ob_{\CC}$ and global element $x : \yo\ \Gamma \to X$, the following type is inhabited
      \[ ((\Gamma \rhd x) : \Ob_{\CC}) \times (\yo\ (\Gamma \rhd x) \simeq ((\gamma : \yo\ \Gamma) \times Y\ (x\ \gamma))). \]
      The notion of \emph{global} element could be internalized using the flat modality of crisp type theory, as done in \cite{LicataOPS18}.
  \end{enumerate}
  Local representability is a structure on dependent presheaves, although it is categorically irrelevant.
  There are universes $\SRepPsh_{\CC,i}$ of $i$-small locally representable presheaf families, defined analogously to the presheaf universes.
  The universes $(\SRepPsh_{\CC,i})$ of representable presheaf families are closed under dependent pairs.
  \defiEnd
\end{defi}
From now on, we will say that a dependent presheaf is representable to mean that it is locally representable.
Because we usually consider dependent presheaves, for which there is no non-local notion of representability, this should be unambiguous.
Also note that in presence of finite products, local representability is equivalent to representability for non-dependent presheaves.

\subsection{Categories with (representable) families}

We use \defemph{categories with families} (CwFs) as our models of type theory.

\begin{defi}
  Internally to a presheaf model $\CPsh\ \CC$, an \defemph{internal representable family} is a pair $(\Ty,\Tm)$ with $\Ty : \SPsh_{\CC}$ and $\Tm : \Ty \to \SRepPsh_{\CC}$.

  A \defemph{category with families} (CwF) is a category $\CC$ equipped with a distinguished terminal object (written $\diamond$) and a global representable family $(\Ty_{\CC},\Tm_{\CC})$, consisting of a presheaf $\Ty_{\CC}$ and a locally representable dependent presheaf $\Tm_{\CC}$ over $\Ty_{\CC}$.
  \defiEnd
\end{defi}

One can check that unfolding this definition gives a notion of CwF that is isomorphic (or at least equivalent, depending on the precise definitions of the universes $\SPsh_{\CC}$ and $\SRepPsh_{\CC}$) to the standard definition.

A presentation of a type theory consists of algebraic operations and equations over CwFs.
This definition can be made formal using a notion of type theory signature extending the notion of QIIT-signature from \cite{QIITs}.
\begin{defi}\label{def:th_sig}
  The type theory of type theory signatures is defined to include the following structures.
  \begin{enumerate}
    \item A universe $\UU$ of sorts.
      \begin{mathpar}
        \inferrule{ }{\UU\ \type}

        \inferrule{A : \UU}{\El\ A\ \type}
      \end{mathpar}
    \item A subuniverse $\overline{\UU}$ of representable sorts (sorts for which context extension is permitted).
      \begin{mathpar}
        \inferrule{ }{\overline{\UU}\ \type}

        \inferrule{A : \overline{\UU}}{A : \UU}
      \end{mathpar}
    \item Dependent function types with arities in $\UU$.
      \begin{mathpar}
        \inferrule{A : \UU \\ [a : \El\ A]\ B(a)\ \type}{\Pi\ A\ B\ \type}

        \inferrule{[a : \El\ A]\ b(a) : B(a)}{\lam\ b : \Pi\ A\ B}

        \inferrule{f : \Pi\ A\ B \\ a : \El\ A}{\app\ f\ a : B(a)}
      \end{mathpar}
      They are used to specify the arguments of the operations and equations in a signature.
      The facts that these $\Pi$-types take arities in $\UU$ is similar to the strict positivity restriction of inductive types.
    \item Dependent function types in $\UU$ with arities in $\overline{\UU}$.
      \begin{mathpar}
        \inferrule{A : \overline{\UU} \\ [a : \El\ A]\ B(a) : \UU}{\overline{\Pi}\ A\ B : \UU}

        \inferrule{[a : \El\ A]\ b(a) : \El\ B(a)}{\overline{\lam}\ b : \El\ (\overline{\Pi}\ A\ B)}

        \inferrule{f : \El\ (\overline{\Pi}\ A\ B) \\ a : \El\ A}{\overline{\app}\ f\ a : \El\ B(a)}
      \end{mathpar}
      They are used to encode the fact that the arguments of the operations and equations in a signature can live in extended contexts.
    \item Extensional equality types.
      \begin{mathpar}
        \inferrule{A\ \type \\ x,y : A}{\Eq_{A}\ x\ y\ \type}

        \inferrule{x : A}{\refl\ x : \Eq_{A}\ x\ x}

        \inferrule{p : \Eq_{A}\ x\ y}{x = y}

        \inferrule{p : \Eq_{A}\ x\ x}{p = \refl\ x}
      \end{mathpar}
    \item A unit type and dependent pair types.
      \begin{mathpar}
        \inferrule{ }{\Unit\ \type}

        \inferrule{A\ \type \\ [a : A]\ B(a)\ \type}{\Sigma\ A\ B\ \type}
      \end{mathpar}
  \end{enumerate}
  A \defemph{type theory signature} is a closed type in the syntax of the theory of type theory signatures.
  An \defemph{extension} of a type theory signature $\Th$ is a dependent type $\Th'$ over $\Th$.
  It gives rise to an extended signature $(x : \Th) \times \Th'\ x$.
  \defiEnd
\end{defi}

Any type theory signature can be interpreted in any presheaf model $\CPsh_{\CC}$; the universe $\UU$ of sorts is interpreted as the presheaf universe $\SPsh_{\CC}$, the universe $\overline{\UU}$ of representable sorts is interpreted as the universe $\SRepPsh_{\CC}$ of locally representable presheaf families, and the other components are interpreted by the standard $\Pi$-types, $\Sigma$-types, and equality types of the presheaf model.
For example, the signature of CwFs is \[ \Th_{\mathsf{CwF}} \triangleq (\Ty : \UU) \times (\Tm : \Ty \to \overline{\UU}). \]
The interpretation of $\Th_{\mathsf{CwF}}$ in a presheaf model gives exactly the presheaf of internal representable families.
The sort $\Tm$ of terms is representable, while the sort $\Ty$ of types is not.
Thus type-theoretic structures extending the signature of CwFs are allowed to contain operations with higher-order arguments (i.e. binders), but the higher-order arguments can themselves only depend on $\Tm$.

Other type-theoretic structures can be described by extensions of the signature of CwFs.
For example, the structure of $\Pi$-types with strict $\beta$ and $\eta$ rules consists of operations $\Pi$, $\lam$ and $\app$ and equations $\app_{\beta}$ and $\Pi_{\eta}$, specified by the following signature extending $\Th_{\mathsf{CwF}}$.
\begin{alignat*}{2}
  & \Pi && : (A : \Ty) (B : \overline{\Tm\ A} \to \Ty) \to \Ty \\
  & \lam && : \{A,B\} (b : \overline{(a : \Tm\ A)} \to \Tm\ (B\ a)) \to \Tm\ (\Pi\ A\ B) \\
  & \app && : \{A,B\} (f : \Tm\ (\Pi\ A\ B)) (a : \Tm\ A) \to \Tm\ (B\ a) \\
  & \app_{\beta} && : \{A,B,b,a\} \to \app\ (\lam\ b)\ a = (b\ a) \\
  & \Pi_{\eta} && : \{A,B,f\} \to \lam\ (\app\ f) = f
\end{alignat*}

Again, one can check that unfolding the interpretation of this signature in presheaf categories gives a definition that is equivalent to the usual external definition of $\Pi$-type structures over CwFs.

Note that this definition of type-theoretic structure allows us to work with the syntax of type theories in the internal language of presheaf categories using higher-order abstract syntax (HOAS).
Indeed, presheaf models have been used to justify HOAS~\cite{SyntaxAndSemantics,HofmannHOAS}.

We included the definition of type theory signature for completeness, and to justify the quantifications on all type-theories appearing in this paper.
However, being fully formal with it would require us to develop its theory further, which we believe to be outside of the scope of this paper.
Thus we will only use this notion informally.
It may seem to invalidate our claim that our theorems are valid for arbitrary type theories.
However, it will be quite clear that all of the constructions that we perform in this paper and that depend on the actual signature are uniform in the type-theoretic operations of the signature.
For instance, when defining the interpretation of the $\Pi$ type former in a model, the interpretation will only depend on the shape of the type former $\Pi : (A : \Ty) (B : \overline{\Tm\ A} \to \Ty) \to \Ty$, but never on the presence of any other operation in the signature, and would work just as well for any other type former $X : (A : \Ty) (B : \overline{\Tm\ A} \to \Ty) \to \Ty$.

All of the arguments presented in this paper can alternatively be checked independently for any concrete type theory.

We however refer the reader to two similar general definitions of type theories.
Capriotti's rule framework~\cite{CappriottiRF,CapriottiThesis} is similar to our definition, without the strict positivity restriction on $\Pi$-types, and without the $\Pi$-types with arities in $\overline{\UU}$.
This implies that the type formers can include unrestricted higher-order arguments, which may fail to have well-defined categories of models or initial models.
Uemura's representable map categories~\cite{GeneralFrameworkTT} can encode almost the same type theories as our definition.
The main difference is that our definition generalizes QIIT-signatures and generalized algebraic theories (i.e. algebraic theories with dependent sorts), whereas Uemura's definition generalizes essentially algebraic theories (i.e. algebraic theories with partial operations).
This does not change the class of presentable type theories, but the additional structure of generalized algebraic theories is crucial for the present paper.
Another minor difference is that we only consider finite signatures.
The semantics of representable map categories are given by functorial semantics, whereas the semantics for our notion of signature is more directly defined by induction on the signatures, as in~\cite{QIITs}.

We will only consider type theories that extend the signature of CwFs (and cumulative CwFs, which will be introduced later) by new operations and equations only, that is type theories whose only sorts are the non-representable sort of types and the representable sort of terms.

From the generalized algebraic presentations of CwFs and type-theoretic structures, we obtain a $1$-category $\CMod_{\Th}$ of models from any signature $\Th$.
The objects of $\CMod_{\Th}$ are CwFs equipped with the additional type-theoretic structures of $\Th$.
The morphisms are functors, with additional actions on types and terms, strictly preserving the chosen terminal object and the representing objects for the context extensions and the type-theoretic operations of $\Th$.
Given a morphism $F : \CMod_{\Th}(\CC \to \CD)$, we will denote its actions on types and terms by $F : \{\Gamma : \CC^{\op}\} \to \abs{\Ty_{\CC}}_{\Gamma} \to \abs{\Ty_{\CD}}_{F\ \Gamma}$ and $F : \{\Gamma : \CC^{\op}\}\{A : \abs{\Ty_{\CC}}_{\Gamma}\} \to \abs{\Tm_{\CC}}_{\Gamma}\ A \to \abs{\Tm_{\CD}}_{F\ \Gamma}\ (F\ A)$.

We also automatically obtain the existence of an initial object $\Init_{\Th}$ of $\CMod_{\Th}$.
We adopt the algebraic point of view on the syntax of type theory: we only work with the abstract characterization of the syntax as the components of an initial object, and don't try to give any more explicit construction of this initial object.

More generally, $\CMod_{\Th}$ is a finitely locally presentable category, and is in particular complete and cocomplete.
We also automatically obtain that freely generated models exist.
We write $\mathsf{Free}(\dots)$ for freely generated models.
For example, $\mathsf{Free}(\bm{\Gamma} \vdash )$ is the model freely generated by a single object $\bm{\Gamma}$, $\mathsf{Free}(\bm{\Gamma} \vdash \bm{A} : \Ty)$ is the model freely generated by an object $\bm{\Gamma}$ and a type $\bm{A}$ over $\bm{\Gamma}$, and $\mathsf{Free}(\bm{\Gamma} \vdash \bm{a} : \Tm\ \bm{A})$ is the model freely generated by an object $\bm{A}$, a type $\bm{A}$ over $\bm{\Gamma}$ and a term $\bm{a}$ of type $\bm{A}$.
We use bold symbols ($\bm{\Gamma}$, $\bm{A}$, $\bm{a}$, etc) to distinguish the generators of a freely generated model.
These models satisfy some universal properties.
For instance, the morphisms $\mathsf{Free}(\bm{\Gamma} \vdash ) \to \CC$ are in natural bijection with the objects of $\CC$.

We denote the $1$-category of CwFs without any additional structure by $\CCwf$.

For some purposes, it may have been preferable or more elegant to work with $2$-categories of models, weak morphisms (i.e. morphisms that only preserve the terminal object and the context extensions up to isomorphism) and natural transformations.
We will however need to consider additional structures on the categories of models that are better developed in the $1$-categorical setting, such as (both orthogonal and weak) factorization systems and semi-model structures.

\subsection{Contextual models}\label{ssec:contextual_models}

An important class of CwFs are the \emph{contextual} CwFs, whose objects and morphisms are really given by lists of types and terms.
Indeed, from some point of view, in the language of type theory, we never explicitly talk about the objects and morphisms of a model, but only about types and terms that live in the same contextual slice of a given model.
Thus only the contextual models matter.
However, a direct definition of contextual models is complicated (their generalized algebraic presentation is infinite), and many intermediate constructions go through non-contextual models.
It is thus more convenient to define contextuality as a property of general models.
Fortunately, they can nicely be described by the means of an orthogonal factorization system on $\CCwf$\footnote{The author learnt of this definition of contextuality from Christian Sattler.}.

We first recall the definition of orthogonal factorization systems, originally introduced in \cite{FreydKellyOFS}.
\begin{defi}
  An \defemph{orthogonal factorization system} on a category $\CC$ consists of two classes of maps $\CL$ and $\CR$ satisfying the following two properties:
  \begin{itemize}
    \item Every map $f : \CC(X \to Y)$ can be factored as $f = l \cdot r$, where $l \in \CL$ and $r \in \CR$.
    \item Every map in $\CL$ is left orthgonal to every map in $\CR$; this means that for every $l \in \CL$ and $r \in \CR$ and commutative square
      \[ \begin{tikzcd}
          A \ar[d, "l"] \ar[r, "f"] & X \ar[d, "r"] \\
          B \ar[r, "g"] & Y \rlap{\, }
        \end{tikzcd} \]
      there exist an unique map $j : \CC(B \to X)$ such that $j \cdot r = g$ and $l \cdot j = f$.
      \defiEnd
  \end{itemize}
\end{defi}

\begin{defi}
  A morphism $F : \CCwf(\CC \to \CD)$ is said to be a \defemph{contextual isomorphism} if its actions on types and terms are bijective.
  \defiEnd
\end{defi}

Let $I$ be the set of maps of $\CCwf$ consisting of $I^{\Ty} : \mathsf{Free}(\bm{\Gamma} \vdash) \to \mathsf{Free}(\bm{\Gamma} \vdash \bm{A} : \Ty)$ and $I^{\Tm} : \mathsf{Free}(\bm{\Gamma} \vdash \bm{A} : \Ty) \to \mathsf{Free}(\bm{\Gamma} \vdash \bm{a} : \Tm\ \bm{A})$.
Contextual isomorphisms are exactly the maps that are right orthogonal to $I$.
The maps that are left orthogonal to the contextual isomorphisms are called \defemph{contextual extensions}.

By the small object argument for orthogonal factorization systems~\cite{Kelly1980}, contextual extensions and contextual isomorphisms form an orthogonal factorization system.
Any morphism $F : \CCwf(\CC \to \CD)$ admits an unique (up to isomorphism) factorization $\CC \to \cxlim F \to \CD$ where $\CC \to \cxlim F$ is a contextual extension and $\cxlim F \to \CD$ is a contextual isomorphism.
The CwF $\cxlim F$ is called the \defemph{contextual image} of $F$.

In particular, given any $\CC : \CCwf$, the unique morphism $\Init \to \CC$ admits such a factorization.
Its contextual image is called the \defemph{contextual core} of $\CC$, and is denoted by $\cxl \CC$.
The map $\cxl \CC \to \CC$ is a contextual isomorphism by definition.
When the map $\cxl \CC \to \CC$ is also an isomorphism of CwFs, we say that $\CC$ is \defemph{contextual}.

This definition of contextuality is equivalent to the usual definition, as found for instance in \cite{CwFUSD}.
\begin{prop}
  A CwF $\CC$ is contextual if and only there exists a length function $l : \abs{\CC} \to \Nat$ such than for any $\Gamma : \abs{\CC}$, if $(l\ \Gamma) = 0$ then $\Gamma = \diamond$ and if $(l\ \Gamma) = n+1$, then there are unique $\Gamma' : \abs{\CC}$ and $A : \abs{\Ty}_{\Gamma'}$ such that $\Gamma = \Gamma' \rhd A$.
  \qed
\end{prop}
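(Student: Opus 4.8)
The plan is to reformulate the defining property of contextuality and then prove the two implications separately. By uniqueness (up to unique isomorphism) of factorizations in the orthogonal factorization system $(\text{contextual extensions},\text{contextual isomorphisms})$ on $\CCwf$, the morphism $\cxl\CC\to\CC$ is an isomorphism if and only if $\Init\to\CC$ is \emph{itself} a contextual extension: if it is, then $\Init\to\CC\xrightarrow{\id}\CC$ is a valid factorization, so it agrees with $\Init\to\cxl\CC\to\CC$ and $\cxl\CC\cong\CC$; conversely, if $\CC$ is contextual then $\Init\to\CC$ is the composite of the contextual extension $\Init\to\cxl\CC$ with an isomorphism, hence a contextual extension. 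So it suffices to show that $\Init\to\CC$ is a contextual extension exactly when $\CC$ admits a length function $l$ with the stated properties.

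For the ``if'' direction I would assume such an $l$, taken in the standard form so that the canonical decomposition $\Gamma=\Gamma'\rhd A$ of a length-$(n+1)$ object has $l(\Gamma')=n$, and check that $\Init\to\CC$ is left orthogonal to every contextual isomorphism $F:\MA\to\MB$. Given a commuting square with left edge $\Init\to\CC$, top edge $\Init\to\MA$ and bottom edge $G:\CC\to\MB$, I would build the diagonal filler $L:\CC\to\MA$ by induction on $l$: the object $\diamond$ must be sent to the terminal object of $\MA$, since CwF morphisms preserve it strictly; and for $l(\Gamma)=n+1$, writing $\Gamma=\Gamma'\rhd A$ uniquely, the type $G(A)$ lies over $G(\Gamma')=F(L(\Gamma'))$, so by bijectivity of $F$ on types there is a unique type $\tilde A$ over $L(\Gamma')$ with $F(\tilde A)=G(A)$, and I set $L(\Gamma)\triangleq L(\Gamma')\rhd\tilde A$. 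The action of $L$ on a morphism $\sigma:\Delta\to\Gamma$ is then forced by writing $\sigma=\langle\sigma\cdot\mathbf{p}_{A},\mathbf{q}_{A}[\sigma]\rangle$ and applying the universal property of context extension in $\MA$ (using that $F$ is also bijective on terms). Routine checks show that $L$ is a CwF morphism with $L\cdot F=G$, and it is the unique such map because uniqueness of the decomposition made each step above forced.

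For the ``only if'' direction I would extract $l$ from the structure of $\cxl\CC\cong\CC$. A useful preliminary is that the sub-CwF $\CC_0\hookrightarrow\CC$ spanned by the objects reachable from $\diamond$ by iterated context extension (together with all their types, terms and ambient morphisms) is a contextual isomorphism over $\CC$, and its objects are filtered by $C_0\subseteq C_1\subseteq\cdots$ with $C_0=\{\diamond\}$ and $C_{n+1}=\{\Gamma\rhd A : \Gamma\in C_n\}$; one then argues $\CC_0\cong\cxl\CC\cong\CC$, so every object of $\CC$ sits at some finite stage and the minimal stage is a candidate for $l$. The substantive remaining point is the \emph{uniqueness} of the decomposition $\Gamma=\Gamma'\rhd A$, which does not hold for a general CwF and is really the content of contextuality. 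I would obtain it by running Kelly's small object argument for the orthogonal factorization system explicitly: $\cxl\CC$ is a stabilizing transfinite composite of pushouts of the generating maps $I^{\Ty}$ and $I^{\Tm}$, followed by the quotients that make the available lifts unique; a pushout of $I^{\Ty}$ adjoins a \emph{fresh} context extension above an existing object, and the quotient steps only impose equations dictated by the maps into $\CC$ and never identify two distinct generators. Hence in $\cxl\CC$ the stage-$(n+1)$ objects are in bijection, via $\rhd$, with pairs of a stage-$n$ object and a type over it, which is exactly uniqueness of decomposition; transporting the grading along $\cxl\CC\cong\CC$ yields the required $l$.

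The hard part is this last step --- making precise that the small object argument never collapses distinct free context-extension generators, equivalently that $\cxl\CC$ has unique decompositions; everything else is bookkeeping with the universal property of context extension. As an alternative, one can invoke the established comparison between this formulation and Cartmell-style contextual categories, as in \cite{CwFUSD}, where unique decomposition is part of the definition, and transport the equivalence through that comparison.
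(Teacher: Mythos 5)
The paper records this proposition without proof (it is asserted as standard), so your argument has to stand on its own, and as written it has two genuine gaps. In the ``if'' direction, your normalization ``the canonical decomposition of a length-$(n+1)$ object has $l(\Gamma')=n$'' is not enough. The hypothesis, even so strengthened, does not prevent a \emph{chosen} context extension from being the distinguished terminal object: take $\CC$ with the single object $\diamond$, one type $B$ with $\abs{\Tm_{\CC}}_{\diamond}\,B$ a singleton, and choose the representing object so that $\diamond\rhd B=\diamond$ with $\mathbf{p}=\id$ (a legitimate choice, since the presheaf of terms of $B$ is terminal). With $l(\diamond)=0$ all your hypotheses hold vacuously, yet $\cxl\CC$ is the tower of closed telescopes $(),(B),(B,B),\dots$, so $\CC$ is not contextual. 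The point of failure in your argument is precisely the ``routine check'': your inductively defined $L$ sends $\diamond$ to $\diamond_{\MA}$, but a CwF morphism must satisfy $L(\diamond\rhd B)=L(\diamond)\rhd L(B)$, which fails unless $\diamond_{\MA}\rhd L(B)=\diamond_{\MA}$. The statement has to be read in graded form, i.e. with $l(\Gamma\rhd A)=l(\Gamma)+1$ for all $\Gamma,A$ (equivalently, your father condition together with ``no chosen extension equals $\diamond$''); with that hypothesis, uniqueness of decompositions makes $L(\Gamma\rhd A)=L(\Gamma)\rhd L(A)$ hold by construction, and the rest of your induction, including uniqueness of the lift, does go through.

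In the ``only if'' direction, the substantive claim --- that $\cxl\CC$ has unique decompositions, i.e. that the transfinite construction ``never collapses distinct context-extension generators'' --- is exactly what you do not prove, and it is not evident from the small object argument: for an \emph{orthogonal} factorization system the construction includes quotient steps forcing uniqueness of lifts, and excluding that these identify $\Gamma'\rhd A$ with $\Delta'\rhd B$ for distinct data is the whole content. (Your preliminary claim $\CC_{0}\cong\cxl\CC$ is also unjustified for general $\CC$: in the example above $\CC_{0}=\CC\neq\cxl\CC$; under the standing assumption that $\CC$ is contextual you can bypass it, but then it does no work.) A direct route, which the paper itself points to, is to build the contextual core concretely: let $\CT$ be the CwF whose objects are the closed telescopes of $\CC$ (global elements of $\Ty^{\star}_{\CC}$), whose morphisms are natural transformations $\Tm^{\star}_{\CC}\ \Gamma\to\Tm^{\star}_{\CC}\ \Delta$, and whose chosen extension appends a type to a telescope. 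The realization map $\CT\to\CC$ is a contextual isomorphism, and $\CT$ carries the graded length function by construction, so your own ``if''-direction induction shows $\Init\to\CT$ is a contextual extension, whence $\cxl\CC\cong\CT$. If $\CC$ is contextual, transporting the telescope length and its evidently unique decompositions along $\CC\cong\cxl\CC\cong\CT$ yields the required $l$, with no appeal to the internals of the small object argument and no need for the citation fallback.
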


All type-theoretic structures can be transported along contextual isomorphisms.
Thus, given a morphism $F : \CMod_{\Th}(\CC \to \CD)$ of models of some theory $\Th$, the contextual image $\cxlim F$ carries a canonical structure of model of $\Th$, and the factors $\CC \to \cxlim F$ and $\cxlim F \to \CD$ are both morphisms of models of $\Th$.

If $\Th$ is a type theory signature, the category of contextual models of $\Th$ is written $\CMod_{\Th}^{\cxl}$.
We have an adjunction
\[ \begin{tikzcd}
    \CMod_{\Th}^{\cxl} \ar[r, bend right] \ar[r, phantom, "\bot"] & \CMod_{\Th} \rlap{\ .} \ar[l, bend right]
  \end{tikzcd} \]
The right adjoint $\CMod_{\Th}^{\cxl} \to \CMod_{\Th}$ is just the functor forgetting that a model is contextual.
The left adjoint $\CMod_{\Th} \to \CMod_{\Th}^{\cxl}$ takes the contextual core of a model.

\begin{prop}
  The contextual core of a model $\CC : \CMod_{\Th}$ is the initial model of $\Th$ equipped with a contextual isomorphism into $\CC$.
\end{prop}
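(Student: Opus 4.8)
The plan is to derive the claimed universal property from the orthogonal factorization system $(\text{contextual extensions}, \text{contextual isomorphisms})$ on $\CCwf$. First I would make the statement precise: writing $\eta : \cxl\CC \to \CC$ for the canonical contextual isomorphism of models (which exists because the $\Th$-structure transports along contextual isomorphisms), I must show that the pair $(\cxl\CC,\eta)$ is initial in the category whose objects are pairs $(\CD,f)$ with $\CD : \CMod_\Th$ and $f : \CD\to\CC$ a contextual isomorphism of models, and whose morphisms $(\CD,f)\to(\CD',f')$ are morphisms of models $g$ with $g\cdot f'=f$. I would also recall that, as a CwF, $\cxl\CC$ is the contextual image of the unique map $\Init\to\CC$ out of the initial CwF, so that there is a factorization $\Init \xrightarrow{\iota} \cxl\CC \xrightarrow{\eta} \CC$ in $\CCwf$ with $\iota$ a contextual extension; in particular $\cxl\CC$ is itself contextual.

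For existence, I would fix $(\CD,f)$ and let $u : \Init\to\CD$ be the unique CwF morphism out of the initial CwF. The square with left side $\iota$, top side $u$, right side $f$ and bottom side $\eta$ commutes in $\CCwf$, since $\iota\cdot\eta$ and $u\cdot f$ are both maps out of the initial CwF into $\CC$. Since $\iota$ is a contextual extension and $f$ is a contextual isomorphism (hence bijective on types and terms), orthogonality yields a unique CwF morphism $g : \cxl\CC\to\CD$ with $\iota\cdot g=u$ and $g\cdot f=\eta$. It then remains to check that $g$ is a morphism of $\Th$-models, i.e.\ that it commutes with the operations of $\Th$ (being a CwF morphism, $g$ already preserves the chosen terminal object and the context extensions, and the equational axioms of $\Th$ impose no further condition). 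This is where I would use bijectivity of $f$ on types and terms: for a type former such as $\Pi$, both $g(\Pi_{\cxl\CC}(A,B))$ and $\Pi_\CD(gA,gB)$ have the same image under $f$, namely $\Pi_\CC(\eta A,\eta B)$ — using that $\eta$ and $f$ are morphisms of models and $g\cdot f=\eta$ — so by injectivity of $f$ on types they are equal, and similarly for every operation of $\Th$. Hence $g : \cxl\CC\to\CD$ is a morphism of models with $g\cdot f=\eta$.

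For uniqueness, I would take two morphisms of models $g,g' : \cxl\CC\to\CD$ with $g\cdot f=\eta=g'\cdot f$. Since $f$ is bijective on types and terms, $g$ and $g'$ agree on types and on terms. But $\cxl\CC$ is contextual, so every object of $\cxl\CC$ is an iterated context extension $\diamond \rhd A_1 \rhd \cdots \rhd A_n$ and every morphism of $\cxl\CC$ is built from the projections $\mathbf{p}$, the generic elements $\mathbf{q}$ and the pairing operation $\langle-,-\rangle$; hence a CwF morphism out of $\cxl\CC$ is completely determined by its action on types and terms, and therefore $g=g'$.

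I expect the only step requiring real care to be the verification that the CwF-level lift $g$ respects the $\Th$-structure; the crux there is bijectivity of the contextual isomorphism $f$ on types and terms, which is exactly the phenomenon that makes transport of type-theoretic structure along contextual isomorphisms possible in the first place. Everything else is a routine application of the orthogonal factorization system on $\CCwf$ together with the characterization of contextual CwFs by length functions.
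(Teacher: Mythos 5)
Your proof is correct, and at bottom it is the same argument as the paper's: initiality of $\cxl \CC$ among models with a contextual isomorphism into $\CC$ is extracted from orthogonality of the (contextual extension, contextual isomorphism) factorization system, by lifting the contextual extension defining $\cxl \CC$ against the given contextual isomorphism. The difference is where the lifting happens. The paper lifts $\Init_{\Th} \to \cxl \CC$ against $F$ inside $\CMod_{\Th}$: initiality of $\Init_{\Th}$ makes the square commute and the unique lift is already a morphism of models, so the proof is a single diagram --- at the price of invoking that $\Init_{\Th} \to \cxl \CC$ is a contextual extension and that lifts respect the $\Th$-structure (the ``transport along contextual isomorphisms'' remark). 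You instead lift the defining extension $\Init \to \cxl \CC$ out of the initial CwF in $\CCwf$, and then verify by hand that the CwF-level lift $g$ preserves the operations of $\Th$; your injectivity-of-$f$ argument is exactly the content the paper leaves to its transport-of-structure remark, so your version is somewhat more self-contained. Two small refinements: in your uniqueness step, fibrewise injectivity of $f$ on types and terms only lets you compare $g$ and $g'$ once you know they agree on the underlying object, so the argument should be run as an induction on the length of contexts of $\cxl \CC$ (which is what your appeal to contextuality amounts to, since morphisms strictly preserve $\diamond$ and context extension); alternatively, uniqueness is immediate from the uniqueness clause of orthogonality, because for any $g$ with $g \cdot f = \eta$ the upper triangle commutes automatically, $\iota \cdot g$ and $u$ both being maps out of the initial CwF.
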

\begin{proof}
  Given any other model $\CD$ equipped with a contextual isomorphism $F : \CD \to \CC$, we have, since $\Init_{\Th} \to \cxl \CC$ is a contextual extension, a unique lift in the following diagram.
  \[ \begin{tikzcd}
      \Init_{\Th} \ar[d] \ar[r] & \CD \ar[d, "F"] \\
      \cxl \CC \ar[ru, dashed] \ar[r] & \CC
    \end{tikzcd} \]
\end{proof}

\begin{prop}\label{prop:contextual_section}
  To check that a CwF $\CC$ is contextual, it suffices to check that the morphism $\cxl \CC \to \CC$ admits a section.
\end{prop}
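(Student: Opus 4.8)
The plan is to write $\iota \colon \cxl\CC \to \CC$ for the canonical contextual isomorphism and $s \colon \CC \to \cxl\CC$ for the assumed section, so that $\iota \circ s = \id_{\CC}$. Since $\CC$ is contextual precisely when $\iota$ is an isomorphism of CwFs, and $\iota$ is already known to be a contextual isomorphism, it suffices to show that $s$ is also a left inverse of $\iota$, i.e. that $s \circ \iota = \id_{\cxl\CC}$; then $\iota$ and $s$ are mutually inverse in $\CCwf$.

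To this end I would exploit the orthogonality of the contextual extension $u \colon \Init \to \cxl\CC$ (the unique morphism out of the initial CwF) against the contextual isomorphism $\iota$. The square
\[
\begin{tikzcd}
  \Init \ar[d, "u"'] \ar[r, "u"] & \cxl\CC \ar[d, "\iota"] \\
  \cxl\CC \ar[r, "\iota"'] & \CC
\end{tikzcd}
\]
commutes, since both composites are the unique morphism $\Init \to \CC$ in $\CCwf$. By left orthogonality there is therefore a \emph{unique} map $j \colon \cxl\CC \to \cxl\CC$ satisfying $j \circ u = u$ and $\iota \circ j = \iota$.

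The final step is to observe that both $\id_{\cxl\CC}$ and $s \circ \iota$ satisfy these two equations, and hence both equal $j$. For $\id_{\cxl\CC}$ this is immediate. For $s \circ \iota$: the composite $(s \circ \iota) \circ u = s \circ (\iota \circ u)$ is a morphism $\Init \to \cxl\CC$, hence equal to $u$ by initiality of $\Init$; and $\iota \circ (s \circ \iota) = (\iota \circ s) \circ \iota = \id_{\CC} \circ \iota = \iota$. Thus $s \circ \iota = j = \id_{\cxl\CC}$, which together with $\iota \circ s = \id_{\CC}$ exhibits $\iota \colon \cxl\CC \to \CC$ as an isomorphism of CwFs, so $\CC$ is contextual.

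There is no serious obstacle here; the only points requiring (routine) care are that $s$ is a genuine morphism of CwFs — which is exactly what being a section in $\CCwf$ means — and that $\cxl\CC$ is itself an object of $\CCwf$, so that both initiality of $\Init$ and the contextual-extension/contextual-isomorphism factorization system apply. One could alternatively bypass the factorization system and argue directly from the length-function characterization of contextual CwFs, since any CwF endomorphism of $\cxl\CC$ that is the identity on types and terms must be the identity on all objects and morphisms, by induction on context length; but the orthogonality argument is cleaner and fits the framework just introduced.
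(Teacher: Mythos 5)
Your proof is correct and follows essentially the same route as the paper: both arguments use the uniqueness of the map $\cxl\CC \to \cxl\CC$ over $\CC$ (the paper via the preceding proposition that $\cxl\CC$ is initial among CwFs with a contextual isomorphism into $\CC$, you by inlining the same orthogonality lifting of $\Init \to \cxl\CC$ against $\cxl\CC \to \CC$) to conclude that $s$ composed with the canonical map is the identity, whence that map is an isomorphism.
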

\begin{proof}
  Assume that $r : \cxl \CC \to \CC$ admits a section $s : \CC \to \cxl \CC$.
  Then the following diagram commutes.
  \[
    \begin{tikzcd}
      \cxl \CC \ar[rd, "\id", shift left] \ar[rd, "r \cdot s"', shift right] \ar[rr, "r"] && \CC \\
      & \cxl \CC \ar[ru, "r"']&
    \end{tikzcd} \]
  Since $\cxl \CC$ is initial among the CwFs with a contextual isomorphism into $\CC$, we have that $r \cdot s = \id$, and $r : \cxl \CC \to \CC$ is therefore an isomorphism, as needed.
\end{proof}

\begin{prop}
  For any type theory signature $\Th$, the initial model $\Init_{\Th} : \CMod_{\Th}$ is contextual.
\end{prop}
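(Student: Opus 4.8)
The plan is to reduce to \cref{prop:contextual_section} and then use initiality. By \cref{prop:contextual_section}, to see that $\Init_{\Th}$ is contextual it is enough to produce a section of the canonical contextual isomorphism $r : \cxl \Init_{\Th} \to \Init_{\Th}$.

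First I would observe that $\cxl \Init_{\Th}$, which is a priori only a CwF, in fact carries a canonical structure of model of $\Th$: it is the contextual image of the unique CwF-morphism $\Init \to \Init_{\Th}$, the map $r : \cxl \Init_{\Th} \to \Init_{\Th}$ is a contextual isomorphism by construction, and since all type-theoretic structures transport along contextual isomorphisms, the $\Th$-structure of $\Init_{\Th}$ can be pulled back along $r$ so that $r$ becomes a morphism in $\CMod_{\Th}$. Equivalently, this is the description of $\cxl \Init_{\Th}$ as the initial model of $\Th$ equipped with a contextual isomorphism into $\Init_{\Th}$ recorded above.

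Then, since $\Init_{\Th}$ is initial in $\CMod_{\Th}$, there is a unique morphism of models $s : \Init_{\Th} \to \cxl \Init_{\Th}$. The composite $s \cdot r$ is an endomorphism of $\Init_{\Th}$ in $\CMod_{\Th}$, so by initiality $s \cdot r = \id_{\Init_{\Th}}$; thus $s$ is a section of $r$. \cref{prop:contextual_section} then gives that the underlying CwF of $\Init_{\Th}$ is contextual, i.e.\ that $r$ is an isomorphism of CwFs, and since $r$ is moreover a morphism of models, $\Init_{\Th}$ is contextual as a model of $\Th$.

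I do not expect any real obstacle. The only points requiring a little care are that $\cxl \Init_{\Th}$ really lives in $\CMod_{\Th}$ and that $r$ is a $\Th$-morphism there, both of which are handled by the transport of type-theoretic structure along contextual isomorphisms already discussed; after that the argument is a one-line diagram chase from the universal property of the initial model.
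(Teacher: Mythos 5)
Your proof is correct and follows essentially the same route as the paper: the paper's own argument is exactly to use initiality of $\Init_{\Th}$ to obtain a section of $\cxl \Init_{\Th} \to \Init_{\Th}$ and then invoke \cref{prop:contextual_section}. You merely spell out the (correct) auxiliary points — that $\cxl \Init_{\Th}$ carries a $\Th$-model structure transported along the contextual isomorphism and that $s \cdot r = \id$ by initiality — which the paper leaves implicit.
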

\begin{proof}
  By initiality of $\Init_{\Th}$, the morphism $\cxl \Init_{\Th} \to \Init_{\Th}$ admits a section, which implies that $\Init_{\Th}$ is contextual by \cref{prop:contextual_section}.
\end{proof}

\begin{defi}
  Given a model $\CC$ of a theory $\Th$ and $\Gamma : \abs{\CC}$, the \defemph{contextual slice} $(\CC \sslash \Gamma)$ is defined to be the contextual core of the slice model $(\CC / \Gamma)$.
  \defiEnd
\end{defi}

\subsection{Join of families and telescopes}

We will not assume the presence of $\Sigma$-types in our type theories.
To circumvent their absence in some constructions, we will need to work with families of telescopes, whose types and terms are finite sequences of types and terms of the base family.
It is convenient to present them as the coproduct of length $n$ telescopes for all $n : \Nat$, and to generalize the notion of length $n$ telescope to a more heterogeneous notion, using the notion of join of families\footnote{The author learnt of this presentation from Christian Sattler.}.

We work internally to some presheaf category $\CPsh\ \CC$.

\begin{defi}
  Let $\MC = (\Ty_{\MC},\Tm_{\MC})$ and $\MD = (\Ty_{\MD},\Tm_{\MD})$ be two internal families (not necessarily representable).
  Their \defemph{join} $\MC \ast \MD$ is the internal family defined by:
  \begin{alignat*}{2}
    &\Ty_{\MC \ast \MD} && \triangleq (A : \Ty_{\MC}) \times (B : \Tm_{\MC}\ A \to \Ty_{\MD}) \\
    & \Tm_{\MC \ast \MD}\ (A, B) && \triangleq (a : \Tm_{\MC}\ A) \times (b : \Tm_{\MD}\ (B\ a))
  \end{alignat*}
  Whenever both $\MC$ and $\MD$ are representable, the family $\MC \ast \MD$ is also representable (since locally representable presheaves are closed under $\Sigma$-types).
  \defiEnd
\end{defi}

In other words, the join of $\MC$ and $\MD$ is the family of length $2$ telescopes, whose first and second components come respectively from $\MC$ and $\MD$.

\begin{defi}
  If $\MC$ is an internal family and $n : \Nat$, the family $\MC^{\ast n}$ of \defemph{length $n$ telescopes} is the $n$-fold iterated join of $\MC$.
  \defiEnd
\end{defi}

\begin{defi}
  If $\MC$ is an internal family, the family of \defemph{telescopes} of $\MC$ is the coproduct $\MC^{\star} \triangleq \coprod_{n : \Nat}\ \MC^{\ast n}$.
  We write $\Ty_{\MC}^{\star}$ and $\Tm_{\MC}^{\star}$ for the components of $\MC^{\star}$.
  If $\MC$ is representable, then $\MC^{\star}$ is also representable.
  \defiEnd
\end{defi}

If $\CC$ is a contextual CwF, we may identify its objects with the closed telescopes of types (i.e. the global elements of $\Ty^{\star}_{\CC}$) and its morphisms from $\Gamma$ to $\Delta$ with the natural transformations from $\Tm^{\star}_{\CC}\ \Gamma$ to $\Tm^{\star}_{\CC}\ \Delta$. When $\CC$ is an arbitrary CwF, this is an explicit description of the objects and morphisms of the contextual core of $\CC$.

\subsection{Cumulative categories with families}

We will actually work with type theories that extend the theory of cumulative categories with families, rather than the simpler theory of categories with families.
Cumulative categories with families were introduced by Coquand~\cite{CoquandNbE} to describe universe hierarchies.
Working with cumulative CwFs, instead of mere CwFs, ensures that every type admits a code in some universe.
In presence of identity types, this provides in turn a way to compare types up to internal equality of codes.

This is mainly for convenience: most of our results could also be formulated and proven for mere CwFs, comparing types up to equivalence.
However, using cumulative CwFs simplifies the proofs and the presentation.

\begin{defi}
  Internally to a presheaf model $\CPsh_{\CC}$, an internal cumulative family consists of a family
  \[\Ty : \Nat \to \SPsh_{\CC}\]
  of presheaves of types ($\Ty_{n}$ is the presheaf of types in the $n$-th universe of the hierarchy), a family
  \[\Tm : \{n : \Nat\} \to \Ty_{n} \to \SRepPsh_{\CC} \]
  of locally representable presheaves of terms, lifting functions
  \[ \mathsf{Lift}_{\Ty} : \Ty_{n} \to \Ty_{n+1}\]
  and isomorphisms
  \[ \mathsf{lift}_{\Tm} : \Tm_{n}\ A \simeq \Tm_{n+1}\ (\mathsf{Lift}_{\Ty}\ A) : \mathsf{lower}_{\Tm}. \]
  Given an internal cumulative family $\MC = (\Ty, \Tm)$, we write $\MC^{n}$ for the internal family $(\Ty_{n}, \Tm_{n})$.
 
  A cumulative CwF, or cCwF, is a category $\CC$ equipped with a global cumulative family $(\Ty, \Tm)$.
  \defiEnd
\end{defi}

\begin{defi}
  The structure of cumulative universes over a cumulative internal family $(\Ty,\Tm)$ consists of operations \[ \UU : (n : \Nat) \to \Ty_{n+1} \] along with isomorphisms \[ \Tm_{n+1}\ \UU_{n} \simeq \Ty_{n}, \]
  that we will leave implicit.
  \defiEnd
\end{defi}

This differs slightly from Coquand's definition of cumulative CwFs.
Coquand requires the natural transformations $\mathsf{Lift}_{\Ty} : \Ty_{n} \to \Ty_{n+1}$ to be injective, and uses equalities $\Tm\ \UU_{n} = \Ty_{n}$ instead of isomorphisms.

When talking about type theory signatures in this paper, we mean signatures over the theory of cumulative CwFs with universes, i.e. extensions of the signature of cumulative CwFs with universes by new operations and equations only (no new sorts).

The notion of contextuality generalizes straightforwardly to cumulative CwFs.
For instance, a contextual isomorphism between cumulative CwFs is a morphism that is bijective on types and terms for each universe level.

The notion of telescope can also be adapted to cumulative families.
Given an internal cumulative family $\MC$ and a list $w = (w_{1}, \cdots, w_{n})$ of natural numbers, the family of $w$-shaped telescopes is the join $\MC^{\ast w} \triangleq \MC^{w_{1}} \ast \dots \ast \MC^{w_{n}}$.
The family of all telescopes of $\MC$ is the coproduct $\MC^{\star} \triangleq \coprod_{w : \Nat^{\star}}\ \MC^{\ast w}$

For many of the properties of cumulative CwFs with universes and morphisms of cCwFs that are defined by conditions on types and terms, the condition on types is a consequence of the condition on terms of the corresponding universe.
For instance, to check that a morphism is a contextual isomorphism, it is sufficient to check that it is bijective on terms for each universe level.

To improve the readability, we will leave the universe levels implicit in most constructions and proofs.

%%% Local Variables:
%%% mode: latex
%%% TeX-master: "main"
%%% End:
\section{Weak and strong type structures}\label{sec:weak_id_types}

In this section we define the weak and strong variants of the basic type-theoretic structures: identity types and $\Pi$-types.
We work in the internal language of $\CPsh\ \CC$ for a fixed category $\CC$.

Generally, the computation rules of a weak type structure are expressed by internal equalities, whereas the computation rules of strong type structures are expressed by strict equalities.
We prefer to use the adjective \emph{strong} instead of \emph{strict} to qualify type structures and type theories with strict computation rules, in order to avoid ambiguity when talking about strict identity types.
Strong identity types will refer to identity types with a strict $\beta$-rule, whereas strict identity types will refer to identity types satisfying the UIP principle.

\subsection{Weak identity types}

In presence of strong $\Sigma$-types and strong $\Pi$-types, there are several equivalent ways to define the eliminator for identity types.
The Martin-Löf eliminator is given by the following rule.
\begin{mathpar}
  \inferrule[Martin-Löf eliminator]{A\ \type \\ [x : A, y : A, p : \Id\ x\ y]\ P(x,y,p)\ \type \\ [x : A]\ d(x) : P(x,x,\refl) \\ x : A \\ y : A \\ p : \Id\ x\ y}{\J\ P\ d\ x\ y\ p : P(x,y,p)}
\end{mathpar}
In absence of strong $\Pi$-types, it is known that the Martin-Löf eliminator is not strong enough to even define transport\footnote{For a countermodel, take the CwF with $\Id$ and $\refl$ freely generated by a type $A$, a type family $B$ over $A$, and terms $a_{1}, a_{2} : A$, $b : B\ a_{1}$ and $p : \Id\ \{A\}\ a_{1}\ a_{2}$.
  The only terms of that model are the variables, the weakenings of the generators, and their iterated reflexivity paths.
  Since there is no closed term of type $B\ a_{2}$, that model does not satisfy transport.
  But it can still be equipped with the Martin-Löf eliminator.}.
In \cite{IdTypeFS}, a variant of the Martin-Löf eliminator, now called the Frobenius variant of the Martin-Löf eliminator, is introduced.
The idea is to circumvent the absence of $\Pi$-types by allowing the target type family $P$ of the elimination to depend on any telescope $\Delta$ of parameters.
\begin{mathpar}
  \inferrule[Frobenius Martin-Löf eliminator]{A\ \type \\ [x : A, y : A, p : \Id\ x\ y]\ \Delta(x,y,p)\ \type^\star \\\\
    [x : A, y : A, p : \Id\ x\ y, \delta : \Delta(x,y,p)]\ P(x,y,p,\delta)\ \type \\
    [x : A, \delta : \Delta(x,x,\refl)]\ d(x,\delta) : P(x,x,\refl,\delta) \\\\
    x : A \\ y : A \\ p : \Id\ x\ y \\ \delta : \Delta(x,y,p)}{\J\ P\ d\ x\ y\ p\ \delta : P(x,y,p,\delta)}
\end{mathpar}
Another alternative is the Paulin-Mohring eliminator, also known as based path induction, or one-sided eliminator.
\begin{mathpar}
  \inferrule[Paulin-Mohring eliminator]{A\ \type \\ x : A \\ [y : A, p : \Id\ x\ y]\ P(y,p)\ \type \\\\ d(x) : P(x,\refl) \\ y : A \\ p : \Id\ x\ y}{\J\ x\ P\ d\ y\ p : P(y,p)}
\end{mathpar}
North~\cite{IdWFSCauchy} and Isaev~\cite{IsaevIxTT} have independently given proofs of the fact that the Paulin-Mohring eliminator is equivalent to the Frobenius variant of the Martin-Löf eliminator in the presence of strong $\Sigma$-types.

We use the weak variant of the Paulin-Mohring eliminator, with the computation rule weakened to a weak equality.
We will show that the other eliminators can also be derived, even in the absence of strong $\Sigma$-types.
In fact, we will prove that weak identity types can be lifted from a family $\MC$ to its telescope family $\MC^{\star}$, i.e. the family whose types are list of types of $\MC$; the derivation of the Frobenius eliminator can be seen as a consequence of this fact.
The main step of this derivation is originally due to András Kovács.
The proof has been simplified using the notion of join of internal families by Christian Sattler.

\begin{defi}\label{def:id_intro}
  Let $\MC = (\Ty,\Tm)$ be an internal family (not necessarily representable).
  An \defemph{identity type introduction structure} over $\MC$ is specified by the following signature:
  \begin{alignat*}{2}
    & \Id && : \{A : \Ty\} (x : \Tm\ A) (y : \Tm\ A) \to \Ty \\
    & \refl && : \{A : \Ty\} (x : \Tm\ A) \to \Tm\ (\Id\ x\ x) \tag*{\defiEnd}
  \end{alignat*}
\end{defi}

\begin{defi}\label{def:weak_id_elim}
  Let $\MC = (\Ty_{\MC}, \Tm_{\MC})$ and $\MD = (\Ty_{\MD}, \Tm_{\MD})$ be internal families with identity type introduction structures.
  A \defemph{weak identity type elimination structure} from $\MC$ to $\MD$ is specified by the following signature:
 \begin{alignat*}{3}
   & \J && :{ } && \{A : \Ty_{\MC}\} \{x : \Tm_{\MC}\ A\} (P : (y : \Tm_{\MC}\ A)(p : \Tm_{\MC}\ (\Id\ x\ y)) \to \Ty_{\MD}) \\
   &&&&& (d : \Tm_{\MD}\ (P\ x\ \refl)) \{y : \Tm_{\MC}\ A\} (p : \Tm_{\MC}\ (\Id\ x\ y)) \\
   &&&&& \to \Tm_{\MD}\ (P\ y\ p) \\
   & \J_{\beta} && :{ } && \{A : \Ty_{\MC}\} \{x : \Tm_{\MC}\ A\} (P : (y : \Tm_{\MC}\ A)(p : \Tm_{\MC}\ (\Id\ x\ y)) \to \Ty_{\MD}) \\
   &&&&& (d : \Tm_{\MD}\ (P\ x\ \refl)) \\
   &&&&& \to \Tm_{\MD}\ (\Id\ (\J\ x\ P\ d\ x\ \refl)\ d) \tag*{\defiEnd}
   \end{alignat*}
 \end{defi}

 A \defemph{weak identity type structure} over an family $\MC$ consists of an identity type introduction structure over $\MC$ along with a weak identity type elimination structure from $\MC$ to $\MC$.

% Splitting the definition of identity type structures into introduction and elimination structures will be useful to lift identity type structures to the joins of families and to families of telescopes.

\begin{defi}
  An internal family $\MC = (\Ty,\Tm)$ with weak identity types is said to have \defemph{representable singletons} if for every $A : \Ty$ and $x : \Tm\ A$, the dependent presheaf of singletons $\Singl\ x \triangleq (y : \Tm\ A) \times \Tm\ (\Id\ x\ y)$ is representable.
  \defiEnd
\end{defi}
The higher-order parameters occurring in the signature for elimination structures are isomorphic to $\Singl\ x$ for some $x$, and therefore assuming that singletons are representable is sufficient to give a signature for weak identity types.

Note that if $\MC$ is a representable family, then it automatically has representable singletons, since representable presheaf families are closed under dependent sums.
But we will also consider type theories with representable singletons but a non-representable family.
Syntactically, this means that the context extension is restricted to the contractible context extension $\Gamma,(y:A,p:\Id\ x\ y)$.
This restricted context extension is inspired by Brunerie's type-theoretic definition of weak $\omega$-groupoid \cite[Appendix A]{BrunerieThesis}.

\begin{defi}\label{def:strong_id}
  A weak identity type structure is said to be \defemph{strong}, or to have a \defemph{strict $\beta$-rule}, if the following equations hold, for all relevant arguments:
  \begin{mathpar}
    \J\ P\ d\ x\ \refl = d

    \J_{\beta}\ P\ d = \refl
  \end{mathpar}
\end{defi}

We say that an internal cumulative family $\MC$ has weak identity types if each family $\MC^{n}$ has an identity type introduction structure, together weak elimination structures from $\MC^{n}$ to $\MC^{m}$ for all $n, m : \Nat$.

Given an internal family $\MC = (\Ty,\Tm)$ equipped with weak identity types, we can derive the transport operation:
\[ \mathsf{transport} : \{A, x, y\} (P : \Tm\ A \to \Ty) (p : \Tm\ (\Id\ x\ y)) \to \Tm\ (P\ x) \to \Tm\ (P\ y). \]
We will often write $p^{\star}\ d$ instead of $\mathsf{transport}\ P\ p\ d$, leaving the family $P$ implicit.

We will write $(p \cdot q)$ for the composition of two internal equalities $p : \Id\ x\ y$ and $q : \Id\ y\ z$, and $p^{-1}$ for the inverse of an internal equality $p : \Id\ x\ y$.

The standard notions of homotopy type theory, such as contractible types, propositional types, equivalences, etc, can be defined.
However, since we may not have $\Sigma$-types or $\Pi$-types, they are not encoded by types of the theory.

\subsection{Weak \texorpdfstring{$\Pi$}{Pi}-types}

We now define the weak variant of $\Pi$-types.
We defined weak identity types for families that are only required to have representable singletons.
Similarly, it will be useful to have a definition of $\Pi$-types that is as general as possible with respect to the representability of the families.
In our case, we will consider $\Pi$-types in a family $\MD$ with domains, or arities, in another family $\MC$ (and codomains in $\MD$).
The family $\MC$ is required to be representable, whereas the family $\MD$ is only required to have weak identity types with representable singletons.

We only consider $\Pi$-types with function extensionality.
We use one of the definitions of function extensionality from \cite{StrPi}.

\begin{defi}\label{def:weak_pi}
  Let $\MC$ be a representable internal family and $\MD$ be an family equipped with weak identity types.
  An introduction structure for $\Pi$-types in $\MD$ with arities in $\MC$ is presented by:
  \begin{alignat*}{2}
    & \Pi && : (A : \Ty_{\MC})\ (B : \Tm_{\MC}\ A \to \Ty_{\MD}) \to \Ty_{\MD} \\
    & \lam && : \{A,B\}\ (b : (a : \Tm_{\MC}\ A) \to \Tm_{\MD}\ (B\ a)) \to \Tm_{\MD}\ (\Pi\ A\ B)
  \end{alignat*}
  An application structure consists of:
  \begin{alignat*}{3}
    & \app && :{ } && \{A,B\}\ (f : \Tm_{\MD}\ (\Pi\ A\ B))\ (a : \Tm_{\MC}\ A) \to \Tm_{\MD}\ (B\ a) \\
    & \app_{\beta} && :{ } && \{A,B\} (b : (a : \Tm_{\MC}\ A) \to \Tm_{\MD}\ (B\ a)) (a : \Tm_{\MC}\ A) \to \\
    &&&&& \Tm_{\MD}\ (\Id\ (\app\ (\lam\ b)\ a)\ (b\ a))
  \end{alignat*}
  Given an application structure, we can derive:
   \begin{alignat*}{3}
     & \happly && :{} && \{A,B\}\ \{f,g : \Tm_{\MD}\ (\Pi\ A\ B)\} (p : \Tm_{\MD}\ (\Id\ f\ g)) (a : \Tm_{\MC}\ A) \to \\
     &&&&& \Tm_{\MD}\ (\Id\ (\app\ f\ a)\ (\app\ g\ a)).
  \end{alignat*}
  An extensionality structure consists of:
  \begin{alignat*}{3}
    & \funext && :{} && \{A,B\} \{f,g : \Tm_{\MD}\ (\Pi\ A\ B)\} (h : \Tm_{\MD}\ (\Pi\ A\ (a \mapsto \Id\ (\app\ f\ a)\ (\app\ g\ a)))) \to \\
    &&&&& \Tm_{\MD}\ (\Id\ f\ g) \\
    & \funext_{\beta} && :{ } && \{A,B\} \{f : \Tm_{\MD}\ (\Pi\ A\ B)\} \to \\
    &&&&& \Tm_{\MD}\ (\Id\ (\funext\ (\lam\ (a \mapsto \refl)))\ \refl) \\
    & \funext\text{-}\app && :{ } && \{A,B,f,g\} (h : \Tm_{\MD}\ (\Pi\ A\ (a \mapsto \Id\ (\app\ f\ a)\ (\app\ g\ a)))) (a : \Tm_{\MC}\ A) \to \\
    &&&&& \Tm_{\MD}\ (\Id\ (\happly\ (\funext\ h)\ a)\ (\app\ h\ a))  \\
    & \funext\text{-}\app_{\beta} && :{ } && \{A,B,f\} (a : \Tm_{\MC}\ A) \to \\
    &&&&& \Tm_{\MD}\ (\Id\ (\funext\text{-}\app\ (\lam\ (a \mapsto \refl)))\ p),
  \end{alignat*}
  where $p : \Tm_{\MD}\ (\Id\ (\lam\ (x \mapsto \happly\ (\funext\ (\lam\ (a \mapsto \refl)))))\ (\lam\ (x \mapsto \refl)))$ can be derived from $\app_{\beta}$, $\happly_{\beta}$, $\funext$ and $\funext_{\beta}$.

  The structure of $\Pi$-types in $\MD$ with arities in $\MC$ consists of an introduction structure, an application structure and an extensionality structure.
  \defiEnd
\end{defi}

Remark that in the signature of $\funext$, the homotopy $h$ between $f$ and $g$ is encoded as an inhabitant of the $\Pi$-type $(\Pi\ A\ (a \mapsto \Id\ (\app\ f\ a)\ (\app\ g\ a)))$, rather than as a family $(a : \Tm\ A) \to \Tm\ (\Id\ (\app\ f\ a)\ (\app\ g\ a))$.
This is important in the absence of a strict $\beta$-rule for $\Pi$-types, as we would not be able to prove the congruence law for $\funext$ otherwise (the fact that whenever two homotopies $h, h'$ between $f$ and $g$ are themselves homotopic, then $\funext\ f\ g\ h$ and $\funext\ f\ g\ h'$ are internally equal).

\begin{defi}\label{def:strong_pi}
  If $\MD$ has weak identity types, a weak $\Pi$-type structure in $\MD$ with arities in $\MC$ is said to have a \defemph{strict $\beta$-rule} if it satisfies the equations:
  \begin{mathpar}
    \app\ (\lam\ b)\ a = b\ a

    \app_{\beta} = \refl
  \end{mathpar}
\end{defi}

We say that an internal cumulative family or a cumulative CwF has weak $\Pi$-types if its families at each universe level have weak $\Pi$-types.

\subsection{Lifting type structures to telescopes}\label{ssec:lift_tele}

In this subsection, we show that both weak and strong $\Id$ and $\Pi$-type structures on a family $\MC$ can be lifted to the telescope family $\MC^{\star}$.
Similar results have been proven and used before in the literature \cite{2DTT, HoInvCwA}.
We generalize them to weak type structures and non-representable families.
The constructions have been formalized in Agda.
We refer the reader to the formalization for the detailed constructions.

\begin{con}\label{con:lift_wid_intro}
  Let $\MC$ and $\MD$ be families with identity type introduction structures, along with an weak identity type elimination structure from $\MC$ to $\MD$.

  Then the family $\MC \ast \MD$ is equipped with the following identity type introduction structure:
  \begin{alignat*}{2}
    & \Id\ \{(A,B)\}\ (x_{c}, x_{d})\ (y_{c}, y_{d}) && \triangleq (\Id\ \{A\}\ x_{c}\ y_{c} , p_{c} \mapsto \Id\ \{B\ y_{c}\}\ (p_{c}^{\star}\ y_{c})\ y_{d}) \\
    & \refl\ \{(A,B)\}\ (x_{c}, x_{d}) && \triangleq (\refl\ \{x_{c}\}, p),
  \end{alignat*}
  where $p$ is some term of type $\Id\ (\refl^{\star}\ x_{c})\ x_{c}$, definable using $\J_{\beta}$.
  \defiEnd
\end{con}

\begin{con}\label{con:lift_wid_join}
  Let $\MC$, $\MD$ and $\ME$ be families with identity type introduction structures, along with identity type elimination structures from $\MC$ to $\MD$ and $\ME$, from $\MD$ to $\MD$ and $\ME$ and from $\ME$ to $\ME$.

  Then there exist weak identity type elimination structure from $(\MC \ast \MD)$ to $\ME$ and from $\MC$ to $(\MD \ast \ME)$.
  \defiEnd
\end{con}

\begin{con}
  Let $\MC$ be an internal family with a weak identity type structure.

  Then for every $n : \Nat$, the family $\MC^{\ast n}$ of length $n$ telescopes has a canonical identity type introduction structure, and for every $n, m : \Nat$ there is a weak identity type elimination structure from $\MC^{\ast n}$ to $\MC^{\ast m}$.
\end{con}
\begin{proof}
  By iterating \cref{con:lift_wid_intro} and \cref{con:lift_wid_join}.
\end{proof}

\begin{con}\label{con:pi_lift_r}
  Let $\MC$, $\MD$ and $\ME$ be internal families, such that $\ME$ has weak identity types and weak $\Pi$-types with arities in $\MC$ and $\MD$.

  Then $\ME$ has weak $\Pi$-types with arities in $\MC \ast \MD$.
  \defiEnd
\end{con}

\begin{con}\label{con:pi_lift_l}
  Let $\MC$, $\MD$ and $\ME$ be internal families, such that $\MD$ and $\ME$ have weak identity types and weak $\Pi$-types with arities in $\MC$, along with a weak identity type elimination structure from $\MD$ to $\ME$.
  Note that by \cref{con:lift_wid_join}, $\MD \ast \ME$ has weak identity types.

  Then $\MD \ast \ME$ has weak $\Pi$-types with arities in $\MC$.
  \defiEnd
\end{con}

\begin{con}
  If $\MD$ has weak $\Pi$-types with arities in $\MC$, then for any $n, m : \Nat$, the family $\MD^{\ast m}$ of length $m$ telescopes of $\MD$ has weak $\Pi$-types with arities in $\MC^{\ast n}$.
\end{con}
\begin{proof}
  By iterating \cref{con:pi_lift_r} and \cref{con:pi_lift_l}.
\end{proof}

% In all of these constructions, if all of the given identity type or $\Pi$-type structures are strong, then the constructed type structures are also strong.

\subsection{Parametrized elimination structures}\label{ssec:param_id_elim}

\begin{defi}
  Let $\MC$, $\MD$ and $\ME$ be internal families, together with identity type introduction structures over $\MC$ and $\ME$. \\
  A parametrized identity type elimination structure from $\MC$ to $\ME$ with parameters in $\MD$ consists of operations
  \begin{alignat*}{3}
   & \J && :{ } && \{A : \Ty_{\MC}\} \{x : \Tm_{\MC}\ A\} (Q : \Singl\ x \to \Ty_{\MD}) (P : (y : \Singl\ x) (q : \Tm_{\MD}\ (Q\ y)) \to \Ty_{\ME}) \\
   &&&&& (d : (q : \Tm_{\MD}\ (Q\ (x,\refl))) \to \Tm_{\ME}\ (P\ (x,\refl)\ q)) (y : \Singl\ x) (q : \Tm_{\MD}\ (Q\ y)) \\
   &&&&& \to \Tm_{\ME}\ (P\ y\ q) \\
   & \J_{\beta} && :{ } && \{A : \Ty_{\MC}\} \{x : \Tm_{\MC}\ A\} (Q : \Singl\ x \to \Ty_{\MD}) (P : (y : \Singl\ x) (q : \Tm_{\MD}\ (Q\ y)) \to \Ty_{\ME}) \\
   &&&&& (d : (q : \Tm_{\MD}\ (Q\ (x,\refl))) \to \Tm_{\ME}\ (P\ (x,\refl)\ q)) (q : \Tm_{\MD}\ (Q\ (x,\refl))) \\
   &&&&& \to \Tm_{\ME}\ (\Id\ (\J\ Q\ P\ d\ y\ q)\ (d\ q)) \tag*{\defiEnd}
  \end{alignat*}
\end{defi}

\begin{con}
  Assume that $\MC$, $\MD$ and $\ME$ are internal families equipped with identity type introduction structures and identity type elimination structures from $\MC$ to $\MC$, $\MD$ and $\ME$, from $\MD$ to $\MD$ and $\ME$ and from $\ME$ to $\ME$.

  Then we can construct a parametrized identity type elimination structure from $\MC$ to $\ME$ with parameters in $\MD$.
  \defiEnd
\end{con}

The Frobenius variant of the Paulin-Mohring identity type eliminator is exactly a parametrized identity type elimination structure with parameters in the family of telescopes.

%%% Local Variables:
%%% mode: latex
%%% TeX-master: "main"
%%% End:
\section{The homotopy theory of cCwFs with weak identity types}\label{sec:homotopy_theory}

\subsection{Contextual equivalences}
We recall the classes of local weak equivalences, local trivial fibrations and local fibrations introduced by \cite{HoThTT}.
We will use the adjective \defemph{contextual} instead of local: a property of CwFs (or cumulative CwFs) is said to be contextual when it holds for a CwF $\CC$ if and only if it holds for all contextual slices $(\CC \sslash \Gamma)$, similarly, a property of CwF morphisms is said to be contextual when it holds for a morphism $F : \CC \to \CD$ if and only if it holds for all restrictions $(F \sslash \Gamma) : (\CC \sslash \Gamma) \to (\CD \sslash F\ \Gamma)$ to contextual slices.

\begin{defi}
  A morphism $F : \CC \to \CD$ of cumulative CwFs is said to be a \defemph{contextual trivial fibration} or a \defemph{strong contextual equivalence} if its actions on types and terms are surjective, i.e. if it satisfies the following type and term lifting properties.
  \begin{description}
    \item[strong type lifting] For every $\Gamma : \CC$ and type $A : \abs{\Ty_{\CD}}_{F\ \Gamma}$, there exists a lift $A_{0} : \abs{\Ty_{\CC}}_{\Gamma}$ such that $F\ A_{0} = A$.
    \item[strong term lifting] For every $\Gamma : \CC$, type $A : \abs{\Ty_{\CC}}_{\Gamma}$ and term $a : \abs{\Ty_{\CD}}_{F\ \Gamma}\ (F\ A)$, there exists a lift $a_{0} : \abs{\Tm_{\CC}}_{\Gamma}\ A$ such that $F\ a_{0} = a$.
      \defiEnd
  \end{description}
\end{defi}

\begin{defi}
  A morphism $F : \CC \to \CD$ of cumulative CwFs, where $\CD$ is equipped with weak identity types, is said to be a \defemph{weak contextual equivalence} if its actions on types and terms are surjective up to weak equality, i.e. if it satisfies the following weak type and term lifting properties.
  \begin{description}
    \item[weak type lifting] For every $\Gamma : \CC$ and type $A : \abs{\Ty_{\CD}}_{F\ \Gamma}$, there exists a lift $A_{0} : \abs{\Ty_{\CC}}_{\Gamma}$ and a weak equality in $\Id\ \{\UU\}\ (F\ A_{0})\ A$.
    \item[weak term lifting] For every $\Gamma : \CC$, type $A : \abs{\Ty_{\CC}}_{\Gamma}$ and term $a : \abs{\Ty_{\CD}}_{F\ \Gamma}\ (F\ A)$, there exists a lift $a_{0} : \abs{\Tm_{\CC}}_{\Gamma}\ A$ and a weak equality in $\Id\ \{F\ A\}\ (F\ a_{0})\ a$.
      \defiEnd
  \end{description}
\end{defi}

\begin{rem}\label{rem:split_equiv}
  A (weak or strong) contextual equivalence is said to be \defemph{split} if it comes equipped with a choice of (weak or strong) lifts.
  Classically, any (weak or strong) contextual equivalence can be split.
  Thus we won't distinguish split contextual equivalences from general contextual equivalences in this paper.

  Hofmann's conservativity theorem states that the morphism $\Init_{\mathsf{ITT}} \to \Init_{\mathsf{ETT}}$ from the initial model of Intensional Type Theory to the initial model of Extensional Type Theory is a strong contextual equivalence.
  Constructively, the morphism $\Init_{\mathsf{ITT}} \to \Init_{\mathsf{ETT}}$ is not a split strong contextual equivalence, since that would provide a way to decide equality of terms of ETT.
\end{rem}

Denote by $I$ the set containing the cCwF morphisms $I^{\Ty_{n}} : \mathsf{Free}(\bm{\Gamma} \vdash) \to \mathsf{Free}(\bm{\Gamma} \vdash \bm{A} : \Ty_{n})$ and $I^{\Tm_{n}} : \mathsf{Free}(\bm{\Gamma} \vdash \bm{A} : \Ty_{n}) \to \mathsf{Free}(\bm{\Gamma} \vdash \bm{a} : \Tm_{n}\ \bm{A})$.
A split strong contextual equivalence is exactly a map with the right lifting property with respect to $I$.
A map with the left lifting property with respect to all strong contextual equivalences is called a \defemph{cofibration}.
The small object argument ensures that any map can be factored as a cofibration followed by a strong contextual equivalence.

If $\CC$ is any cCwF and $A$ is a type of $\CC$ in a context $\Gamma$,  we will write $\CC \to \CC[\Gamma \vdash \bm{a} : A]$ for the extension of $\CC$ by a new term $\bm{a}$ of type $A$ in context $\Gamma : \abs{\CC}$, i.e. the following pushout of $I^{\Tm_{n}}$ along the map $(\Gamma,A) : \mathsf{Free}(\bm{\Gamma} \vdash \bm{A} : \Ty) \to \CC$ that sends $\bm{\Gamma}$ to $\Gamma$ and $\bm{A}$ to $A$.
\[ \begin{tikzcd}
    \mathsf{Free}(\bm{\Gamma} \vdash \bm{A} : \Ty_{n}) \ar[d, "I^{\Tm_{n}}"] \ar[r, "{(\Gamma, A)}"] \ar[rd, very near end, phantom, "\ulcorner"] & \CC \ar[d] \\
    \mathsf{Free}(\bm{\Gamma} \vdash \bm{a} : \Tm_{n}\ \bm{A}) \ar[r, "\bm{a}"] & \CC[\Gamma \vdash \bm{a} : A]
  \end{tikzcd} \]
The universal property of $\CC[\Gamma \vdash \bm{a} : A]$ says that a morphism $F : \CC[\Gamma \vdash \bm{a} : A] \to \CD$ is determined by a morphism $F' : \CC \to \CD$ along with a term $a : \abs{\Tm_{\CD}}_{F'\ \Gamma}\ (F'\ A)$.
Extensions of the form $\CC \to \CC[\Gamma \vdash \bm{a} : A]$ are called \defemph{basic $I$-cellular extensions}.

Remark that thanks to the presence of universes, a pushout of $I^{\Ty_{n}}$ is also a basic $I$-cellular extension.
For the same reason, the type lifting properties are redundant in the definitions of strong and weak contextual equivalences.

We recall that use bold symbols ($\bm{\Gamma}$, $\bm{A}$, $\bm{a}$, ...) to indicate the generators of a free model or of a free extension of a model.
Thus, when we write $\CC[\Gamma \vdash \bm{a} : A]$, $\bm{a}$ is a new term of $\CC[\Gamma \vdash \bm{a} : A]$, whereas $\Gamma$ and $A$ already exist in $\CC$.

Denote by $J$ the set consisting of the morphisms $J^{\Ty_{n}} : \mathsf{Free}(\bm{\Gamma} \vdash \bm{A} : \Ty_{n}) \to \mathsf{Free}(\bm{\Gamma} \vdash \bm{e} : \Id\ \{\UU_{n}\}\ \bm{A}\ \bm{B})$ and $J^{\Tm_{n}} : \mathsf{Free}(\bm{\Gamma} \vdash \bm{a} : \Tm_{n}\ \bm{A}) \to \mathsf{Free}(\bm{\Gamma} \vdash \bm{p} : \Tm\ (\Id\ \{\bm{A}\}\ \bm{a}\ \bm{b}))$.
A map with the right lifting property with respect to $J$ is called a \defemph{contextual fibration}, and maps with the left lifting property with respect to all contextual fibrations are called \defemph{trivial cofibrations}.
By the small object argument, any map can also be factored functorially as a trivial cofibration followed by a contextual fibration.

Given a cCwF $\CC$ and a term $a : \abs{\Tm_{\CC}}_{\Gamma}\ A$, we write $\CC \to \CC[\Gamma \vdash \bm{p} : \Id\ \{A\}\ a\ \bm{b}]$ for the extension of $\CC$ by a new term $\bm{b}$ of type $A$ and a new path $\bm{p}$ of type $\Id\ \{A\}\ a\ b$.
Extensions of this kind are called \defemph{basic $J$-cellular extensions}.

Note that while the classes of weak contextual equivalences, strong contextual equivalences and contextual fibrations are independent of the additional type-theoretic structure that we consider, this is not the case for the classes of cofibrations and trivial cofibrations.
Thus, we have to be careful, when working with models of a theory $\Th$, to use the correct notions in $\CMod_{\Th}$, namely the classes of morphisms with the left lifting property with respect to the strong contextual equivalences and contextual fibrations of $\CMod_{\Th}$.

All classes coincide however in the categories $\CMod_{\Th}$ of general models and $\CMod^{\cxl}_{\Th}$ of contextual models.
This is clear for the classes of weak contextual equivalences, strong contextual equivalences and contextual fibrations, since they are defined as contextual properties on morphisms.
We also show that it holds for trivial cofibrations and cofibrations.
\begin{prop}
  Let $F : \CMod_{\Th}^{\cxl}(\CC \to \CD)$ be a morphism between contextual models of a theory $\Th$.
  The morphism $F$ has the left lifting property with respect to all strong contextual equivalences (resp. trivial fibrations) if and only if it has the left lifting property with respect to the strong contextual equivalences (resp. contextual fibrations) between contextual models.
\end{prop}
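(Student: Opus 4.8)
The plan is to transfer lifting problems through contextual cores. The forward implication is immediate: the strong contextual equivalences (resp.\ contextual fibrations) between contextual models form a subclass of those in $\CMod_{\Th}$, and a diagonal filler of a square all of whose objects are contextual is automatically a morphism of contextual models, since $\CMod^{\cxl}_{\Th}$ is a full subcategory of $\CMod_{\Th}$. Hence left lifting against the larger class entails left lifting against the smaller one. The content is in the converse.

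The key tool is that any morphism out of a contextual model factors uniquely through a contextual core. Concretely, if $\CC$ is contextual and $f : \CC \to \CE$ is arbitrary, then since $\Init_{\Th} \to \CC$ is a contextual extension and the projection $p_{\CE} : \cxl \CE \to \CE$ is a contextual isomorphism, orthogonality yields a unique $\tilde f : \CC \to \cxl \CE$ with $\tilde f \cdot p_{\CE} = f$. Applying the same orthogonality to $\Init_{\Th} \to \cxl \CE$ against $p_{\CF}$, any $G : \CE \to \CF$ induces a morphism $\cxl G : \cxl \CE \to \cxl \CF$ (the induced map on contextual cores) with $p_{\CE} \cdot G = \cxl G \cdot p_{\CF}$. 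Since the classes of weak contextual equivalences, strong contextual equivalences and contextual fibrations are each defined by a contextual condition on morphisms, they are invariant under contextual isomorphism, so $\cxl G$ belongs to the relevant class if and only if $G$ does.

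With these in hand, consider a lifting problem: $F : \CC \to \CD$ between contextual models, $G : \CE \to \CF$ in the relevant class, and $u : \CC \to \CE$, $v : \CD \to \CF$ with $F \cdot v = u \cdot G$. I would factor $u = \tilde u \cdot p_{\CE}$ and $v = \tilde v \cdot p_{\CF}$ as above; uniqueness of orthogonal fillers for $\Init_{\Th} \to \CC$ against $p_{\CF}$ forces $F \cdot \tilde v = \tilde u \cdot \cxl G$, so $(\tilde u, \tilde v)$ is a lifting problem for $F$ against $\cxl G$ lying entirely in $\CMod^{\cxl}_{\Th}$. By hypothesis it has a filler $j : \CD \to \cxl \CE$ with $F \cdot j = \tilde u$ and $j \cdot \cxl G = \tilde v$, and then $j \cdot p_{\CE} : \CD \to \CE$ solves the original problem: $F \cdot (j \cdot p_{\CE}) = \tilde u \cdot p_{\CE} = u$ and $(j \cdot p_{\CE}) \cdot G = j \cdot \cxl G \cdot p_{\CF} = \tilde v \cdot p_{\CF} = v$. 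The same argument handles any of the three contextual classes of maps verbatim.

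The only step beyond routine diagram chasing is the assertion that passing to the contextual core neither creates nor destroys membership in these classes of maps. This is exactly the observation, recorded just before the proposition, that the classes are ``contextual'' — surjectivity of the actions on types and terms, or the right lifting property against $J$ — so that they are detected on contextual cores and transported along the contextual isomorphisms $p_{\CE}$ and $p_{\CF}$; I expect this to be the only point demanding care, and even there no genuine difficulty arises.
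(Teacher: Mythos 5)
Your proof is correct and takes essentially the same route as the paper: both push the lifting square into contextual models via the (contextual extension, contextual isomorphism) factorization system, note that membership in the given class is transported along contextual isomorphisms to the induced map between contextual models, solve there using the hypothesis, and compose the filler back along the contextual isomorphism. The only cosmetic difference is that you replace $G$ by the induced map between the contextual cores $\cxl \CE \to \cxl \CF$ of its endpoints, whereas the paper uses the induced map between the contextual images $\cxlim \alpha \to \cxlim \beta$ of the horizontal maps; both variants succeed for the same reasons.
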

\begin{proof}
  The forward implications are straightforward.
  For the reverse implications, assume that $F$ has the left lifting property with respect to all strong contextual equivalences (resp. contextual fibrations) and take a lifting problem
  \[ \begin{tikzcd}
      \CC \ar[d, "F"'] \ar[r, "\alpha"] & \CA \ar[d, "G"] \\
      \CD \ar[r, "\beta"'] & \CB \rlap{\ ,}
    \end{tikzcd} \]
  where $G$ is a strong contextual equivalence (resp. contextual fibration).

  We can consider the contextual image factorizations $\CC \to \cxlim \alpha \to \CA$ and $\CD \to \cxlim \beta \to \CB$ of the horizontal maps $\alpha$ and $\beta$, and the induced map $\cxlim \alpha \to \cxlim \beta$.
  Since $\cxlim \alpha \to \CA$ and $\cxlim \beta \to \CB$ are contextual isomorphisms, the induced map $\cxlim \alpha \to \cxlim \beta$ is also a strong contextual equivalence (resp. contextual fibration).
   \[ \begin{tikzcd}
       \CC \ar[ddd, "F"'] \ar[rr, "\alpha"] \ar[rd] && \CA \ar[ddd, "G"] \\
       & \cxlim \alpha \ar[d] \ar[ru] &\\
       & \cxlim \beta \ar[rd] &\\
      \CD \ar[rr, "\beta"'] \ar[ru] \ar[ruu, dashed] && \CB \rlap{\ ,}
    \end{tikzcd} \]
  Since $\CC$ and $\CD$ are contextual, their contextual extensions $\cxlim \alpha$ and $\cxlim \beta$ are also contextual.
  We can thus find a lift $\CD \to \cxlim \alpha$ in the above diagram.
  The composition $\CD \to \cxlim \alpha \to \CA$ is a solution to the original lifting problem.
\end{proof}

\begin{prop}\label{prop:sweq_props}
  Strong and weak contextual equivalences satisfy the following properties:
  \begin{enumerate}
    \item\label{itm:sweq_props_1} Isomorphisms are contextual isomorphisms.
    \item\label{itm:sweq_props_2} Contextual isomorphisms are strong contextual equivalences.
    \item\label{itm:sweq_props_3} Strong contextual equivalences are weak contextual equivalences.
    \item\label{itm:sweq_props_4} Strong contextual equivalences are closed under composition.
      % Conversely, if $F : \CC \to \CD$ and $G : \CD \to \CE$ are morphisms such that $F \cdot G : \CC \to \CE$ is a strong contextual equivalence, then $G$ is a strong contextual equivalence.
    \item\label{itm:sweq_props_5} Weak contextual equivalences are closed under composition.
    \item\label{itm:sweq_props_6} Given morphisms $F : \CC \to \CD$ and $G : \CD \to \CE$, if $G$ and $F \cdot G$ are weak contextual equivalence, then $F$ is also a weak contextual equivalence.
    \item\label{itm:sweq_props_7} When $\CD$ is contextual, then the remaining 2-out-of-3 condition also holds: if $F$ and $F \cdot G$ are weak contextual equivalences, then $G$ is a weak contextual equivalence.
    \item\label{itm:sweq_props_8} The class of weak contextual equivalences is closed under retracts.
    \item\label{itm:sweq_props_9} A morphism is a strong contextual equivalence if and only if it is a contextual fibration and a weak contextual equivalence.
  \end{enumerate}
\end{prop}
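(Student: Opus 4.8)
The plan is to verify the nine items by unfolding the definitions of contextual isomorphism and of strong and weak contextual equivalence, using two facts throughout: a morphism of models strictly preserves all the type-theoretic structure of the signature, in particular $\Id$, $\refl$, the identity type on $\UU$, and transport, so that identifications in the domain can be pushed forward along a morphism; and morphisms preserve context extension, sending $\Gamma \rhd A$ to $(F\,\Gamma) \rhd (F\,A)$. Items (1)--(4) are then immediate: an isomorphism of models acts invertibly on types and terms, hence is a contextual isomorphism; a contextual isomorphism acts bijectively, hence surjectively, so it is a strong contextual equivalence; a strong lift is a weak lift whose accompanying identification is $\refl$; and for a composite $F \cdot G$ of strong contextual equivalences one lifts a given type or term first along $G$ (over $F\,\Gamma$) and then along $F$ (over $\Gamma$).

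For item (5), given composable weak contextual equivalences $F, G$ and a type $C$ over $G(F\,\Gamma)$, first lift $C$ along $G$ to $C_1$ over $F\,\Gamma$ with a weak equality in $\Id\ \{\UU\}\ (G\,C_1)\ C$, then lift $C_1$ along $F$ to $C_0$ over $\Gamma$ with a weak equality in $\Id\ \{\UU\}\ (F\,C_0)\ C_1$; applying $G$ to the latter and composing the two identifications yields a weak equality in $\Id\ \{\UU\}\ ((F \cdot G)\,C_0)\ C$, and terms are handled identically. Item (6) rests on the observation that if, after applying $G$, the type $\Id\ \{\UU\}\ (F\,A_0)\ A$ (resp. $\Id\ \{F\,A\}\ (F\,a_0)\ a$) becomes inhabited, then it was already inhabited: indeed $G$ sends this identity type on the nose to $\Id\ \{\UU\}\ (G(F\,A_0))\ (G\,A)$ (resp. to the analogous one), so an inhabitant of the image is, by the \textbf{weak term lifting} property of $G$ applied to the identity type itself, the image of an inhabitant upstairs (the extra identification being discarded). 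Given then a type $A$ over $F\,\Gamma$, lift $G\,A$ over $(F\cdot G)\,\Gamma$ along $F \cdot G$ to obtain $A_0$ over $\Gamma$ with a weak equality in $\Id\ \{\UU\}\ (G(F\,A_0))\ (G\,A)$, and the observation upgrades this to one in $\Id\ \{\UU\}\ (F\,A_0)\ A$; the term case is analogous. Item (8) is proved the same way: if $G$ is a retract of $F$ via sections $i, j$ and retractions $r, s$ with $i \cdot F = G \cdot j$ and $r \cdot G = F \cdot s$, then push a type $A$ over $G\,\Gamma$ forward along $j$ to a type over $F(i\,\Gamma)$, lift it through $F$, pull the lift back along $r$, and transport the identification back along $s$, using the commutativities to identify $G(r(-))$ with $s(F(-))$ and $s(j\,A)$ with $A$.

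Item (7) is the main obstacle. Assuming $\CD$ contextual and $F, F \cdot G$ weak contextual equivalences, we must establish the \textbf{weak type} and \textbf{weak term lifting} properties for $G$ over an arbitrary object $\Delta$ of $\CD$, whereas the hypothesis on $F \cdot G$ only directly provides lifts over objects of the form $F\,\Gamma$. The plan is to use the contextuality of $\CD$ --- every object has a finite length and the objects of positive length are exactly those of the form $\Delta \rhd A$ --- to argue by induction on the length of $\Delta$ that $\Delta$ is connected, via a suitable contextual equivalence, to an object of the form $F\,\Gamma$: in the inductive step, having handled $\Delta$, one uses that $F$ is a weak contextual equivalence to replace the newly added type (which lives over $\Delta$, identified with some $F\,\Gamma$) by the image of a type $\tilde A$ over $\Gamma$ in $\CC$, so that $\Delta \rhd A$ becomes comparable to $F(\Gamma \rhd \tilde A)$. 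The weak lifting properties for $G$, which hold over $F\,\Gamma$-shaped objects precisely because $F \cdot G$ is a weak contextual equivalence, are then transferred to $\Delta$. The delicate point, and the bulk of the argument, is that $F$ only supplies lifts up to a \emph{weak} equality in $\Id\ \{\UU\}$, so the comparison between $\Delta \rhd A$ and $F(\Gamma \rhd \tilde A)$, and the transfer of lifting along it, must be carried out through transport together with the representability of the presheaves of terms, rather than by a naive isomorphism; this is the technical heart of the proposition.

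Finally, for item (9): one direction is that a strong contextual equivalence is both a weak contextual equivalence, by item (3), and a contextual fibration. For the latter, a lifting problem against the generator $J^{\Ty_n}$ amounts to a type $A$ over $\Gamma$ in the domain together with, over $F\,\Gamma$, a type $B$ and a term of $\Id\ \{\UU\}\ (F\,A)\ B$; \textbf{strong type lifting} produces $B_0$ over $\Gamma$ with $F\,B_0 = B$, and \textbf{strong term lifting} applied to the type $\Id\ \{\UU\}\ A\ B_0$, whose $F$-image is $\Id\ \{\UU\}\ (F\,A)\ B$, produces a strict lift of the identification, solving the lifting problem; the case of $J^{\Tm_n}$ is identical. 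Conversely, suppose $F$ is a contextual fibration and a weak contextual equivalence, and let $B$ be a type over $F\,\Gamma$. \textbf{Weak type lifting} gives $B'_0$ over $\Gamma$ with a term of $\Id\ \{\UU\}\ (F\,B'_0)\ B$; this datum is precisely a lifting problem against $J^{\Ty_n}$, and its solution provides $B_0$ over $\Gamma$ with $F\,B_0 = B$, i.e. a strict type lift; the term case uses $J^{\Tm_n}$ analogously, so $F$ is a strong contextual equivalence.
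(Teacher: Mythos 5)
Items (1)--(6), (8) and (9) of your proposal are correct and follow essentially the paper's own arguments: in (6) the decisive move --- applying the weak term lifting of $G$ to the identity type $\Id\ (F\,A_0)\ A$ itself and discarding the extra identification --- is exactly what the paper does, your retract computation in (8) is the paper's, and your treatment of (9) via lifting problems against $J^{\Ty_n}$ and $J^{\Tm_n}$ is the paper's argument spelled out.

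The problem is item (7), which you correctly identify as the main obstacle but then leave as a plan: the ``technical heart'' you defer is precisely the content of the proof, and your hint about how to carry it out points at the wrong tool. The mechanism the paper uses is the telescope machinery of \cref{ssec:lift_tele}: since $\CD$ is contextual, its objects are closed telescopes of types, and weak identity types lift to the telescope families, so the whole context $\Gamma$ of $\CD$ can be lifted along $F$ (by the induction you sketch) to a closed telescope $\Gamma_0$ of $\CC$ together with a \emph{path} $\alpha : \Id\ (F\,\Gamma_0)\ \Gamma$ in $\Tm^{\star}_{\CD}$; one then transports the given type $A$ over $\alpha$, lifts $\alpha^{\star}A$ along $F$ to some $A_0$ with a path $p$ --- note that this second use of $F$'s weak lifting is unavoidable, so your claim that the lifting properties of $G$ ``hold over $F\,\Gamma$-shaped objects precisely because $F\cdot G$ is a weak contextual equivalence'' is incorrect as stated, since the type $A$ over $F\,\Gamma$ need not lie in the image of $F$ --- then transports the term $a$ over $G\,\alpha$ and $G\,p$, applies the weak lifting of $F\cdot G$, and transports the resulting term back over $\alpha$ and $p$, the required identification of $G$-images following from the cancellation of these transports. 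Your appeal to ``representability of the presheaves of terms'' does not supply this; what carries the argument is transport over telescope-level paths coming from the lifted weak identity structure. As written, (7) describes a strategy compatible with the paper's but does not constitute a proof of it.
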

\begin{proof}\hfill
  \begin{description}
    \item[\ref{itm:sweq_props_1},\ref{itm:sweq_props_2},\ref{itm:sweq_props_3}] Straightforward.
    \item[\ref{itm:sweq_props_4}] It is sufficient to check the condition objectwise. It then reduces to the closure of surjective functions are under composition.
    \item[\ref{itm:sweq_props_5}] Assume that $F : \CC \to \CD$ and $G : \CD \to \CE$ are weak contextual equivalences.
      To prove that $F \cdot G : \CC \to \CE$ is also a weak contextual equivalence, we have to check the weak term lifting property (recall that the weak type lifting property is redundant in presence of universes).

      Take a context $\Gamma : \abs{\CC}$, a type $A : \abs{\Ty_{\CC}}_{\Gamma}$ and a term $a : \abs{\Tm_{\CE}}_{G\ (F\ \Gamma)}\ (G\ (F\ A))$.
      By the weak term lifting property of $G$, we have a lift $a' : \abs{\Ty_{\CD}}_{F\ \Gamma}\ (F\ A)$ and a path $p : \abs{\Tm_{\CE}}_{G\ (F\ \Gamma)}\ (\Id\ (G\ a')\ a)$.
      By the weak term lifting property of $F$, we have a term $a'' : \abs{\Tm_{\CC}}_{\Gamma}\ A$ and a path $q : \abs{\Tm_{\CD}}_{F\ \Gamma}\ (\Id\ (F\ a'')\ a')$.

      We then have $G\ q \cdot p : \abs{\Tm_{\CD}}_{G\ (F\ \Gamma)}\ (\Id\ (G\ (F\ a''))\ a)$, as needed.
    \item[\ref{itm:sweq_props_6}] Assume that $G : \CD \to \CE$ and $F \cdot G : \CC \to \CE$ are weak contextual equivalences.
      We check the weak term lifting properties for $F : \CC \to \CD$.

      Take $\Gamma : \abs{\CC}$, a type $A : \abs{\Ty_{\CC}}_{\Gamma}$ and a term $a : \abs{\Tm_{\CD}}_{F\ \Gamma}\ (F\ A)$.
      The weak term lifting property of $F \cdot G$ gives a term $a' : \abs{\Tm_{\CC}}_{\Gamma}\ A$ and a path $p : \abs{\Tm_{\CE}}_{G\ (F\ \alpha)}\ (\Id\ (G\ (F\ a'))\ (G\ A))$.
      The weak term lifting property of $G$ then gives a path $p' : \abs{\Tm_{\CD}}_{F\ \Gamma}\ (\Id\ (F\ A')\ A)$, as desired.
    \item[\ref{itm:sweq_props_7}] Since $\CD$ is contextual, we can identify its objects with closed telescopes of types.
      Given any closed telescope $\Gamma : \abs{\Ty^{\star}_{\CD}}$, we can lift it to a closed telescope $\Gamma_{0} : \abs{\Ty^{\star}_{\CC}}$ along with a path $\alpha : \abs{\Tm_{\CD}^{\star}}\ (\Id\ (F\ \Gamma_{0})\ \Gamma)$.

      Now take some type $A : \abs{\Ty_{\CD}}_{\Gamma}$ and term $a : \abs{\Tm_{\CE}}_{G\ \Gamma}\ (G\ A)$.
      We can transport $A$ over $\alpha$ to obtain $\alpha^{\star}\ A : \abs{\Ty_{\CD}}_{F\ \Gamma_{0}}$.
      Using the fact that $F$ is a weak equivalence again, we lift $(\alpha^{\star}\ A)$ to some $A_{0} : \abs{\Ty_{\CC}}_{\Gamma_{0}}$, along with an equality $p : \abs{\Tm_{\CD}}_{F\ \Gamma_{0}}\ (\Id\ (F\ A_{0})\ (\alpha^{\star}\ A))$.
      We can now transport $a$ over $G\ \alpha$ and $G\ p$ to obtain $a' : \abs{\Tm_{\CE}}_{G\ (F\ (\Gamma_{0}))}\ (G\ (F\ A_{0}))$.
      Now using the weak lifting property of $F \cdot G$, we get $a'' : \abs{\Tm_{\CC}}_{\Gamma_{0}}\ A_{0}$ and a path $q : \abs{\Tm_{\CE}}_{G\ (F\ \Gamma_{0})}\ (\Id\ (G\ (F\ a''))\ a')$.
      Transporting $F\ a' : \abs{\Tm_{\CD}}_{F\ \Gamma_{0}}\ (F\ A_{0})$ over $\alpha$ and $p$, we obtain our desired lift $a_{0} : \abs{\Tm_{\CD}}_{\Gamma}\ A$.
      The fact that it is indeed a lift of $a$ can be derived from $q$ and the fact that the transports over $G\ \alpha$, $G\ p$, $\alpha$ and $p$ cancel each others.
    \item[\ref{itm:sweq_props_8}]
      Consider the following commutative diagram, where $F$ is a weak contextual equivalence, $s_{1} \cdot r_{1} = \id$ and $s_{2} \cdot r_{2} = \id$.
      \[ \begin{tikzcd}
          \CA \ar[d, "G"] \ar[r, "s_{1}"] & \CC \ar[d, "F", "{\rotatebox{90}{$\sim$}}"'] \ar[r, "r_{1}"] & \CA \ar[d ,"G"] \\
          \CB \ar[r, "s_{2}"] & \CD \ar[r, "r_{2}"] & \CB
        \end{tikzcd} \]
      We have to prove that $G$ is also a weak contextual equivalence.

      Take an object $\Gamma : \abs{\CA}$, a type $A : \abs{\Ty_{\CA}}_{\Gamma}$ and a term $a : \abs{\Tm_{\CB}}_{G\ \Gamma}\ A$.
      We can find a weak lift $a' : \abs{\Tm_{\CC}}_{s_{1}\ \Gamma}\ (s_{1}\ A)$ along with a path between $F\ a'$ and $s_{2}\ a$.
      We then have $r_{1}\ a' : \abs{\Tm_{\CA}}_{r_{1}\ (s_{1}\ \Gamma)}\ (r_{1}\ (s_{1}\ A))$ and a path between $r_{2}\ (F\ a')$ and $r_{2}\ (s_{2}\ a)$.
      We can see that this is our desired weak lift: we have $r_{1}\ A' : \abs{\Tm_{\CA}}_{\Gamma}\ A$ and the path is between $G\ (r_{1}\ A')$ and $A$.
    \item[\ref{itm:sweq_props_9}]
      A strong contextual equivalence is clearly both a contextual fibration and a weak contextual equivalence.

      For the reverse inclusion, take a morphism $F : \CC \to \CD$ that is both a contextual fibration and a weak contextual equivalence.
      We show that $F$ satisfies the strong term lifting property.
      Take an object $\Gamma : \abs{\CC}$, a type $A : \abs{\Ty_{\CC}}_{\Gamma}$ and a term $a : \abs{\Tm_{\CD}}_{F\ \Gamma}\ (F\ A)$.
      From the weak term lifting property, we obtain a term $a' : \abs{\Tm_{\CC}}_{\Gamma}\ A$ along with a path $p$ between $a$ and $F\ a'$.
      Since $F$ is a fibration, we can lift this path to a path $p'$ in $\CC$ between some term $a''$ and the term $a'$, such that $F\ a'' = a$ and $F\ p' = p$.
      The term $a''$ is then a strong lift of $a$.
  \end{description}
\end{proof}

The classes of cofibrations and strong contextual equivalences are completely determined by the dependent sorts of the presentation of a type theory by a signature.
In this paper, we mainly consider theories with two families of sorts: the types and terms for each universe level.
This is why the set $I$ of generating cofibrations contains exactly the families of maps $I^{\Ty_{n}}$ and $I^{\Tm_{n}}$.
For type theories with richer contextual structures, such as two-level type theories (with additional sorts for outer types and terms) or cubical type theories (with an additional sort for the interval), the set $I$ would contain an additional element for each additional sort.

The classes of weak contextual equivalences, trivial cofibrations and contextual fibrations are however not completely determined by the presentation of the type theory.
Indeed they require choosing a suitable notion of equivalence or weak equality for each sort of the theory.
At the level of model structures, this would correspond to the choice of a relative cylinder object for each map in $I$.
In this paper we only consider the notion of weak equality provided by the identity types, but other choices may be possible.
For example, we could compare types up to equivalence, instead of comparing their codes up to equality.
The fact that a chosen notion of equivalence/weak equality is good for some theory can be tested by the fact that the classes of weak contextual equivalences, trivial fibrations and fibrations define some left semi-model structure on the category of contextual models of that type theory.

\begin{defi}\label{def:wb_tcof}
  We say that a type theory $\Th$ extending the theory of weak identity types is \defemph{semi-model} if the classes of weak contextual equivalences, trivial fibrations and fibrations constitute a left-semi model structure on the category $\CMod^{\cxl}_{\Th}$ of contextual models of $\Th$.
  \defiEnd
\end{defi}

We don't recall the definition of left semi-model structure in this paper, the only consequence of its definition that we use is the following proposition (altough we conjecture that in this setting, this consequence is sufficient to ensure that $\Th$ is semi-model).
\begin{prop}
  If a type theory $\Th$ is semi-model, then given any cofibrant and contextual model $\CC$, any trivial cofibration $j : \CMod_{\Th}(\CC \to \CD)$ under $\CC$ is a weak equivalence.
  \qed
\end{prop}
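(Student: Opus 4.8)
The plan is to run the classical retract argument for left semi-model categories, working throughout in $\CMod^{\cxl}_{\Th}$. Fix a cofibrant and contextual model $\CC$ and a trivial cofibration $j : \CC \to \CD$, so that by definition $j$ has the left lifting property with respect to all contextual fibrations; the goal is to show that $j$ is a weak contextual equivalence.

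Since the domain $\CC$ is cofibrant, the factorization axiom of a left semi-model structure applies to $j$, and I would factor it as $j = k \cdot q$ with $k : \CC \to \CE$ a trivial cofibration and $q : \CE \to \CD$ a contextual fibration; in particular $k$ is a weak contextual equivalence. Then I would solve the lifting problem
\[ \begin{tikzcd}
    \CC \ar[d, "j"'] \ar[r, "k"] & \CE \ar[d, "q"] \\
    \CD \ar[r, "\id"'] \ar[ru, dashed, "s"] & \CD \rlap{\ ,}
  \end{tikzcd} \]
which commutes because $k \cdot q = j$; since $j$ has the left lifting property against the fibration $q$, there is a diagonal filler $s : \CD \to \CE$ with $j \cdot s = k$ and $s \cdot q = \id$.

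These two identities exhibit $j$ as a retract of $k$ in the arrow category of $\CMod^{\cxl}_{\Th}$, via the diagram
\[ \begin{tikzcd}
    \CC \ar[d, "j"'] \ar[r, "\id"] & \CC \ar[d, "k"] \ar[r, "\id"] & \CC \ar[d, "j"'] \\
    \CD \ar[r, "s"] & \CE \ar[r, "q"] & \CD \rlap{\ ,}
  \end{tikzcd} \]
whose rows compose to identities (the bottom because $s \cdot q = \id$), whose left square commutes because $j \cdot s = k$, and whose right square commutes because $k \cdot q = j$. As weak contextual equivalences are closed under retracts (\cref{prop:sweq_props}) and $k$ is one, it follows that $j$ is a weak contextual equivalence, as required. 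The one genuinely nontrivial step is the factorization $j = k \cdot q$: this is precisely where both the hypothesis that $\Th$ is semi-model and the cofibrancy of $\CC$ are needed (in a bare relative category one could not produce the intermediate object $\CE$), so that is the step to treat carefully; everything afterward is formal diagram chasing using only the closure properties recorded in \cref{prop:sweq_props}.
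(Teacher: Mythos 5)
Your proof is correct and is precisely the standard retract argument the paper implicitly relies on: the proposition is stated without proof as a known consequence of the left semi-model axioms, and your chain — factor $j$ (possible because $\CC$ is cofibrant) as a trivial cofibration $k$ followed by a fibration $q$, lift against $q$ using the defining lifting property of $j$, and conclude by closure of weak contextual equivalences under retracts (\cref{prop:sweq_props}) — is exactly that consequence spelled out. The only point worth noting is that the semi-model structure of \cref{def:wb_tcof} lives on $\CMod^{\cxl}_{\Th}$, so for a possibly non-contextual codomain $\CD$ one should first factor $j$ through a $J$-cellular (hence contextual) extension of $\CC$ and apply your argument to that factor; in the paper's applications $\CD$ is contextual and your argument applies verbatim.
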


The main result of \cite{HoThTT} is the following theorem.
\begin{thm*}[{\cite[Theorem 6.9]{HoThTT}}]
  The classes of weak contextual equivalences, trivial fibrations and fibrations define a left-semi model structure on the category of contextual CwFs equipped with strong $\Id$-, $\Unit$-, $\Sigma$- (and optionally $\Pi$-) type structures.
\end{thm*}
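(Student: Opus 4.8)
This is \cite[Theorem 6.9]{HoThTT}; I recall the shape of the argument, since it motivates the constructions that follow. The plan is to present this structure as a cofibrantly generated left semi-model structure with generating cofibrations $I$ and generating trivial cofibrations $J$. First I would run the small object argument for both $I$ and $J$, obtaining functorial factorizations of every morphism into an $I$-cell complex followed by an $I$-injective map, and into a $J$-cell complex followed by a $J$-injective map. By construction the $I$-injective maps are exactly the strong contextual equivalences (our candidate trivial fibrations) and the $J$-injective maps are exactly the contextual fibrations; dually, the cofibrations and trivial cofibrations are the retracts of $I$- and $J$-cell complexes. Since each generator in $J$ factors as a pair of basic $I$-cellular extensions — adjoin $\bm{b}$, then adjoin the path $\bm{p} : \Id\ \{A\}\ \bm{a}\ \bm{b}$; and likewise at the type level adjoin $\bm{B}$, then $\bm{e}$ — every trivial cofibration is a cofibration.

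Next I would verify the conditions that upgrade these two weak factorization systems to a left semi-model structure. That every object is fibrant is immediate: for a type $A$ over $\Gamma$ in any model, the pair $(A,\refl)$ solves the lifting problem against $J^{\Ty_{n}}$, and $(a,\refl)$ solves the one against $J^{\Tm_{n}}$, so $\CC \to \diamond$ is always a contextual fibration. Working in $\CMod^{\cxl}_{\Th}$, where every object is contextual, the full $2$-out-of-$3$ property for weak equivalences, their closure under retracts, and the coincidence of the trivial fibrations with the maps that are simultaneously fibrations and weak equivalences are all supplied by \cref{prop:sweq_props}. It then remains to check the acyclicity of trivial cofibrations: every $J$-cell complex is a weak contextual equivalence — whence, by closure of weak equivalences under retracts, so is every trivial cofibration.

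Acyclicity is the heart of the proof and the step I expect to be the main obstacle. One reduces it, using closure of weak contextual equivalences under transfinite composition (a routine pointwise-colimit argument relying on the finitarity of the signature), to showing that each basic $J$-cellular extension $\CC \to \CC[\Gamma \vdash \bm{p} : \Id\ \{A\}\ a\ \bm{b}]$, and its type-level analogue $\CC \to \CC[\Gamma \vdash \bm{e} : \Id\ \{\UU_{n}\}\ A\ \bm{B}]$, is a weak contextual equivalence. This is where the hypotheses enter essentially. Because strong $\Sigma$-types make the singleton $\Singl\ a \triangleq (y : \Tm\ A) \times \Tm\ (\Id\ a\ y)$ a genuine type, it becomes a context one can eliminate over; and because the $\Id$-eliminator is \emph{strong}, it admits — together with strong $\Sigma$-types — the Frobenius form \cite{IdTypeFS,IdWFSCauchy,IsaevIxTT} valid over arbitrary telescopes of parameters, so that every type or term of the extension built from $(\bm{b},\bm{p})$ can be transported back, up to a canonical path, to the fibre over the contractible point $(a,\refl)$. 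This produces the required weak lifts on the nose, with no coherence obstruction precisely because the $\beta$-rule is strict. Assembling the functorial factorizations, \cref{prop:sweq_props}, fibrancy of every object, and the acyclicity of $J$-cell complexes, the standard recognition criterion for cofibrantly generated left semi-model categories (in the style of \cite{IsaevMS}) yields the asserted left semi-model structure; the optional strong $\Pi$-type structure plays no role in any of these steps.
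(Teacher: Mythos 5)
This statement is not proved in the paper at all: it is quoted verbatim as \cite[Theorem 6.9]{HoThTT}, so there is no internal proof to compare your sketch against. Judged on its own terms, your outline of the formal skeleton is the standard one and matches what the surrounding text of this paper sets up: two weak factorization systems from the small object argument on $I$ and $J$, identification of $I$-injectives with strong contextual equivalences and $J$-injectives with contextual fibrations, closure and $2$-out-of-$3$ properties from \cref{prop:sweq_props} (valid in $\CMod^{\cxl}_{\Th}$ since all objects there are contextual), fibrancy of every object, and a reduction of everything to acyclicity of $J$-cellular maps. Note, though, that the cited theorem concerns CwFs \emph{without} universe hierarchies, where types are compared up to equivalence rather than up to $\Id_{\UU_{n}}$ of codes, so the universe-based generators $J^{\Ty_{n}}$ you use belong to this paper's adaptation rather than to the statement being proved.

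The genuine gap is in the acyclicity step, which you correctly identify as the heart of the matter but then dispatch too quickly. You claim that \emph{every} basic $J$-cellular extension $\CC \to \CC[\Gamma \vdash \bm{b} : A, \bm{p} : \Id\ a\ \bm{b}]$, over an arbitrary context $\Gamma$ and an arbitrary model $\CC$, is a weak contextual equivalence, using only strong $\Sigma$- and $\Id$-types and a single ``transport back to the fibre over $(a,\refl)$''. This does not work as stated: an arbitrary type or term of the extension uses the generators $\bm{b},\bm{p}$ at many substituted instances, and producing a weak lift requires an induction over the whole cellular extension (a displayed-model/logical-relations style argument, or an induction exploiting freeness of the source), not one application of the Frobenius eliminator. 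Compare this paper's own treatment: without $\Pi$ it only obtains the closed-context case (\cref{prop:weq_singl}), and \cref{prop:weq_pi_singl} needs $\Pi$-types with a strict $\beta$-rule to internalize the dependence on $\Gamma$ — so your claim that ``$\Pi$ plays no role'' sits uneasily with your own argument. Moreover, if unrestricted acyclicity held, then together with \cref{prop:sweq_props}(9) and the two factorizations it would yield a full cofibrantly generated model structure, which is strictly stronger than \cite[Theorem 6.9]{HoThTT}; the result is a \emph{left semi-}model structure precisely because the acyclicity condition is only established for trivial cofibrations with cofibrant (freely generated) source, where the cellular structure of the source is used in the induction (cf.\ the proposition following \cref{def:wb_tcof}). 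Your proof should therefore either restrict acyclicity to cofibrant sources and supply the induction over cells, or explain how the unrestricted statement could be obtained — which is not known.
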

Note that \cite[Theorem 6.9]{HoThTT} applies to CwFs without universe hierarchies, for which types are compared up to equivalence rather than equality of codes.

\subsection{Some useful weak contextual equivalences}\label{ssec:useful_weq}

\begin{prop}\label{prop:contr_weq}
  Let $\CC$ be any contextual model, and let $A : \abs{\Ty^{\star}_{\CC}}_{\diamond}$ be a closed telescope of types.
  % , along with a point $a_{0} : \abs{\Tm^{\star}_{\CC}}_{\diamond}\ A$.
  % The map $j : \CC \to \CC[\bm{a} : A]$ admits a retraction $r : \CC[\bm{a} : A] \to \CC$ that sends $\bm{a}$ to the point $a_{0}$.
 
  The following conditions are equivalent:
  \begin{enumerate}
    \item\label{itm:contr_weq_1} The type $A$ is contractible.
    \item\label{itm:contr_weq_2} The map $j : \CC \to \CC[\bm{a} : A]$ is a weak contextual equivalence.
    \item\label{itm:contr_weq_3} The map $j : \CC \to \CC[\bm{a} : A]$ admits a retraction $r : \CC[\bm{a} : A] \to \CC$ that is a weak contextual equivalence.
  \end{enumerate}
\end{prop}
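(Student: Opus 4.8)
The plan is to establish the cycle of implications $(2) \Rightarrow (1) \Rightarrow (3) \Rightarrow (2)$. The organizing observation is that, since $A$ is a closed telescope and $\CC$ is contextual, the basic $I$-cellular extension $\CC[\bm a : A]$ is canonically isomorphic to the contextual slice $\CC \sslash (\diamond \rhd A)$ — morphisms out of either one classify a morphism out of $\CC$ together with a closed telescope of terms of (the image of) $A$. Under this isomorphism the map $j$ becomes the weakening functor $\CC \to \CC \sslash (\diamond \rhd A)$ given by pullback along $\mathbf{p} : \diamond \rhd A \to \diamond$, and the generating telescope $\bm a$ becomes the generic element $\mathbf{q} : \Tm^\star(A[\mathbf{p}])$ of the context $\diamond \rhd A$. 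This identification is a routine consequence of the universal properties of basic cellular extensions and of the contextual core.

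$(2) \Rightarrow (1)$: assume $j$ is a weak contextual equivalence. Apply the weak term lifting property of $j$, over the terminal object, to its own generating element $\bm a$: we obtain a closed telescope $c : \Tm^\star_\CC A$ over $\diamond$ together with a telescope of paths $\Id\ (j\,c)\ \bm a$ in $\CC[\bm a : A]$. Transported through the identification above, these paths lie in $\CC \sslash (\diamond \rhd A)$ over its terminal object, i.e. they unfold to a term of $\Id\ \{A[\mathbf{p}]\}\ (c[\mathbf{p}])\ \mathbf{q}$ over $\diamond \rhd A$ in $\CC$. That is exactly a proof that the telescope $A$ is contractible with centre $c$. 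So weak term lifting of $j$ at $\bm a$ is, literally, the data of contractibility.

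$(1) \Rightarrow (3)$: given a centre $c$ and a contraction $h$, the universal property of $\CC[\bm a : A]$ yields a morphism $r : \CC[\bm a : A] \to \CC$ with $r|_{\CC} = \id$ and $r(\bm a) = c$, so that $j \cdot r = \id_\CC$ and $r$ retracts $j$. In the slice picture $r$ is the substitution functor along the section $\langle \id_\diamond, c\rangle$ of $\mathbf{p}$. To see that $r$ is a weak contextual equivalence one checks the weak type- and term-lifting properties directly: a type or term over $r\,\Gamma'$ in $\CC$ is lifted by pulling its $j$-image back along the substitution $\Gamma' \to r\,\Gamma'$ built from $h$ (which connects the generic $A$-chunk of $\Gamma'$ to the constant $c$), and applying $r$ returns the original datum up to weak equality because the resulting round-trip substitution is weakly the identity; the last point uses that $h$ instantiated at the centre is weakly reflexive, which one derives from contractibility — contractible telescopes have contractible path spaces — with the parametrized identity eliminator of \cref{ssec:param_id_elim}. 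Finally $(3) \Rightarrow (2)$ is immediate: $j \cdot r = \id_\CC$ is a (strong, hence weak) contextual equivalence and $r$ is one by hypothesis, so \cref{prop:sweq_props}~(\ref{itm:sweq_props_6}) gives that $j$ is a weak contextual equivalence.

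The step I expect to be the main obstacle is the direct verification in $(1) \Rightarrow (3)$ that the round-trip substitutions built from $h$ cancel up to weak equality. This is exactly where the lack of a strict computation rule for identity types is felt: one first has to establish, using the parametrized eliminator, that the path spaces of a contractible telescope are themselves contractible, so that the naturality coherences of the transports along $h$ hold up to higher weak equality. Everything else — the identification $\CC[\bm a : A] \cong \CC \sslash (\diamond \rhd A)$ and the two-out-of-three bookkeeping — is formal.
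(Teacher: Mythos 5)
Your proof is correct and follows essentially the same route as the paper's: the same cycle of implications, the same retraction $r$ sending $\bm{a}$ to the centre of contraction, the weak lifting property of $r$ verified by transporting along the contraction via the parametrized eliminator of \cref{ssec:param_id_elim}, and the $2$-out-of-$3$ bookkeeping for the remaining implication. The only caveat is your opening claim that $\CC[\bm{a} : A]$ is canonically \emph{isomorphic} to $\CC \sslash (\diamond \rhd A)$: for a general contextual model $\CC$ the slice may contain strictly more types and terms than the free extension (it is not freely generated), so what is available is only the canonical comparison morphism sending $\bm{a}$ to the generic element $\mathbf{q}$ --- which is all that your argument, like the paper's internal-language formulation of the lifting problem, actually uses.
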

\begin{proof}
  By $2$-out-of-$3$, (\ref{itm:contr_weq_2}) and (\ref{itm:contr_weq_3}) are equivalent.
  It is easy to see that (\ref{itm:contr_weq_2}) implies that $A$ is contractible.

  We now assume that $A$ is contractible, and check that the map $r : \CC[\bm{a} : A] \to \CC$ that maps $\bm{a}$ to the center of contraction $a_{0}$ of $A$ satisfies the weak term lifting property.
  Let $\Gamma$ be a context of $\CC[\bm{a} : A]$, $B$ be a type over $\Gamma$ and $b_{0}$ be a term of type $r\ B$.
  Since $\CC[\bm{a} : A]$ is contextual, we can view $\Gamma$ as a telescope of types.

  We can now bring everything into the internal language of $\CPsh\ \CC$.
  We have $A : \Ty^{\star}_{\CC}$, its center of contraction $a_{0} : \Tm^{\star}_{\CC}\ A$, $\Gamma : \Tm^{\star}_{\CC}\ A \to \Ty^{\star}_{\CC}$, $B : (a : \Tm^{\star}\ A) \to \Tm^{\star}_{\CC}\ (\Gamma\ a) \to \Ty_{\CC}$ and $b_{0} : (\gamma : \Tm^{\star}\ (\Gamma\ a_{0})) \to \Tm_{\CC}\ (B\ a_{0}\ \gamma)$, and we need to construct some $b : (a : \Tm^{\star}_{\CC}\ A) (\gamma : \Tm^{\star}_{\Gamma}\ a) \to \Tm\ (B\ a\ \gamma)$ such that for any $\gamma : \Tm^{\star}_{\CC}\ (\Gamma\ a_{0})$, we have an element of $\Id\ (b\ a_{0}\ \gamma)\ (b_{0}\ \gamma)$.

  This weak lift $b$ can be obtained from the parametrized identity type eliminator derived in \cref{ssec:param_id_elim}, by transporting over paths obtained from the contractibility of $A$.
\end{proof}

\begin{prop}\label{prop:weq_singl}
  Given any contextual model $\CC$, closed type $A : \abs{\Ty_{\CC}}_{\diamond}$ and closed term $a : \abs{\Tm_{\CC}}_{\diamond}\ A$, the basic $J$-cellular extension
  \[ \CC \to \CC[\bm{b} : A, \bm{p} : \Id\ a\ \bm{b}] \] is a weak contextual equivalence.
\end{prop}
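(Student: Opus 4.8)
The plan is to recognise this basic $J$-cellular extension as an extension of $\CC$ by a contractible closed telescope, and then to invoke \cref{prop:contr_weq}.

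First I would rewrite the target. By its universal property, $\CC[\bm{b} : A, \bm{p} : \Id\ a\ \bm{b}]$ is canonically isomorphic under $\CC$ to the extension $\CC \to \CC[\bm{c} : \Singl\ a]$ of $\CC$ by a single fresh term of the closed telescope of types $\Singl\ a \triangleq (A,\ (y : \Tm\ A) \mapsto \Id\ a\ y)$: a morphism out of either model is the same data as a morphism $F$ out of $\CC$ together with a term $b$ of $F\,A$ and a term $p$ of $\Id\ (F\,a)\ b$. Since a contextual isomorphism is a weak contextual equivalence and these satisfy $2$-out-of-$3$ (\cref{prop:sweq_props}), it suffices to prove that $\CC \to \CC[\bm{c} : \Singl\ a]$ is a weak contextual equivalence.

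By \cref{prop:contr_weq}, this reduces to showing that the closed telescope $\Singl\ a$ is contractible, where contractibility is taken with respect to the weak identity type structure that \cref{ssec:lift_tele} lifts to telescope families. The centre of contraction is the closed term $(a, \refl\ a)$. Working in the internal language of $\CPsh\ \CC$, for $(y, p)$ with $y : \Tm\ A$ and $p : \Tm\ (\Id\ a\ y)$ I would build a term of $\Id_{\Singl\ a}\ (a, \refl\ a)\ (y, p)$ by based path induction on $p$ (valued, if need be, in the telescope family, as provided by \cref{ssec:lift_tele}): in the base case $p = \refl\ a$ one needs an element of $\Id_{\Singl\ a}\ (a, \refl\ a)\ (a, \refl\ a)$, which by the description of the lifted identity-type introduction structure in \cref{con:lift_wid_intro} is $\refl_{\Singl\ a}\ (a, \refl\ a)$. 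Thus $\Singl\ a$ is contractible and \cref{prop:contr_weq} yields the claim.

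The step that needs attention, rather than any real difficulty, is verifying that the contraction of $\Singl\ a$ can be produced using only the \emph{weak} $\beta$-rule of the identity types; this is indeed the case, since the contraction is a path and not a strict equality, so the weakness of $\J_{\beta}$ (and hence of the derived transport) is harmless. The only other thing to keep track of carefully is the identification of the two-step term-extension $\CC[\bm{b} : A, \bm{p} : \Id\ a\ \bm{b}]$ with the single telescope-extension $\CC[\bm{c} : \Singl\ a]$, so that \cref{prop:contr_weq} applies on the nose.
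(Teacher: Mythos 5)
Your proof is correct and matches the paper's: the paper also proves this as a direct application of \cref{prop:contr_weq}, using the contractibility of the singleton telescope $(b : A) \times \Id\ a\ b$. Your extra details (identifying the two-step extension with the telescope extension, and checking the contraction only needs the weak eliminator lifted to telescopes) are exactly the routine verifications the paper leaves implicit.
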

\begin{proof}
  Direct application of \cref{prop:contr_weq}, using the fact that the type $(b : A) \times (p : \Id\ a\ b)$ is contractible.
\end{proof}

\begin{prop}\label{prop:weq_pi_singl}
  If a type theory $\Th$ has $\Pi$-types with a strict $\beta$-rule, then given any contextual model $\CC$, any basic $J$-cellular extension
  \[ \CC \to \CC[(\gamma : \Gamma) \vdash \bm{b}(\gamma) : A, (\gamma : \Gamma) \vdash \bm{p}(\gamma) : \Id\ a(\gamma)\ \bm{b}(\gamma)] \] is a weak contextual equivalence.
\end{prop}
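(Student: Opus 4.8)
The plan is to reduce the statement to \cref{prop:contr_weq} by exhibiting the given basic $J$-cellular extension as a retract, in the arrow category of $\CMod_{\Th}$, of the extension of $\CC$ by one contractible closed telescope. Write $\CC_{1}$ for $\CC[(\gamma : \Gamma) \vdash \bm{b}(\gamma) : A,\ (\gamma : \Gamma) \vdash \bm{p}(\gamma) : \Id\ a(\gamma)\ \bm{b}(\gamma)]$ and $j_{1} : \CC \to \CC_{1}$ for the extension map. Using the constructions of \cref{ssec:lift_tele}, which lift the $\Pi$-type structure (including its strict $\beta$-rule) to families of telescopes, I would form the closed telescope
\[ C \triangleq \Pi\ \Gamma\ \bigl(\gamma \mapsto (b : A) \times (p : \Id\ a(\gamma)\ b)\bigr), \]
a $\Pi$-type whose arity is the telescope $\Gamma$ and whose codomain is the singleton telescope $\Singl_{a(\gamma)}$, and set $\CC_{3} \triangleq \CC[\bm{c} : C]$, with extension map $j_{3} : \CC \to \CC_{3}$.

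First I would build two morphisms under $\CC$. The morphism $\phi : \CC_{1} \to \CC_{3}$ sends $\bm{b}(\gamma) \mapsto \fst\,(\app\ \bm{c}\ \gamma)$ and $\bm{p}(\gamma) \mapsto \snd\,(\app\ \bm{c}\ \gamma)$; the morphism $\psi : \CC_{3} \to \CC_{1}$ sends $\bm{c} \mapsto \lam\,\bigl(\gamma \mapsto (\bm{b}(\gamma), \bm{p}(\gamma))\bigr)$. Both are well-defined morphisms of $\CMod_{\Th}$. The composite $\phi \cdot \psi : \CC_{1} \to \CC_{1}$ then sends $\bm{b}(\gamma)$ to $\fst\,\bigl(\app\,(\lam\,(\gamma \mapsto (\bm{b}(\gamma), \bm{p}(\gamma))))\ \gamma\bigr)$, which equals $\bm{b}(\gamma)$ by the strict $\beta$-rule of $\Pi$-types together with the (strict) projection law for telescopes, and likewise sends $\bm{p}(\gamma)$ to $\bm{p}(\gamma)$; since it is moreover a morphism under $\CC$ fixing the generators $\bm{b}$ and $\bm{p}$, the universal property of $\CC_{1}$ forces $\phi \cdot \psi = \id_{\CC_{1}}$. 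As $\phi$ and $\psi$ are over $\CC$ we have $j_{1} \cdot \phi = j_{3}$ and $j_{3} \cdot \psi = j_{1}$, so this data exhibits $j_{1}$ as a retract of $j_{3}$.

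Next I would verify that $C$ is contractible. Each singleton telescope $\Singl_{a(\gamma)}$ is contractible, being a based path space with center $(a(\gamma), \refl)$ (by based path induction and $\J_{\beta}$), and a $\Pi$-type of a family of contractible telescopes is again contractible: this follows from function extensionality, which is part of the weak $\Pi$-type structure, together with the strict $\beta$-rule, by an induction on the length of the codomain telescope that reduces to the single-type case. Hence by \cref{prop:contr_weq} the map $j_{3}$ is a weak contextual equivalence, and since the class of weak contextual equivalences is closed under retracts (item (\ref{itm:sweq_props_8}) of \cref{prop:sweq_props}), so is $j_{1}$, which is the claim.

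The one place where the hypothesis of a strict $\beta$-rule is essential is the identity $\phi \cdot \psi = \id_{\CC_{1}}$; with a merely weak $\beta$-rule one would obtain only that $\phi \cdot \psi$ is internally homotopic to the identity rather than equal to it, so this strategy genuinely requires the strict $\beta$-rule. One must also be careful to invoke the telescope-level $\Pi$-type structure and its $\beta$-rule exactly as furnished by \cref{ssec:lift_tele}. The contractibility bookkeeping for $C$ is routine, but does require the telescope-level formulation of the statement that a product of contractibles is contractible.
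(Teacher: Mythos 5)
Your proof is correct and follows essentially the same route as the paper's: there, too, the extension is exhibited (using the strict $\beta$-rule for the retraction identity) as a retract of the extension $\CC \to \CC[\bm{c} : \Pi\ \Gamma\ A,\ \bm{q} : \Pi\ \Gamma\ (\gamma \mapsto \Id\ a(\gamma)\ (\app\ \bm{c}\ \gamma))]$, which is a weak contextual equivalence by \cref{prop:contr_weq} because that type is contractible by function extensionality, and one concludes by retract-closure of weak contextual equivalences (\cref{prop:sweq_props}). Your only deviations are bookkeeping: you package the two $\Pi$-generators as a single telescope-codomain $\Pi$-type via \cref{ssec:lift_tele} and establish its contractibility through ``a $\Pi$-type of contractibles is contractible,'' where the paper instead invokes the singleton-type characterization of function extensionality directly.
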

\begin{proof}
  We pose $\CC_{0} \triangleq \CC[(\gamma : \Gamma) \vdash \bm{b}(\gamma) : A, (\gamma : \Gamma) \vdash \bm{p}(\gamma) : \Id\ a(\gamma)\ \bm{b}(\gamma)]$ and $j_{0} : \CC \to \CC_{0}$.
 
  Since $\CC$ is contextual, we can see the context $\Gamma$ as a telescope of types.
  We have shown in \cref{ssec:lift_tele} that the $\Pi$-types of $\CC$ can be lifted to the families of telescopes.
  In particular, we can form $\Pi$-types with $\Gamma$ as the domain.

  We consider the model $\CC_{1} \triangleq \CC[\bm{c} : \Pi\ \Gamma\ A, \bm{q} : \Pi\ \Gamma\ (\gamma \mapsto \Id\ a(\gamma)\ (\app\ \bm{c}\ \gamma)) ]$.
  The contractibility of $(c : \Pi\ \Gamma\ A) \times (q : \Pi\ \Gamma\ (\gamma \mapsto \Id\ a(\gamma)\ (\app\ c\ \gamma)))$ is one of the equivalent characterizations of function extensionality.
  By \cref{prop:contr_weq}, the map $j_{1} : \CC \to \CC_{1}$ is thus a weak equivalence.

  We have a map $F : \CC_{0} \to \CC_{1}$ that sends $\bm{b}$ to $\app\ \bm{c}$ and $\bm{p}$ to $\app\ \bm{q}$.
  We also have a map $G : \CC_{1} \to \CC_{0}$ that sends $\bm{c}$ to $\lam\ \bm{b}$ and $\bm{q}$ to $\lam\ \bm{p}$.
  The fact that $\bm{q}$ can be sent to $\lam\ \bm{p}$ relies on the strict $\beta$-rule.

  The strict $\beta$-rule also implies that $G$ is a retraction of $F$, i.e. that $F \cdot G = \id$.

  This implies that $j_{0}$ is a retraction of $j_{1}$: the following diagram commutes.
  \[ \begin{tikzcd}
      \CC \ar[d, "j_{0}"] \ar[r, equal] & \CC \ar[d, "j_{1}"] \ar[r, equal] & \CC \ar[d, "j_{0}"] \\
      \CC_{0} \ar[r, "F"] & \CC_{1} \ar[r, "G"] & \CC_{0}
    \end{tikzcd} \]
  Since weak equivalence are closed under retracts, $j_{0}$ is a weak equivalence.
\end{proof}

General recognition theorems for left semi-model structures and \cref{prop:weq_pi_singl} should imply that any type theory over the theory of cumulative CwFs with weak identity types and $\Pi$-types with a strict $\beta$-rule is semi-model.

\subsection{Cellular models}\label{ssec:cellular_models}

The (cofibration, strong contextual equivalence) weak factorization system constructed by the small object argument gives us a way to replace any model of a theory $\Th$ by an equivalent cellular model.
The cellular models are those that are freely generated by a collection of types and terms.
This is convenient, since many theorems that are traditionally established for the initial model (such as normalization, ...) can actually be expected to hold for all cellular models, which share the syntactic nature of the initial model.
In this subsection we introduce some notations and recall some of the basic properties of cellular models.
We work with a fixed type theory signature $\Th$ extending the theory of weak identity types.

We use a coinductive definition of cellular extensions, i.e. extensions of a model by a collection of new types and new terms.
Because we work with cumulative CwFs with universes, it is sufficient to consider extensions by a collection of new terms.
\begin{defi}\label{def:cellular_models}
  A \defemph{cellular extension} $X$ over a model $\CC$ of $\Th$ consists of a family $X_{0}^{\Tm_{n}} : (\Gamma : \abs{\CC}) \to \abs{\Ty_{n,\CC}}\ \Gamma \to \SSet$ over types of $\CC$ for each universe level $n$, and a further cellular extension ${\uparrow}X$ over the model $\CC[X_{0}]$, which is defined as the free extension \[ \CC[X_{0}] \triangleq \CC[\{ \Gamma \vdash \bm{a} : A \mid n \in \Nat, a \in X_{0}^{\Tm_{n}}\ \Gamma\ A \}], \]
  or equivalently as the following pushout
  \[
    \begin{tikzcd}
      {\underset{n : \Nat, x : X_{0}^{\Tm_{n}}\ \Gamma\ A}{\coprod} \mathsf{Free}(\bm{\Gamma} \vdash \bm{A} : \Ty_{n})} \ar[d] \ar[rd, very near end, phantom, "\ulcorner"] \ar[r] & \CC \ar[d] \\
      {\underset{n : \Nat, x : X_{0}^{\Tm_{n}}\ \Gamma\ A}{\coprod} \mathsf{Free}(\bm{\Gamma} \vdash \bm{a} : \Tm_{n}\ \bm{A})} \ar[r] & \CC[X_{0}] \rlap{\ .}
    \end{tikzcd} \]
  In other words, $\CC[X_{0}]$ is the free extension of $\CC$ by a family of terms indexed by $X_{0}^{\Tm_{n}}$ at each universe level $n$.
  There are no dependencies between the added types and terms; dependencies are instead encoded by iterating this construction, possibly a countably infinite number of times.

  A cellular extension $X$ generates a sequence
  \[ \CC \to \CC[X_{0}] \to \CC[X_{0}][X_{1}] \to \cdots \to \CC[X_{<m}] \to \cdots \] of models (where $X_{1} = ({\uparrow}X)_{0}$, etc).
  We write $\CC[X]$ for the colimit of this sequence.

  We write $X^{\Tm_{n}} : (\Gamma : \abs{\CC[X]}) \to \abs{\Ty_{n,\CC[X]}}\ \Gamma \to \SSet$ for the family of generating terms of $\CC[X]$, i.e. the coproduct of $(X_{m}^{\Tm_{n}})$ over $m : \Nat$.
  Given an element $a : X^{\Tm_{n}}\ \Gamma\ A$, we denote the corresponding term of $\CC[X]$ by $\bm{a} : \abs{\Tm_{\CC[X]}}\ \Gamma\ A$.

  A cellular model is a model obtained as cellular extension $\Init_{\Th}[X]$ of the initial model $\Init_{\Th}$.
  \defiEnd
\end{defi}

\begin{prop}
  For any cellular extension $\CC[X]$, the map $\CC \to \CC[X]$ is a contextual extension.
\end{prop}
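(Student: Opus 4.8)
The plan is to exhibit the morphism $\CC \to \CC[X]$ as a transfinite composite of pushouts of coproducts of the generating maps $I^{\Tm_n}$, and then to invoke the closure properties of the class of morphisms that are left orthogonal to all contextual isomorphisms (the left class of the orthogonal factorization system of \cref{ssec:contextual_models}).

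First I would check that each generating map $I^{\Tm_n} : \mathsf{Free}(\bm{\Gamma} \vdash \bm{A} : \Ty_n) \to \mathsf{Free}(\bm{\Gamma} \vdash \bm{a} : \Tm_n\ \bm{A})$, regarded inside $\CMod_{\Th}$, is a contextual extension. This is immediate from the universal properties of these free models: given a contextual isomorphism $G : \CA \to \CB$ and a commuting square, the bottom map picks out a triple $(\Gamma, A, a)$ in $\CB$ and the top map a pair $(\Gamma_0, A_0)$ in $\CA$ with $(G\Gamma_0, G A_0) = (\Gamma, A)$; since $G$ is bijective on terms there is a unique $a_0 \in \abs{\Tm_{n,\CA}}_{\Gamma_0}\ A_0$ with $G a_0 = a$, and this is the unique diagonal filler. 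Only the maps $I^{\Tm_n}$ are needed, since by \cref{def:cellular_models} a cellular extension adds terms only, new types being added as terms of the universes.

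Next I would recall the relevant closure properties of the class $\mathcal{L}$ of morphisms left orthogonal to all contextual isomorphisms: $\mathcal{L}$ is closed under coproducts in the arrow category, under cobase change (pushout along an arbitrary morphism), and under transfinite — in particular sequential — composition. All three are formal consequences of the lifting characterization of $\mathcal{L}$ and require no use of factorizations, so the whole argument stays cleanly inside $\CMod_{\Th}$.

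Finally I would assemble these facts. By \cref{def:cellular_models}, each basic step $\CC[X_{<m}] \to \CC[X_{<m}][X_m]$ is by construction a pushout, along a morphism into $\CC[X_{<m}]$, of a coproduct of copies of the maps $I^{\Tm_n}$ indexed by the generating terms of $X_m$; hence it lies in $\mathcal{L}$. The morphism $\CC \to \CC[X]$ is the canonical map into the colimit of the chain $\CC \to \CC[X_0] \to \CC[X_0][X_1] \to \cdots$, i.e.\ the sequential composite of these basic steps, so it too lies in $\mathcal{L}$, which is exactly the claim. I expect no real obstacle; the only point deserving a moment's care is keeping the argument in $\CMod_{\Th}$ rather than $\CCwf$, but since contextuality of a morphism is detected on its underlying actions on types and terms, this is unproblematic.
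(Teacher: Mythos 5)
Your proof is correct and is essentially the paper's argument spelled out: the paper's proof is a one-liner observing that the same generating set $I$ is used both for the (contextual extension, contextual isomorphism) orthogonal factorization system and for building cellular extensions, so that $I$-cellular maps (pushouts of coproducts of generating maps, composed transfinitely) lie in the left class. Your expanded version — checking orthogonality of $I^{\Tm_n}$ against contextual isomorphisms directly in $\CMod_{\Th}$ and then invoking the closure of the left class under coproducts, pushouts and sequential composition — is exactly the content the paper leaves implicit.
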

\begin{proof}
  This follows from the fact that the same set $I$ of maps is used to generate the (contextual extension, contextual isomorphism) orthogonal factorization system and the (cofibration, strong contextual equivalence) weak factorization system.
\end{proof}

\begin{cor}
  Any cellular extension $\CC[X]$ of a contextual model $\CC$ is contextual.
  \qed
\end{cor}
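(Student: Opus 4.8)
The plan is to reduce everything to the preceding proposition (that $\CC \to \CC[X]$ is a contextual extension) together with the closure properties of the left class of an orthogonal factorization system.

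First I would observe that when $\CC$ is contextual, the unique map $\Init_{\Th} \to \CC$ is itself a contextual extension. Indeed, $\Init_{\Th} \to \cxl\CC$ is a contextual extension by the very construction of the contextual core, and contextuality of $\CC$ says exactly that the remaining factor $\cxl\CC \to \CC$ is an isomorphism; composing with this isomorphism keeps us in the left class.

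Next, by the proposition just proved, $\CC \to \CC[X]$ is a contextual extension. Since the contextual extensions form the left class of an orthogonal factorization system on $\CCwf$ (and hence on $\CMod_{\Th}$, transporting the structure along contextual isomorphisms as explained above), this class is closed under composition. Therefore the composite $\Init_{\Th} \to \CC \to \CC[X]$ is a contextual extension. Now $\cxl(\CC[X])$ is by definition the contextual image of $\Init_{\Th} \to \CC[X]$, i.e. the middle object of its factorization into a contextual extension followed by a contextual isomorphism; as the map is already a contextual extension, uniqueness of such factorizations (up to isomorphism) forces the contextual-isomorphism factor $\cxl(\CC[X]) \to \CC[X]$ to be an isomorphism. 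Hence $\CC[X]$ is contextual.

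I do not expect any genuine obstacle here: the argument is a short diagram chase, and the only points that need a moment's care are the implication ``$\CC$ contextual $\Rightarrow$ $\Init_{\Th}\to\CC$ is a contextual extension'' and the appeal to closure of the left class under composition. (An alternative, essentially equivalent, route would be to exhibit a section of $\cxl(\CC[X]) \to \CC[X]$ and invoke \cref{prop:contextual_section}, but the factorization argument above is more direct.)
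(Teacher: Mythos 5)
Your argument is correct and is essentially the intended one: the paper leaves the corollary as an immediate consequence of the preceding proposition, and the evident route is exactly yours — $\Init_{\Th}\to\CC$ is a contextual extension because $\CC$ is contextual, composing with the contextual extension $\CC\to\CC[X]$ keeps you in the left class of the orthogonal factorization system, and uniqueness of factorizations then forces $\cxl(\CC[X])\to\CC[X]$ to be an isomorphism. No gaps.
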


There is also a relation between cellular models and the theory of type theory signatures defined in \cref{def:th_sig}.
Indeed the finite cellular models correspond exactly to the possible premises of the operations and equations of a signature.

For example, the premises of the identity type former can be encoded by the cellular model $\Init_{\Th}[\bm{A} : \UU, \bm{x} : A, \bm{y} : A]$.
The premises of the $\Pi$-type former would be encoded by the cellular model $\Init_{\Th}[\bm{A} : \UU, (a : \bm{A}) \vdash \bm{B} : \UU]$.
The $\Id$ and $\Pi$ type-theoretic operations can be seen as the types $\Id\ \{\bm{A}\}\ \bm{x}\ \bm{y}$ and $\Pi\ \bm{A}\ \bm{B}$ of these models.

Thus we can often perform constructions for all operations of the theory $\Th$ by looking at the types and terms of finite cellular models.

\subsection{Fibrant congruences and quotients}\label{ssec:fibrant_cong}

Since the categories of models of type theories are complete and cocomplete, there is a notion of internal equivalence relation on a model of type theory (where ``internal'' here means that the concept is defined using objects and arrows of the category), and moreover any internal equivalence relation has a quotient.
However, general quotients may be ill-behaved, and are hard to compute.
This is already the case for quotients and colimits of categories; originally distinct objects may be identified in the quotient, and originally non-composable morphisms may then become composable in the quotients, leading to new morphisms that do not correspond to any morphism of the base category.

In this subsection, we define a smaller class of congruences, which we call fibrant congruences, for which the quotients are better behaved and can be computed pointwise.
We show that the strong contextual equivalences are, up to contextual isomorphism, exactly the quotients by fibrant congruences.

\begin{defi}
  Internally to a presheaf model $\CPsh\ \CC$, an internal \defemph{fibrant contextual congruence} $\widetilde{\MC}$ over an internal cumulative family $\MC = (\Ty,\Tm)$ consists of:
  \begin{enumerate}
    \item an internal equivalence relation $\widetilde{\Ty}$ on $\Ty_{n}$ for each universe level $n$:
      \begin{alignat*}{2}
        & \widetilde{\Ty} && : \{n\} \to \Ty_{n} \to \Ty_{n} \to \SProp.
      \end{alignat*}
      We will often write $(A \sim B) \in \widetilde{\MC}$ or $(A \sim B)$ instead of $\widetilde{\Ty}\ A\ B$.
    \item internal equivalence relations $\widetilde{\Tm}$ on $\Tm_{n}$, displayed over $\widetilde{\Ty}$:
      \begin{alignat*}{2}
        & \widetilde{\Tm} && : \{n,A,B\} \to (A \sim B) \to \Tm_{n}\ A \to \Tm_{n}\ B \to \SProp.
      \end{alignat*}
      We will often write $(a \sim_{p} b) \in \widetilde{\MC}$, $(a \sim_{p} b)$, or just $(a \sim b)$, instead of $\widetilde{\Tm}\ p\ a\ b$.
    \item such that for every universe level $n$, $(\Tm_{n}, \widetilde{\Tm})$ is (internally) a fibrant setoid family over $(\Ty_{n}, \widetilde{\Ty})$, i.e. for every pair of congruent types $p : (A \sim B)$ and term $a : \Tm_{n}\ A$, there exists a transported term $(p^{\star}\ a) : \Tm\ B$ such that $(a \sim b) \in \widetilde{\MC}$.
    \item such that the operations $\mathsf{Lift}_{\Ty} : \Ty_{n} \to \Ty_{n+1}$ and the isomorphisms $\mathsf{lift}_{\Tm} : \Tm_{n}\ A \simeq \Tm_{n+1}\ (\mathsf{Lift}_{\Ty}\ A)$ and $\Tm\ \UU_{n} \simeq \Ty_{n}$ preserve the equivalence relations.
    \item\label{itm:cong_compat} such that dependent types and terms have actions on the relations:
      \begin{itemize}
        \item for every telescope $A : \Ty^{\star}$ and dependent type $B : \Tm^{\star}\ A \to \Ty$, whenever $(a_{1} \sim a_{2}) \in \widetilde{\MC}$ are congruent (the equivalence relations are extended pointwise to telescopes) telescopes of terms of type $A$, then $(B\ a_{1} \sim B\ a_{2}) \in \widetilde{\MC}$;
        \item for every telescope $A : \Ty^{\star}$, dependent type $B : \Tm^{\star}\ A \to \Ty$ and dependent term $b : (a : \Tm^{\star}\ A) \to \Tm\ (B\ a)$, whenever $(a_{1} \sim a_{2}) \in \widetilde{\MC}$ are congruent telescopes of terms of type $A$, then $(b\ a_{1} \sim b\ a_{2}) \in \widetilde{\MC}$.
      \defiEnd
      \end{itemize}
  \end{enumerate}
\end{defi}

Given two congruent types $p : A_{1} \sim A_{2}$ and two dependent types $B_{1} : \Tm\ A_{1} \to \Ty$ and $B_{2}  : \Tm\ A_{2} \to \Ty$, there are several way to define a relation between $B_{1}$ and $B_{2}$.
\begin{description}
  \item[Unbiased] $B_{1} \sim B_{2}$ when for every $a_{1} : \Tm\ A_{1}$ and $a_{2} : \Tm\ A_{2}$ such that $(a_{1} \sim a_{2})$, we have $(B_{1}\ a_{1} \sim B_{2}\ a_{2})$.
  \item[Left-biased] $B_{1} \sim B_{2}$ when for every $a_{1} : \Tm\ A_{1}$, we have $(B_{1}\ a_{1} \sim B_{2}\ (p^{\star}\ a_{1}))$.
  \item[Right-biased] $B_{1} \sim B_{2}$ when for every $a_{2} : \Tm\ A_{2}$, we have $(B_{1}\ ((p^{-1})^{\star}\ a_{2}) \sim B_{2}\ a_{2})$.
\end{description}
The last component (\ref{itm:cong_compat}) of the definition of fibrant congruence ensures that they are all equivalent.

We say that a fibrant congruence $\widetilde{\MC}$ over an internal cumulative family $\MC$ is compatible with a theory $\Th$ if the operations of $\Th$ all preserve the equivalence relations of $\widetilde{\MC}$.

\begin{prop}\label{prop:cong_internal_quotient}
  Let $\widetilde{\MC}$ be a fibrant contextual congruence over an internal cumulative family $\MC$.
  Then there is a quotient internal cumulative family $(\MC / \widetilde{\MC})$ along with a morphism $q : \MC \to (\MC / \widetilde{\MC})$ of internal cumulative families, such that for every pair $(A \sim B) \in \widetilde{\MC}$ of congruent types, $q\ A = q\ B$, and for every pair $(a \sim_{p} b) \in \widetilde{\MC}$ of congruent terms, $q\ a = q\ b$.

  Furthermore, $q : \MC \to (\MC / \widetilde{\MC})$ is surjective on both types and terms, and the quotient is effective: given any two types (or terms) $x, y$, we have $q\ x = q\ y$ if and only if $(x \sim y) \in \widetilde{\MC}$.
\end{prop}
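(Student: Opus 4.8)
The plan is to perform the quotient construction internally to the extensional type theory of the presheaf model $\CPsh\ \CC$, which has effective quotient types, and then to check that the resulting data again assembles into an internal cumulative family and that the induced map is effective.

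Concretely, I would first set $\Ty^{Q}_{n} \triangleq \Ty_{n} / \widetilde{\Ty}$, the quotient of the presheaf $\Ty_{n}$ by the internal equivalence relation $\widetilde{\Ty}$, with quotient map $q_{\Ty} : \Ty_{n} \to \Ty^{Q}_{n}$. For the terms I would quotient the total presheaf: the data $(\widetilde{\Ty},\widetilde{\Tm})$ induces a relation on $(A : \Ty_{n}) \times \Tm_{n}\ A$ by declaring $(A,a)$ and $(B,b)$ related whenever there is some $p : \widetilde{\Ty}\ A\ B$ (necessarily unique, since $\widetilde{\Ty}$ is $\SProp$-valued) with $\widetilde{\Tm}\ p\ a\ b$; the fact that $(\Tm_{n},\widetilde{\Tm})$ is a fibrant setoid family over $(\Ty_{n},\widetilde{\Ty})$ is exactly what makes this relation reflexive, symmetric and transitive. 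Taking its quotient yields a presheaf equipped with a map $\pi$ to $\Ty^{Q}_{n}$ induced by the first projection, and I would define $\Tm^{Q}_{n}\ t$ to be the fibre of $\pi$ over $t$. Since quotients in a presheaf category are computed pointwise, and using the fibrancy condition to replace, stage by stage, an arbitrary representative $(B,b)$ over a congruent type $B \sim A$ by one of the form $(A, b')$, this fibre over a class $q_{\Ty}\ A$ is canonically the pointwise quotient of $\Tm_{n}\ A$ by the relation $a, a' \mapsto \widetilde{\Tm}\ p\ a\ a'$ for the reflexivity witness $p : \widetilde{\Ty}\ A\ A$; this description makes $\Tm^{Q}_{n}$ explicit and supplies the quotient map $q_{\Tm} : \Tm_{n}\ A \to \Tm^{Q}_{n}\ (q_{\Ty}\ A)$. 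The lifting operations $\mathsf{Lift}_{\Ty}$ and $\mathsf{lift}_{\Tm}$, and the universe isomorphisms where present, descend to the quotient by the hypothesis that they preserve the equivalence relations, and the action of dependent types and terms on the relations descends by \cref{itm:cong_compat} (which in particular makes the unbiased, left-biased and right-biased relations agree, so that the descent is unambiguous).

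The main obstacle is to check that $(\Ty^{Q},\Tm^{Q})$, together with the descended $\mathsf{Lift}$ and $\mathsf{lift}$, really is an internal cumulative family --- equivalently, that $\Tm^{Q}$ is a \emph{locally representable} dependent presheaf over $\Ty^{Q}$, so that context extension is available in the quotient. This is the step where fibrancy of $\widetilde{\MC}$ genuinely earns its keep: given $\Gamma : \Ob_{\CC}$ and a class over $\yo\ \Gamma$, one uses the pointwise computation of presheaf quotients to lift it to an actual type $A$ over $\Gamma$ in $\MC$, and then the fibrancy condition (together with the closure condition \cref{itm:cong_compat}) to produce and compare the representing data for the context extension by $q_{\Ty}\ A$ with that of $\MC$ at $A$. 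I expect this to be the most delicate point, and the place where the precise formulation of a fibrant contextual congruence is essential; once it is settled, the remaining verifications are routine bookkeeping. Surjectivity of $q$ on types and on terms is immediate, since quotient maps of presheaves are pointwise surjections. Effectivity is inherited from the effectivity of the quotient types of $\CPsh\ \CC$: for types it is direct; for terms one first observes that $q\ x = q\ y$ already presupposes that the types of $x$ and $y$ have equal images under $q_{\Ty}$, i.e.\ are congruent, and then effectivity of the fibrewise quotient --- using fibrancy once more to bring both terms over a common type --- yields $(x \sim y) \in \widetilde{\MC}$, as required.
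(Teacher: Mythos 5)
Your construction is essentially the one in the paper: quotient the type presheaves pointwise, and for terms quotient the total space (equivalently, the paper quotients the presheaf of triples $(B, b, q_{\Ty}\,B = A)$ by $b \sim c$) precisely so as not to need a strict equality between the fibrewise quotients $(\Tm\ A)/\widetilde{\Tm}$ and $(\Tm\ B)/\widetilde{\Tm}$ for congruent $A \sim B$; fibrancy then yields the isomorphism $\Tm_{(\MC/\widetilde{\MC})}\,(q\ A) \simeq (\Tm_{\MC}\ A)/\widetilde{\Tm}$, and surjectivity and effectivity follow from pointwise, effective quotients in $\CPsh\ \CC$, all exactly as in the paper. (A small quibble: reflexivity, symmetry and transitivity of your relation on the total space come from $\widetilde{\Ty}$ and $\widetilde{\Tm}$ being (displayed) equivalence relations; fibrancy is only needed for the fibrewise description just mentioned.)

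The genuine problem is the step you single out as the main obstacle: local representability of $\Tm^{Q}$ over $\Ty^{Q}$. You leave it as an expectation, and it cannot be settled as stated, because quotients of locally representable families are not locally representable in general. For instance, over a one-object category with endomorphism monoid $M$, presheaves are right $M$-sets and the unique representable is $M$ itself; quotienting $M$ by a nontrivial congruence gives a non-representable $M$-set. Fibrancy is of no help here: it provides transport of terms between fibres over congruent types, not a representing object in $\CC$ for a quotiented fibre (and in the intended application, where $\widetilde{\MC}$ identifies, say, $\J\ P\ d\ \refl$ with $d$ on a cellular model, no such object exists). The paper's proof makes no representability claim, and none is needed downstream: in \cref{prop:fib_cong_quotient} the quotient family is used only as a plain dependent presheaf, and the quotient model is built over the presheaf category $\widehat{\CC}$, where context extension by an arbitrary dependent presheaf is available. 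So you should drop that step and read the conclusion as producing a (not necessarily representable) internal family with the cumulativity structure; attempting to prove representability over $\CC$ would fail.
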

\begin{proof}
  We don't look at universe levels in this proof; the quotient can be defined levelwise.

  Write $\widetilde{\Ty}$ and $\widetilde{\Tm}$ for the equivalence relations of $\widetilde{\MC}$.

  We define $\Ty_{(\MC / \widetilde{\MC})}$ as the quotient of $\Ty_{\MC}$ by the equivalence relation $\widetilde{\Ty}$.
  We have a quotienting map $q_{\Ty} : \Ty_{\MC} \to \Ty_{(\MC / \widetilde{\MC})}$.

  We would like to define $\Tm_{(\MC / \widetilde{\MC})}\ (q\ A)$ as a quotient of $\Tm_{\MC}\ A$ for every $A : \Ty_{\MC}$.
  This is however not possible in a non-univalent metatheory, as this would require an equality between the quotients $(\Tm_{\MC}\ A) / \widetilde{\Tm}$ and $(\Tm_{\MC}\ B) / \widetilde{\Tm}$ for every pair $(A \sim B) \in \widetilde{\MC}$ of congruent types.
  The fibrancy of the congruence $\widetilde{\MC}$ only provides an isomorphism $(\Tm_{\MC}\ A) / \widetilde{\Tm} \simeq (\Tm_{\MC}\ B) / \widetilde{\Tm}$.

  Instead, we define $\Tm_{(\MC / \widetilde{\MC})}\ A$ as the quotient of the presheaf $(B : \Ty_{\MC}) \times (b : \Tm_{\MC}\ B) \times (q_{\Ty}\ B = A)$ by the relation $(\sim)$ defined by $(B, b, -) \sim (C, c, -) \triangleq (b \sim c)$.
  Then for every $A : \Ty_{\MC}$, we have an isomorphism $\Tm_{(\MC / \widetilde{\MC})}\ (q\ A) \simeq (\Tm_{\MC}\ A) / \widetilde{\Tm}$.
\end{proof}

\begin{defi}
  Let $F : \CC \to \CD$ be a cCwF morphism.
  The \defemph{kernel} $\ker F$ of $F$ consists of equivalence relations on types and terms defined by:
  \begin{alignat*}{2}
    & (A \sim B) \in \ker F && \triangleq F\ A = F\ B \\
    & (a \sim b) \in \ker F && \triangleq F\ a = F\ b 
  \end{alignat*}
  Remark that $\ker F$ does not necessarily satisfy the fibrancy condition of the definition of fibrant contextual congruence, but it satisfies all of the other conditions.

  If $F$ is also a morphism of models of some type theory $\Th$, then $\ker F$ is compatible with the operations of $\Th$.
  \defiEnd
\end{defi}

\begin{prop}\label{prop:fib_cong_quotient}
  Let $\Th$ be a type theory extending the theory of cumulative CwFs with universes.
  If $\widetilde{\CC}$ is a fibrant contextual congruence on a contextual model $\CC : \CMod^{\cxl}_{\Th}$, then it has a quotient $\Quot_{\widetilde{\CC}} : \CMod^{\cxl}_{\Th}$ and a quotient inclusion $\quot_{\widetilde{\CC}} : \CC \to \Quot_{\widetilde{\CC}}$, with the following properties:
  \begin{enumerate}
    \item\label{itm:fib_cong_quotient_1} For every model $\CD : \CMod_{\Th}$ and morphism $F : \CC \to \CD$ such that $\widetilde{\CC} \subseteq \ker F$, there is a unique morphism $G : \Quot_{\widetilde{\CC}} \to \CD$ such that $\quot_{\widetilde{\CC}} \cdot G = F$.
    \item\label{itm:fib_cong_quotient_2}
      The quotient inclusion $\quot_{\widetilde{\CC}} : \CC \to \Quot_{\widetilde{\CC}}$ is a strong contextual equivalence. (Note that splitting $\quot_{\widetilde{\CC}} : \CC \to \Quot_{\widetilde{\CC}}$ required the axiom of choice).
    \item\label{itm:fib_cong_quotient_3}
      The quotient is effective: $\ker \quot_{\widetilde{\CC}} = \widetilde{\CC}$.
      This means that for every pair $a, b$ of terms (or types), $a$ and $b$ are congruent in $\widetilde{\CC}$ if and only if they are identified in $\Quot_{\widetilde{\CC}}$ by $\quot_{\widetilde{\CC}}$.
  \end{enumerate}
\end{prop}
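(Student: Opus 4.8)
The plan is to realize $\Quot_{\widetilde{\CC}}$ as the quotient of $\CC$ by $\widetilde{\CC}$, obtained by combining the internal quotient of \cref{prop:cong_internal_quotient} with a quotient of the underlying category. Since $\CC$ is contextual, its cumulative family $\MC_{\CC} = (\Ty_{\CC},\Tm_{\CC})$ together with $\widetilde{\CC}$ is an internal fibrant contextual congruence in $\CPsh\ \CC$, so \cref{prop:cong_internal_quotient} produces a quotient internal cumulative family $\MC' \triangleq (\MC_{\CC}/\widetilde{\CC})$ and a morphism $q : \MC_{\CC} \to \MC'$ that is surjective on types and terms and effective ($q\,x = q\,y$ iff $(x \sim y) \in \widetilde{\CC}$). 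Using the description of contextual CwFs via closed telescopes of types and natural transformations between the presheaves of telescopes of terms (Section~\ref{sec:background}), $\widetilde{\CC}$ induces equivalence relations on the objects and morphisms of $\CC$ (``congruent telescopes''); I define $\Quot_{\widetilde{\CC}}$ to be the cCwF whose objects and morphisms are the corresponding quotients, whose context extension appends a type to a telescope, and whose types and terms are given by $\MC'$. Because $\widetilde{\CC}$ is compatible with $\Th$, every operation of $\Th$ respects all of these equivalence relations and hence descends to $\Quot_{\widetilde{\CC}}$, making it a model of $\Th$; it is contextual by construction (closed telescopes supply an evident length function). Finally, $q$ assembles into a morphism of models $\quot_{\widetilde{\CC}} : \CC \to \Quot_{\widetilde{\CC}}$ which strictly preserves all structure; in particular it preserves context extension on the nose, since $\Gamma \rhd A$ in $\Quot_{\widetilde{\CC}}$ is by definition the class of $\Gamma \rhd A$ in $\CC$.

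For property (\ref{itm:fib_cong_quotient_1}), let $F : \CC \to \CD$ with $\widetilde{\CC} \subseteq \ker F$. I define $G : \Quot_{\widetilde{\CC}} \to \CD$ on objects, morphisms, types and terms by choosing a representative and applying $F$ (so $G(q\,A_{0}) \triangleq F\,A_{0}$, $G(q\,a_{0}) \triangleq F\,a_{0}$, and telescope-wise on objects and morphisms). This is well defined precisely because $q$ is surjective and $\widetilde{\CC} \subseteq \ker F$ (so congruent data is sent by $F$ to equal data), and it is a morphism of models by the corresponding facts about $F$ together with the compatibility of the quotient with composition and with the $\Th$-operations. Since $\quot_{\widetilde{\CC}}$ is surjective on objects, morphisms, types and terms, and $\quot_{\widetilde{\CC}} \cdot G = F$, the morphism $G$ is unique.

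Property (\ref{itm:fib_cong_quotient_2}) is immediate from the surjectivity of $q$ furnished by \cref{prop:cong_internal_quotient}: for each $\Gamma$, the action of $\quot_{\widetilde{\CC}}$ on types over $\Gamma$ and on terms of a fixed type over $\Gamma$ is surjective, which is exactly the strong type and term lifting property; choosing the lifts uniformly uses the axiom of choice, as already noted for the internal quotient. Property (\ref{itm:fib_cong_quotient_3}) is the effectivity clause of \cref{prop:cong_internal_quotient}, together with the fact that the induced relations on objects and morphisms are again effective (congruent telescopes), whence $\ker \quot_{\widetilde{\CC}} = \widetilde{\CC}$.

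The main obstacle is the first paragraph: one must check in detail that the quotiented data really forms a contextual cCwF with universes carrying a $\Th$-structure, in particular that local representability of terms survives passing to the quotient and that the transported $\Th$-operations are unambiguous. Here the fibrancy of $\widetilde{\CC}$ is the crucial ingredient: it is what makes the fibrewise quotients of the presheaves of terms well behaved (exactly as in the proof of \cref{prop:cong_internal_quotient}), and what makes the unbiased, left-biased and right-biased notions of congruence on dependent types coincide, so that the operations of $\Th$ descend without ambiguity. Everything else is a routine, if somewhat lengthy, verification in the style of the construction of contextual cores.
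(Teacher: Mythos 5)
Your construction is correct in outline, but it takes a genuinely different route from the paper. You build $\Quot_{\widetilde{\CC}}$ explicitly as a quotient of the contextual model itself (quotienting closed telescopes of types for the objects, telescopes of terms for the morphisms, and using \cref{prop:cong_internal_quotient} for types and terms), and then verify the universal property~(\ref{itm:fib_cong_quotient_1}) by choosing representatives. The paper instead defines $\Quot_{\widetilde{\CC}}$ abstractly, as a coequalizer in $\CMod_{\Th}$ of a family of maps out of free models indexed by the congruent pairs, so that property~(\ref{itm:fib_cong_quotient_1}) holds by definition; it then constructs an auxiliary model $\CQ$ of $\Th$ whose base category is the presheaf category $\widehat{\CC}$, with types and terms given by the internal quotient family and context extension by $(x : X) \times \Tm_{\CC/\widetilde{\CC}}\ (A\ x)$, and uses contextuality of $\CC$ to build a morphism $F : \CC \to \CQ$ by induction on context length, with $F$ a strong contextual equivalence and $\ker F = \widetilde{\CC}$; properties~(\ref{itm:fib_cong_quotient_2}) and~(\ref{itm:fib_cong_quotient_3}) are then transferred to $\quot_{\widetilde{\CC}}$ by factoring $F$ through the coequalizer. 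The point of that detour is precisely to avoid the step you defer to a ``routine, if somewhat lengthy, verification'': quotienting the base category (well-definedness of hom-sets between classes of congruent telescopes, of composition, strict functoriality of context extension) and checking that the result is a contextual model of $\Th$ with locally representable terms. Your approach buys an explicit description of the quotient and makes its contextuality manifest, at the price of carrying out that verification; the paper's approach buys the universal property and the existence of the quotient for free from cocompleteness, at the price of an indirect argument through $\CQ$ and the factorization.

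One detail in your verification of~(\ref{itm:fib_cong_quotient_1}) deserves an explicit argument rather than the phrase ``congruent data is sent by $F$ to equal data'': the hypothesis $\widetilde{\CC} \subseteq \ker F$ only speaks about types and terms, while your $G$ is defined on objects and morphisms by representatives, so you must show that $F$ identifies congruent telescopes of objects and of substitutions. This does hold, but it uses strict preservation of context extension together with the fibrancy clause $(a \sim p^{\star}\ a)$ instantiated at the generic variable (so that the comparison substitution between the two extended contexts is sent to the identity by $F$), and then an induction on telescope length. With that filled in, your argument goes through.
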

\begin{proof}
  Since $\CMod_{\Th}$ is cocomplete, we can define the quotient $\Quot_{\widetilde{\CC}}$ as the coequalizer
  \[ \begin{tikzcd}
      \underset{\Gamma : \abs{\CC}, (A_{1} \sim A_{2}) \in \widetilde{\CC}, (a_{1} \sim a_{2}) \in \widetilde{\CC}}{\coprod}\ \mathsf{Free}(\bm{\Gamma} \vdash \bm{a} : \bm{A}) \ar[r, shift left, "\pi_{1}"] \ar[r, shift right, "\pi_{2}"'] & \CC \rlap{\ ,}
    \end{tikzcd} \]
    where the coproduct ranges over all pairs $(a_{1} \sim a_{2})$ of congruent terms in all contexts and $\pi_{1}$ and $\pi_{2}$ map $\bm{a}$ respectively to $a_{1}$ and $a_{2}$.
    It then satisfies the universal property (\ref{itm:fib_cong_quotient_1}) by definition.

    We now construct a model $\CQ$ of $\Th$.
    The base category of $\CQ$ is the presheaf category $\widehat{\CC}$.
    A type of $\CQ$ over a presheaf $X$ is a natural transformation $A : X \to \Ty_{\CC/\widetilde{\CC}}$.
    A term over $X$ of type $A$ is a dependent natural transformation $a : (x : X) \to \Tm_{\CC/\widetilde{\CC}}\ (A\ x)$.
    The extension of a context $X$ by a type $A$ is the presheaf $(x : X) \times \Tm_{\CC/\widetilde{\CC}}\ (A\ x)$.
The fact that $\CQ$ is a model of $\Th$ then follows from the compatibility of $\widetilde{\CC}$ with the operations of $\Th$.

    The Yoneda embedding $\yo : \CC \to \widetilde{\CC}$ is not a morphism of cCwFs from $\CC$ to $\CQ$, because it does not preserve the representing objects of context extensions.
    However, using the fact that $\CC$ is contextual, we can define a morphism $F : \CC \to \CQ$ (along with a natural transformation from $\yo$ to $F$).
    The actions of $F$ on contexts and morphisms are defined by induction on their length.
    The actions of $F$ on types and terms are given by the components of the quotienting map of the internal quotient.
    The compatibility of $F$ with substitution follows from the action of dependent types and terms on the relations of $\widetilde{\CC}$ (component (\ref{itm:cong_compat}) of the definition of fibrant congruence).

    By the properties of the internal quotient, $F$ is a strong contextual equivalence and $\ker F = \widetilde{\CC}$.

    By the universal property of $\Quot_{\widetilde{\CC}}$, $F$ factors through $\quot_{\widetilde{\CC}}$.
    \[ \begin{tikzcd}
        \CC \ar[rd, "\quot_{\widetilde{\CC}}"'] \ar[rr, "F"] && \CQ \\
        & \Quot_{\widetilde{\CC}} \ar[ru] &
      \end{tikzcd} \]
    Since $F$ is a strong contextual equivalence, this factorization implies that $\quot_{\widetilde{\CC}}$ is also a strong contextual equivalence.
    The factorization also implies that $\ker \quot_{\widetilde{\CC}} \subseteq \ker F$, and $\widetilde{\CC} \subseteq \ker \quot_{\widetilde{\CC}}$ by definition, so $\ker \quot_{\widetilde{\CC}} = \widetilde{\CC}$.
\end{proof}

\begin{defi}
  Let $\CC$ be a contextual model of a theory $\Th$.
  Given any morphism $F : \CMod_{\Th}(\CC \to \CD)$ whose kernel $\ker F$ is a fibrant contextual congruence, we define its coimage to be the quotient of its kernel: $\coim F \triangleq \Quot_{\ker F}$.
  There is a canonical comparison map $\coim F \to \cxlim F$, obtained by the universal property of the quotient $\coim F$.
  \defiEnd
  \[ \begin{tikzcd}
      \CC \ar[rrr, "F"] \ar[rd, "\quot"'] &&& \CD \\
      & \coim F \ar[r] & \cxlim F \ar[ru] &
    \end{tikzcd} \]
\end{defi}

\begin{prop}
  Given a contextual model $\CC : \CMod_{\Th}^{\cxl}$, a model morphism $F : \CMod_{\Th}(\CC \to \CD)$ is a strong contextual equivalence if and only if its kernel $\ker F$ is a fibrant congruence and the canonical map $\coim F \to \cxlim F$ is an isomorphism.
\end{prop}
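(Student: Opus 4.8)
The plan is to read the statement off the two factorizations that $F$ comes equipped with. Write $l : \CC \to \cxlim F$ and $r : \cxlim F \to \CD$ for the contextual extension and the contextual isomorphism obtained from the $(\text{contextual extension}, \text{contextual isomorphism})$ factorization of $F$; since $\CC$ is contextual, $\cxlim F$ is contextual as well (it is then isomorphic to $\cxl \CD$). Whenever $\ker F$ is a fibrant congruence, write $\quot : \CC \to \coim F$ for the quotient inclusion and $c : \coim F \to \cxlim F$ for the comparison map, so that $\quot \cdot c = l$; recall from \cref{prop:fib_cong_quotient} that $\coim F$ is then contextual and $\quot$ is a strong contextual equivalence.

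For the backward implication, assume $\ker F$ is a fibrant congruence — it is then automatically compatible with $\Th$ — and that $c$ is an isomorphism. Then $F = \quot \cdot c \cdot r$ is a composite of a strong contextual equivalence $\quot$ (\cref{prop:fib_cong_quotient}), an isomorphism $c$, and a contextual isomorphism $r$. Isomorphisms are contextual isomorphisms and contextual isomorphisms are strong contextual equivalences (\cref{prop:sweq_props}), so all three factors are strong contextual equivalences; by closure under composition (\cref{prop:sweq_props}), $F$ is a strong contextual equivalence.

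For the forward implication, assume $F$ is a strong contextual equivalence. The first step is to verify that $\ker F$ is a fibrant congruence. Every clause of the definition other than fibrancy holds for $\ker F$ for formal reasons (as already observed when $\ker F$ is introduced, they use only that $F$ is a morphism of cumulative CwFs with universes and of $\Th$-models); and fibrancy — given $p : (A \sim B) \in \ker F$, i.e.\ $F\,A = F\,B$, and $a : \Tm\ A$, producing $p^{\star} a : \Tm\ B$ with $F(p^{\star} a) = F\,a$ — is precisely the strong term lifting property of $F$, applied to the type $B$ and the term $F\,a : \Tm\ (F\,B)$. With $\ker F$ now known to be fibrant, $\coim F$ and $c$ are defined, and it remains to show that $c$ is an isomorphism. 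Since $r$ is injective on types and terms and $F = l \cdot r$ is surjective on types and terms, $l$ is a strong contextual equivalence, and furthermore $\ker l = \ker F = \ker \quot$ (the first equality by injectivity of $r$, the second by effectivity of the quotient, \cref{prop:fib_cong_quotient}). As $\coim F$ and $\cxlim F$ are contextual, $\quot$ and $l$ are moreover surjective on objects (iterate type lifting). Chasing this around the triangle $\quot \cdot c = l$ shows that $c$ is surjective and injective on types and terms: surjectivity of $c$ uses surjectivity of $\quot$ and of $l$ on objects, types and terms; injectivity of $c$ uses surjectivity of $\quot$ together with $\ker l = \ker \quot$. Hence $c$ is a contextual isomorphism between contextual models, and therefore — identifying the objects and morphisms of a contextual CwF with its telescopes of types and terms — an isomorphism of $\Th$-models.

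The main obstacle is the bookkeeping in the forward direction: isolating exactly where strong term lifting is used to obtain fibrancy of $\ker F$, and then carefully carrying out the surjectivity and injectivity argument for $c$ around $\quot \cdot c = l$, keeping track of the ambient contexts and using that $\coim F$ and $\cxlim F$ are contextual so that every object is hit by $\quot$ and by $l$. The backward direction is essentially immediate from closure of strong contextual equivalences under composition.
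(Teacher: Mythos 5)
Your proposal is correct and follows essentially the same route as the paper: the backward direction by composing $\quot$, the isomorphism, and the contextual isomorphism, and the forward direction by reading fibrancy of $\ker F$ off the strong term lifting property and then proving the comparison map is a contextual isomorphism via surjectivity (from the strong contextual equivalence) and injectivity (from effectivity of the quotient, $\ker \quot = \ker F$). The only cosmetic difference is that you run the injectivity/surjectivity chase against $l : \CC \to \cxlim F$ and $c$ directly, whereas the paper phrases it for $G : \coim F \to \CD$ and then transfers along the contextual isomorphism $\cxlim F \to \CD$.
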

\begin{proof} We prove both implications.
  \begin{description}
    \item[($\Ra$)] Assume that $F : \CC \to \CD$ is a strong contextual equivalence.

      To see that $\ker F$ is a fibrant congruence, take a context $\Gamma : \abs{\CC}$, two types $A,B : \abs{\Ty_{n,\CC}}_{\Gamma}$ such that $F\ A = F\ B$ and a term $a : \abs{\Tm_{\CC}}_{\Gamma}\ A$.
      Then $F\ a : \abs{\Tm_{\CD}}_{F\ \Gamma}\ (F\ B)$, and by the strong term lifting property of $F$, we obtain some lift $b : \abs{\Tm_{\CC}}_{\Gamma}\ B$ such that $F\ a = F\ b$.
      This proves that $\ker F$ is fibrant.

      To check that $\coim F \to \cxlim F$ is an isomorphism, it suffices to check that the map $G : \coim F \to \CD$ is a contextual isomorphism.
      Since $F : \CC \to \CD$ factors through $G : \coim\ F \to \CD$, and $F$ is a strong contextual equivalence, $G$ is also a strong contextual equivalence, i.e. its actions on types and terms are surjective.
      It remains to check that they are injective.
      Take two types $A,B$ of $\coim F$ over a same base object such that $G\ A = G\ B$.
      Since $\coim F$ is computed pointwise, we have two types $A_{0},B_{0}$ of $\CC$ such that $\quot\ A_{0} = A$ and $\quot\ B_{0} = B$, and we can assume that they lie over the same base object of $\CC$.
      As $F\ A_{0} = G\ (\quot\ A_{0})$ and $F\ B_{0} = G\ (\quot\ B_{0})$, we have $F\ A_{0} = F\ B_{0}$, i.e. $(A \sim B) \in \ker F$, and thus there is an equality $\quot\ A_{0} = \quot\ B_{0}$.
      This shows that $G$ is injective on types.
      The same argument also shows that $G$ is injective on terms.
      Thus $G : \coim F \to \CD$ is a contextual isomorphism, and $\coim F \to \cxlim F$ is an isomorphism.
    \item[($\La$)]
      Assume that $\ker F$ is a fibrant congruence and that $\coim F \to \cxlim F$ is an isomorphism. $F$ is the composition of $\CC \to \coim F$ which is a strong contextual equivalence by \cref{prop:fib_cong_quotient}, $\coim F \to \cxlim F$ which is an isomorphism and $\cxlim F \to \CD$ which is a contextual isomorphism by definition.
      Since strong contextual equivalences are closed under composition, $F$ is a strong contextual equivalence.
  \end{description}
\end{proof}

%%% Local Variables:
%%% mode: latex
%%% TeX-master: "main"
%%% End:
\section{Equivalences between type theories}\label{sec:equivalence_models}

In this section, we discuss the notion of Morita equivalence between a weak type theory $\Th_{w}$ and a strong type theory $\Th_{s}$.
They have been introduced as the weak equivalences of a model structure on a category of type theories in \cite{IsaevMorita}.
While \cite{IsaevMorita} considers arbitrary morphisms between type theories, we only consider extensions of type theories by additional strict equalities.

\subsection{Equational extensions}

We fix a type theory signature $\Th_{w}$ over the theory of cumulative CwFs with universes and weak identity types.
\begin{defi}
  A \defemph{marked equation} over $\Th_{w}$ consists of a finitely generated cellular model $\Init_{w}[X]$, along with a closed internal equality $p : \abs{\Tm_{\Init_{w}[X]}}_{\diamond}\ (\Id\ \{A\}\ a\ b)$ of $\Init_{w}[X]$.

  The equation is said to hold strictly in a model $\CC$ of $\Th_{w}$ if for every object $\Gamma : \abs{\CC}$ and morphism $F : \Init_{w}[X] \to (\CC \sslash \Gamma)$, $F$ maps $p$ to the reflexivity equality.

  An \defemph{equational extension} of $\Th_{w}$ is a family of marked equations over $\Th_{w}$.
  \defiEnd
\end{defi}

If we were to compare types up to equivalence instead of internal equality of codes, the definition of marked equation would need to be extended to also include marked type equivalences.

We give some examples of equational extensions.
\begin{exas} \hfill
  \begin{enumerate}
    \item For the extensions from weak computation rules to strict computation rules, we mark the computation rules that should be made strict.
      For example, in the case of identity types, we mark the family of internal equalities $\J_{\beta}\ P\ d : \Id\ (\J\ P\ d\ \refl)\ d$.
      In the case of $\Pi$-types, we mark the internal equalities $\app_{\beta} : \Id\ (\app\ (\lam\ b)\ a)\ (b\ a)$ and $\funext_{\beta} : \Id\ (\funext\ f\ f\ (\lam\ (a \mapsto \refl)))\ \refl$ (and perhaps
      $\funext\text{-}\app_{\beta}$ as well).
    \item When considering the extension from inductive natural numbers to natural numbers with a strictly associative addition, we proceed in in two steps.
      First we extend the base theory by adding
      \[ \mathsf{plus} : \Tm\ \Nat \to \Tm\ \Nat \to \Tm\ \Nat \]
      as a new primitive operation, along with some of the internal equalities that it satisfies, such as
      \[ \mathsf{plus}_{0} : \{x\} \to \Tm\ (\Id\ (\mathsf{plus}\ 0\ x)\ x), \]
      \[ \mathsf{plus}_{1} : \{x\} \to \Tm\ (\Id\ (\mathsf{plus}\ x\ 0)\ x), \]
      \[ \mathsf{plus}_{2} : \{x\} \to \Tm\ (\Id\ (\mathsf{plus}\ (\mathsf{plus}\ x\ y)\ z)\ (\mathsf{plus}\ x\ (\mathsf{plus}\ y\ z))), \] etc.
      The operation $\mathsf{plus}$ is homotopic to the usual inductively defined addition, but not strictly equal to it.
      This kind of extension is conservativive.
      The weak type theory $\Th_{w}$ is then this extended theory.

      As a second step, we consider the equational extension of that theory obtained by marking the equalities $\mathsf{plus}_{0}$, $\mathsf{plus}_{1}$, $\mathsf{plus}_{2}$, etc.
      Thus the strong type theory $\Th_{s}$ includes the strict equalities $\mathsf{plus}\ 0\ x = x$, $\mathsf{plus}\ x\ 0 = x$, $\mathsf{plus}\ (\mathsf{plus}\ x\ y)\ z = \mathsf{plus}\ x\ (\mathsf{plus}\ y\ z)$, etc.
      It also includes the strict equalities $\mathsf{plus}_{0} = \refl$, $\mathsf{plus}_{1} = \refl$, $\mathsf{plus}_{2} = \refl$, etc.

    \item To consider the extension of a type theory with a new universe of strict propositions, we would also perform two steps.
      As a first step, we introduce a new constant type $\mathsf{SProp}$, along with an equality in $\Id\ \mathsf{SProp}\ \mathsf{HProp}$ with the universe $\mathsf{HProp}$ of propositions.
      We write $F : \mathsf{SProp} \to \mathsf{HProp}$ for the associated transport function.

      Secondly, we mark the family of equations
      \[ (A : \Tm\ \mathsf{SProp}) (a , b : \Tm\ (F\ A)) \to \Tm\ (\Id\ a\ b). \]

      In the resulting strong type theory, the only way to obtain closed elements of $\mathsf{SProp}$ is to use the inverse of the equivalence $F$ to replace elements of $\mathsf{HProp}$ by elements in $\mathsf{SProp}$.

      Note that the equational extension that marks instead the family of equations
      \[ (A : \Tm\ \mathsf{HProp}) (a , b : \Tm\ A) \to \Tm\ (\Id\ a\ b) \]
      is not a conservative extension in the absence of UIP.
      Indeed, as remarked in \cite{SProp}, if all propositions are strict propositions, then UIP holds.

    \item As a last example, we can also mark the family of all equalities
      \[ (A : \Ty)\ (x, y : \Tm\ A)\ (p : \Tm\ (\Id\ \{A\}\ x\ y)) \mapsto p. \]
      The corresponding strong type theory then includes the equality reflection rule.
  \end{enumerate}
\end{exas}

\subsection{Equivalences of theories}

We now work with a fixed choice of weak type theory $\Th_{w}$ and equational extension $\Th_{e}$.
The strong type theory $\Th_{s}$ is then defined as the extension of $\Th_{w}$ by the strict equalities $x = y$ and $p = \refl$ for every internal equality $p : \Id\ x\ y$ marked in $\Th_e$.

We have an adjunction between the categories $\CMod_{w}$ of models of $\Th_{w}$ and $\CMod_{s}$ of models of $\Th_{s}$.
\begin{center}\(\begin{tikzcd}
    \CMod_{w} \ar[r, swap, bend right, "L_{s}"] \ar[r, phantom, "\top"] & \CMod_{s} \ar[l, swap, bend right, "R_{s}"]
  \end{tikzcd}\)\end{center}
As $\Th_{s}$ is an equational extension of $\Th_{w}$, the functor $R_{s} : \CMod_{s} \to \CMod_{w}$ is simply the fully faithful forgetful functor that forgets that a strong model satisfies the additional equations. We will often omit $R_{s}$, and simply see any object of $\CMod_{s}$ as an object of $\CMod_{w}$. The left adjoint $L_{s} : \CMod_{w} \to \CMod_{s}$ can be shown to exist by various methods. One possibility is to use the adjoint functor theorem, using the fact that $\CMod_{w}$ and $\CMod_{s}$ are locally finitely presentable and that $R_{s}$ preserves limits.

The left adjoint can also be computed from the presentation of a model $\CC : \CMod_{w}$ by generators and relations. Such a presentation can be obtained from the cellular replacement of the model $\CC$ by some cellular model $\Init_{w}[X]$. Since left adjoints preserve colimits and cellular models are built by iterated pushouts, $L_{s}\ \Init_{w}[X] \simeq \Init_{s}[X]$, where $\Init_{s}[X]$ is the cellular strong model with the same generators as $\Init_{w}[X]$. Since $\CC$ is a quotient of $\Init_{w}[X]$ by a fibrant congruence and left adjoints preserve quotients, $L_{s}\ \CC$ is also the quotient of $\Init_{s}[X]$ by some congruence, although that congruence may fail to be fibrant in general.

We write $\eta^{X} : \Init_{w}[X] \to \Init_{s}[X]$ for the unit of this adjunction at a cellular model $\Init_{w}[X]$.

\begin{defi}\label{def:th_weq}
  We say that $\Th_{w}$ and $\Th_{s}$ are Morita equivalent if for every cofibrant contextual model $\CC : \CMod_{w}^{\cof}$ of $\Th_{w}$, the unit $\eta_{\CC} : \CMod_{w}(\CC \to R_{s}\ (L_{s}\ \CC))$ is a
  weak contextual equivalence.
  \defiEnd
\end{defi}

It is shown in \cite{IsaevMorita} that whenever $\Th_{w}$ is semi-model, then is weakly equivalent to $\Th_{s}$ if and only if $\Th_{s}$ is also semi-model and the adjunction $(L_{s} \dashv R_{s})$ is a Quillen equivalence.

We now show that in order to prove that $\Th_{w}$ and $\Th_{s}$ are equivalent, it is sufficient to look at the cellular models of $\Th_{w}$.
Recall that the cellular models of $\Th_{w}$ are very similar to the initial model of $\Th_{w}$.
Thus, for most type theories, whenever we can prove that the initial models of $\Th_{w}$ and $\Th_{s}$ are equivalent, we can expect the same methods to work for arbitrary cellular models, implying that $\Th_{w}$ and $\Th_{s}$ are equivalent.
\begin{prop}\label{prop:weq_theories_char}
  The following conditions are equivalent:
  \begin{enumerate}
    \item\label{item:weq_theories_char_1} The theories $\Th_{w}$ and $\Th_{s}$ are Morita equivalent
    \item\label{item:weq_theories_char_2} The condition of \cref{def:th_weq} holds for every cellular model of $\Th_{w}$.
    \item\label{item:weq_theories_char_3} The condition of \cref{def:th_weq} holds for every finite cellular model of $\Th_{w}$.
  \end{enumerate}
\end{prop}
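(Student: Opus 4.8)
The plan is to prove the cycle of implications $(\ref{item:weq_theories_char_1}) \Rightarrow (\ref{item:weq_theories_char_2}) \Rightarrow (\ref{item:weq_theories_char_3}) \Rightarrow (\ref{item:weq_theories_char_1})$, following the standard pattern for reducing a statement about all cofibrant objects to the cell complexes that build them. The first two implications are formal: every cellular model $\Init_{w}[X]$ is cofibrant, since $\Init_{w} \to \Init_{w}[X]$ is a composite of pushouts of maps in $I$ and hence a cofibration, and it is contextual, being a cellular extension of the contextual model $\Init_{w}$; so condition~(\ref{item:weq_theories_char_1}) specialises to~(\ref{item:weq_theories_char_2}). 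A finite cellular model is in particular a cellular model, so~(\ref{item:weq_theories_char_2}) specialises to~(\ref{item:weq_theories_char_3}). The substance is in the two remaining implications.

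For $(\ref{item:weq_theories_char_3}) \Rightarrow (\ref{item:weq_theories_char_2})$, I would write an arbitrary cellular model $\Init_{w}[X]$ as a filtered colimit $\colim_{i} \Init_{w}[X_{i}]$ of finite cellular models, where $X_{i}$ ranges over the poset of finite subfamilies of the generators of $X$ that are closed under the dependencies recorded in the contexts and types of the generators. Each generator refers to only finitely many predecessors, and these dependencies are of bounded depth, so every finite subfamily is contained in a finite closed one; a union of two finite closed subfamilies is again finite and closed, so this poset is directed and its colimit is $\Init_{w}[X]$. Since $L_{s}$ preserves colimits and sends $\Init_{w}[X_{i}]$ together with the inclusion maps to $\Init_{s}[X_{i}]$ together with the corresponding inclusions, we get $\Init_{s}[X] = L_{s}\,\Init_{w}[X] \simeq \colim_{i} \Init_{s}[X_{i}]$, and by naturality of $\eta$ the morphism $\eta^{X}$ restricts to $\eta^{X_{i}}$ at each stage. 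It then remains to check that a filtered colimit of weak contextual equivalences is a weak contextual equivalence: filtered colimits in $\CMod_{w}$ and $\CMod_{s}$ are computed sortwise, so any context, type, or term of the colimit already lives at some finite stage $i$; applying the weak type/term lifting property of $\eta^{X_{i}}$ there and pushing the resulting lift and its witnessing path forward along the colimit cocone — using that the structure maps are morphisms of models and hence strictly preserve $\Id$ and $\refl$ — yields the required lift and path in $\Init_{s}[X]$. Hence $\eta^{X}$ is a weak contextual equivalence, which is~(\ref{item:weq_theories_char_2}).

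For $(\ref{item:weq_theories_char_2}) \Rightarrow (\ref{item:weq_theories_char_1})$, let $\CC$ be a cofibrant contextual model of $\Th_{w}$. Factor $\Init_{w} \to \CC$ through the (cofibration, strong contextual equivalence) weak factorisation system as $\Init_{w} \to \CC' \xrightarrow{r} \CC$, so that $\CC'$ is a cellular model (the cellular replacement of $\CC$) and $r$ is a strong contextual equivalence. Since $\Init_{w} \to \CC$ is a cofibration and $r$, being a strong contextual equivalence, has the right lifting property against all cofibrations, there is a lift $s : \CC \to \CC'$ with $r \circ s = \id_{\CC}$, exhibiting $\CC$ as a retract of $\CC'$. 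Applying $L_{s}$ and using naturality of $\eta$ turns this into a retract of arrows expressing $\eta_{\CC}$ as a retract of $\eta_{\CC'}$. By hypothesis~(\ref{item:weq_theories_char_2}), $\eta_{\CC'}$ is a weak contextual equivalence, and weak contextual equivalences are closed under retracts by \cref{prop:sweq_props}, item~\ref{itm:sweq_props_8}; hence $\eta_{\CC}$ is a weak contextual equivalence, which is precisely condition~(\ref{item:weq_theories_char_1}).

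The main obstacle I anticipate is the $(\ref{item:weq_theories_char_3}) \Rightarrow (\ref{item:weq_theories_char_2})$ step: carefully presenting a general, possibly transfinitely built, cellular model as a filtered colimit of finite cellular models, and verifying that weak contextual equivalences are stable under such filtered colimits. Both rely on finiteness properties of $\CMod_{w}$ and $\CMod_{s}$ as locally finitely presentable categories of models of generalised algebraic theories — in particular on the sortwise computation of filtered colimits and on the dependency structure of cellular extensions. By contrast, the retract argument for $(\ref{item:weq_theories_char_2}) \Rightarrow (\ref{item:weq_theories_char_1})$ is routine once the closure of weak contextual equivalences under retracts, already established, is in hand.
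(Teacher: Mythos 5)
Your proposal is correct and takes essentially the same route as the paper: for (\ref{item:weq_theories_char_2})$\Rightarrow$(\ref{item:weq_theories_char_1}) the paper likewise exhibits a cofibrant contextual model as a retract of a cellular replacement and invokes closure of weak contextual equivalences under retracts, and for (\ref{item:weq_theories_char_3})$\Rightarrow$(\ref{item:weq_theories_char_2}) it likewise writes $\Init_{w}[X]$ as the filtered colimit of its finite cellular subextensions, uses that $L_{s}$ preserves these colimits, and lifts each type or term at a finite stage where it is supported. Your write-up only adds detail (the explicit WFS lifting argument producing the retract, and the pushforward of lifts along the colimit cocone) that the paper leaves implicit.
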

\begin{proof} The forward implications trivially hold. We show the reverse implications.
   \begin{description}
     \item[(\ref{item:weq_theories_char_2} $\Ra$ \ref{item:weq_theories_char_1})]
       Take a cofibrant contextual model $\CC$. It is the retract of some cellular model $\Init_{w}[X]$. Then $L_{s}\ \CC$ is also a retract of $L_{s}\ \Init_{w}[X]$, and furthermore $\eta_{\CC} : \CC \to L_{s}\ \CC$ is a retract of $\eta^{X} : \Init_{w}[X] \to L_{s}\ \Init_{w}[X]$.
       Since weak contextual equivalences are closed under retracts and $\eta^{X}$ is a weak contextual equivalence by assumption, $\eta_{\CC} : \CC \to L_{s}\ \CC$ is also a weak contextual equivalence.
     \item[(\ref{item:weq_theories_char_3} $\Ra$ \ref{item:weq_theories_char_2})]
       For this we rely on some well-known properties of locally finitely presentable categories and freely generated models that we do not prove in this paper, since the proofs are quite lengthy, and not required for the main results of this paper. The idea is that since we consider finitary type theories, any type or term of a freely generated model is supported by a finite subset of generators. \\
       Let $\Init_{w}[X]$ be a cellular model of $\Th_{w}$. We know that $\Init_{w}[X]$ is the filtered colimit $\colim\limits_{Y \rat X} \Init_{w}[Y]$ of its finite cellular subextensions. Since left adjoints preserve colimits, $\Init_{s}[X]$ can be computed as the filtered colimit $\colim\limits_{Y \rat X} \Init_{s}[Y]$. For every type $A$ or term $a$ of $\Init_{s}[X]$, there merely exists a finite cellular subextension $Y \rat X$ such that the type or term already exists in $\Init_{s}[Y]$. Using condition \ref{item:weq_theories_char_3}, we can then compute a lift of $A$ or $a$ in $\Init_{w}[Y]$.
   \end{description}
\end{proof}

%%% Local Variables:
%%% mode: latex
%%% TeX-master: "main"
%%% End:
\section{Coherence for strict type theories}\label{sec:conservativity_strict}

In this section we specialize the relationship between strong contextual equivalences and fibrant congruences to the setting of equational extensions of theories.
As a byproduct, we obtain a decomposition of Hofmann's proof of the conservativity of extensional type theories over type theories satisfying the UIP principle.
We assume given a weak type theory $\Th_{w}$ and a strong type theory $\Th_{s}$ extending $\Th_{w}$ by a family of equations $\Th_{e}$.

\begin{defi}
  We say that a contextual congruence $\widetilde{\CC}$ over a model $\CC$ of $\Th_{w}$ includes the marked equations of the equational extension $\Th_{e}$ if, for every finite cellular model $\Init_{w}[X]$, marked equation $p : \abs{\Tm_{\Init_{w}[X]}}\ (\Id\ a\ b)$, object $\Gamma : \abs{\CC}$ and morphism $F : \Init_{w}[X] \to (\CC \sslash \Gamma)$, we have $(F\ a \sim F\ b) \in \widetilde{\CC}$ and $(F\ p \sim \refl\ \{F\ a\}) \in \widetilde{\CC}$.
 
  For example, for the extension from weak identity types to strong identity types, this says that $(\J\ P\ d\ \refl \sim d) \in \widetilde{\CC}$ and $(\J_{\beta}\ P\ d \sim \refl\ \{d\}) \in \widetilde{\CC}$ for all relevant arguments.
  \defiEnd
\end{defi}

\begin{lem}\label{lem:congruence_s}
  Let $\CC$ be a contextual model of $\Th_{w}$.
  Assume given a fibrant contextual congruence $\widetilde{\CC}$ over $\CC$, that is compatible with $\Th_{w}$ and includes the marked equations of the equational extension $\Th_{e}$.
  Then the quotient $\Quot_{\widetilde{\CC}}$ is a model of $\Th_{s}$.
\end{lem}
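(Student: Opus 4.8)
The plan is to feed $\widetilde{\CC}$ into \cref{prop:fib_cong_quotient}. This already gives the quotient $\Quot_{\widetilde{\CC}}$ as a contextual model of $\Th_{w}$, together with a quotient inclusion $\quot_{\widetilde{\CC}} : \CC \to \Quot_{\widetilde{\CC}}$ that is a strong contextual equivalence and is effective, i.e. $\ker \quot_{\widetilde{\CC}} = \widetilde{\CC}$. Since $\Th_{s}$ is an equational extension of $\Th_{w}$ — it adds no new operations, only the strict versions of the equations marked in $\Th_{e}$ — it then suffices to check that every marked equation of $\Th_{e}$ holds strictly in $\Quot_{\widetilde{\CC}}$.

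So fix a finite cellular model $\Init_{w}[X]$, a marked equation $p : \abs{\Tm_{\Init_{w}[X]}}_{\diamond}\ (\Id\ \{A\}\ a\ b)$, an object $\Delta : \abs{\Quot_{\widetilde{\CC}}}$ and a morphism $G : \Init_{w}[X] \to (\Quot_{\widetilde{\CC}} \sslash \Delta)$; I must show that $G$ sends $p$ to a reflexivity equality. First I would lift the base object: since $\quot_{\widetilde{\CC}}$ is a strong contextual equivalence and $\Quot_{\widetilde{\CC}}$ is contextual, an induction on the length of $\Delta$ (lifting the successive types using that strong contextual equivalences form a contextual class, hence restrict to slices) produces an object $\Gamma : \abs{\CC}$ with $\quot_{\widetilde{\CC}}\ \Gamma = \Delta$, and the restriction $(\quot_{\widetilde{\CC}} \sslash \Gamma) : (\CC \sslash \Gamma) \to (\Quot_{\widetilde{\CC}} \sslash \Delta)$ is again a strong contextual equivalence. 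The model $\Init_{w}[X]$ is cofibrant, being a cellular extension of the initial model, so the map $\Init_{w} \to \Init_{w}[X]$ has the left lifting property against $(\quot_{\widetilde{\CC}} \sslash \Gamma)$; filling the lifting square whose top edge is the unique morphism $\Init_{w} \to (\CC \sslash \Gamma)$ and whose bottom edge is $G$ produces $\widetilde{G} : \Init_{w}[X] \to (\CC \sslash \Gamma)$ with $\widetilde{G} \cdot (\quot_{\widetilde{\CC}} \sslash \Gamma) = G$.

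Now, because $\widetilde{\CC}$ includes the marked equations of $\Th_{e}$ (that definition implicitly uses that a contextual congruence induces a congruence on every contextual slice), we get $(\widetilde{G}\ a \sim \widetilde{G}\ b) \in \widetilde{\CC}$ and $(\widetilde{G}\ p \sim \refl\ \{\widetilde{G}\ a\}) \in \widetilde{\CC}$. Applying $(\quot_{\widetilde{\CC}} \sslash \Gamma)$ and using effectiveness ($\ker \quot_{\widetilde{\CC}} = \widetilde{\CC}$, restricted to the slice over $\Gamma$) together with the fact that morphisms of models preserve $\refl$, this yields $G\ a = G\ b$ and $G\ p = \refl\ \{G\ a\}$, so $G$ sends $p$ to a reflexivity equality. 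As $\Delta$, $G$ and the marked equation were arbitrary, the $\Th_{w}$-model $\Quot_{\widetilde{\CC}}$ satisfies all the equations of $\Th_{s}$, hence is a model of $\Th_{s}$.

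The proof is short once \cref{prop:fib_cong_quotient} is in hand; what requires the most care is checking that the notions involved — fibrant contextual congruences, strong contextual equivalences, and cofibrancy of cellular models — all behave well under passage to the contextual slices $(\CC \sslash \Gamma)$ over which marked equations are formulated, so that the lifting of $G$ against $\quot_{\widetilde{\CC}}$ can genuinely be performed slice-wise. The lifting step itself — lifting a cellular, hence cofibrant, model against a strong contextual equivalence — is the conceptual core, but it is a direct instance of the weak factorization system recalled in \cref{ssec:cellular_models}.
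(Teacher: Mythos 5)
Your proposal is correct and follows essentially the same route as the paper's proof: lift the base object along the surjective-on-contexts quotient map, lift the morphism $\Init_{w}[X] \to (\Quot_{\widetilde{\CC}} \sslash \Delta)$ to $(\CC \sslash \Gamma)$ using that $\quot_{\widetilde{\CC}}$ restricts to a strong contextual equivalence on contextual slices (the paper does this by choosing preimages of the generating terms, which is exactly the content of your lifting-property argument for the cellular cofibration), and then push the congruences $(\widetilde G\,a \sim \widetilde G\,b)$ and $(\widetilde G\,p \sim \refl)$ through the quotient. The only cosmetic difference is that you invoke the weak factorization system where the paper argues directly via the universal property of $\Init_{w}[X]$; the substance is identical.
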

\begin{proof}
  Since the congruence $\widetilde{\CC}$ is fibrant, we can form the quotient $\Quot_{\widetilde{\CC}}$, and we know that the quotienting map $\quot_{\widetilde{\CC}} : \CC \to \Quot_{\widetilde{\CC}}$ is a strong contextual equivalence. Because $\widetilde{\CC}$ is compatible with $\Th_{w}$, the quotient $\Quot_{\widetilde{\CC}}$ is a model of $\Th_{w}$, and $\quot_{\widetilde{\CC}} : \CC \to \Quot_{\widetilde{\CC}}$ is a morphism of models of $\Th_{w}$.

  To show that $\Quot_{\widetilde{\CC}}$ is a model of $\Th_{s}$, it suffices to check that it satisfies all of the necessary equations.

  Take a cellular model $\Init_{w}[X]$ and a marked equation $p : \abs{\Tm_{\Init_{w}[X]}}\ (\Id\ a\ b)$.
  We need to check that for every morphism $F : \Init_{w}[X] \to (\Quot_{\widetilde{\CC}} \sslash \Gamma)$, $F$ maps $p$ to the reflexivity equality.

  Take such a morphism $F$.
  Since $\CC$ is contextual and $\quot_{\widetilde{\CC}}$ is a strong contextual equivalence, $\quot_{\widetilde{\CC}} : \CC \to \Quot_{\widetilde{\CC}}$ is surjective on contexts. Therefore we have some $\Gamma_0 : \abs{\CC}$ such that $\quot_{\widetilde{\CC}}\  \Gamma_{0} = \Gamma$.
  Note that the morphism $\quot_{\widetilde{\CC}} : \CC \to \Quot_{\widetilde{\CC}}$ can be restricted to $\quot_{\widetilde{\CC}} : (\CC \sslash \Gamma_{0}) \to  (\Quot_{\widetilde{\CC}} \sslash \Gamma)$.

  We will now construct a morphism $F_{0} : \Init_{w}[X] \to (\CC \sslash \Gamma_{0})$ such that $F_{0} \cdot \quot_{\widetilde{\CC}} = F$.
  The universal property of $\Init_{w}[X]$ says that $F$ is determined by the images of the generating terms of $X$.
  To construct $F_{0}$, we just have to pick a lift along $\quot_{\widetilde{\CC}}$ of the images of these generating terms.
  This is possible since $\quot_{\widetilde{\CC}}$ is a strong contextual equivalence.

  By hypothesis, $(F_{0}\ a \sim F_{0}\ b) \in \widetilde{\CC}$ and $(F_{0}\ p \sim \refl) \in \widetilde{\CC}$. Therefore, $\quot_{\widetilde{\CC}}\ (F_{0}\ a) = \quot_{\widetilde{\CC}}\ (F_{0}\ b)$ and $\quot_{\widetilde{\CC}}\ (F_{0}\ p) = \refl$, as needed.
 
  Thus all marked equations hold strictly in $\Quot_{\widetilde{\CC}}$, which is therefore a model of $\Th_{s}$.
\end{proof}

\begin{lem}\label{lem:congruence_seq}
  Let $\Init_{w}[X]$ be a cellular model of $\Th_{w}$.
  Assume that there exists a congruence $\widetilde{\Init_{w}[X]}$ over $\Init_{w}[X]$ satisfying the conditions of \cref{lem:congruence_s} and that is additionally included in the kernel $\ker \eta^{X}$, i.e. any types or terms that are congruent in $\widetilde{\Init_{w}[X]}$ are identified by $\eta^{X}$.
  Then the morphism $\eta^{X} : \Init_{w}[X] \to \Init_{s}[X]$ is a strong contextual equivalence.
\end{lem}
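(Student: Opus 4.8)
The plan is to exhibit $\eta^{X}$ as the composite of the quotient map of the congruence $\widetilde{\Init_{w}[X]}$ with an isomorphism, so that being a strong contextual equivalence follows from \cref{lem:congruence_s}, \cref{prop:fib_cong_quotient}, and the closure properties collected in \cref{prop:sweq_props}.

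First, since $\Init_{w}[X]$ is a cellular (hence contextual) model and $\widetilde{\Init_{w}[X]}$ satisfies the hypotheses of \cref{lem:congruence_s}, the quotient $\Quot \triangleq \Quot_{\widetilde{\Init_{w}[X]}}$ is a model of $\Th_{s}$, the quotient map $\quot \triangleq \quot_{\widetilde{\Init_{w}[X]}} : \Init_{w}[X] \to \Quot$ is a morphism of models of $\Th_{w}$, and by \cref{prop:fib_cong_quotient} it is a strong contextual equivalence. As $\quot$ is a $\Th_{w}$-morphism into the $\Th_{s}$-model $\Quot$, the universal property of the unit $\eta^{X}$ of the adjunction $(L_{s} \dashv R_{s})$ yields a unique $\Th_{s}$-morphism $G : \Init_{s}[X] \to \Quot$ with $\eta^{X} \cdot G = \quot$. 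Dually, the hypothesis $\widetilde{\Init_{w}[X]} \subseteq \ker \eta^{X}$ together with the universal property of the quotient (\cref{prop:fib_cong_quotient}, item (\ref{itm:fib_cong_quotient_1})) yields a unique $\Th_{w}$-morphism $H : \Quot \to \Init_{s}[X]$ with $\quot \cdot H = \eta^{X}$; since $R_{s}$ is fully faithful, $H$ is automatically a morphism of models of $\Th_{s}$.

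Next I would check that $G$ and $H$ are mutually inverse, by playing the two universal properties against each other. The $\Th_{w}$-morphism $H \cdot G : \Quot \to \Quot$ satisfies $\quot \cdot (H \cdot G) = \eta^{X} \cdot G = \quot$, so by the uniqueness clause of the universal property of the quotient it equals $\id_{\Quot}$; and the $\Th_{s}$-morphism $G \cdot H : \Init_{s}[X] \to \Init_{s}[X]$ satisfies $\eta^{X} \cdot (G \cdot H) = \quot \cdot H = \eta^{X}$, so by the uniqueness clause of the universal property of the unit $\eta^{X}$ it equals $\id_{\Init_{s}[X]}$. Hence $G$ is an isomorphism with inverse $H$. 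Finally $\eta^{X} = \quot \cdot H$, where $\quot$ is a strong contextual equivalence and $H = G^{-1}$ is an isomorphism, hence a contextual isomorphism (\cref{prop:sweq_props}, item (\ref{itm:sweq_props_1})) and a strong contextual equivalence (\cref{prop:sweq_props}, item (\ref{itm:sweq_props_2})); since strong contextual equivalences are closed under composition (\cref{prop:sweq_props}, item (\ref{itm:sweq_props_4})), $\eta^{X}$ is a strong contextual equivalence. The only point requiring care is the bookkeeping of which maps are $\Th_{w}$- versus $\Th_{s}$-morphisms, so that the correct universal property and its uniqueness clause can be invoked at each step (in particular, $\quot$ is only a $\Th_{w}$-morphism, whereas $G$, $H$ and their composites are $\Th_{s}$-morphisms); the rest is formal.
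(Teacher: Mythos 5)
Your proof is correct and follows essentially the same route as the paper: factor $\eta^{X}$ through the quotient $\Quot_{\widetilde{\Init_{w}[X]}}$ using the kernel hypothesis, use \cref{lem:congruence_s} to make the quotient a model of $\Th_{s}$, and transpose $\quot$ along the adjunction $(L_{s}\dashv R_{s})$ to get the comparison map back. The only (harmless) difference is at the end: you prove both composites are identities, so that $\Quot_{\widetilde{\Init_{w}[X]}}\cong\Init_{s}[X]$ and $\eta^{X}$ is $\quot$ followed by an isomorphism, whereas the paper only establishes the one-sided identity $s\cdot r=\id$ and concludes by exhibiting $\eta^{X}$ as a retract of $\quot$ and using closure of strong contextual equivalences under retracts.
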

\begin{proof}
  The inclusion $\widetilde{\Init_{w}[X]} \subseteq \ker \eta^{X}$ implies, by the universal property of the quotient $\Quot_{\widetilde{\Init_{w}[X]}}$, that $\eta^{X}$ factors through $\quot_{\widetilde{\Init_{w}[X]}}$; we have $r : \Quot_{\widetilde{\Init_{w}[X]}} \to \Init_{s}[X]$ such that $\eta^{X} = \quot_{\widetilde{\Init_{w}[X]}} \cdot r$.
  \Cref{lem:congruence_s} says that $\Quot_{\widetilde{\Init_{w}[X]}}$ is a model of $\Th_{s}$.
  The universality of the arrow $\eta^{X} : \Init_{w}[X] \to \Init_{s}[X]$ then provides a section $s : \Init_{s}[X] \to \Quot_{\widetilde{\Init_{w}[X]}}$ of $r$.
  By the universal property of $\Init_{w}[X]$, we also have that $\eta^{X} \cdot s = \quot_{\widetilde{\Init_{w}[X]}}$.

  \begin{center}\(\begin{tikzcd}
      & \Quot_{\widetilde{\Init_{w}[X]}} \ar[rd, shift right, swap, "r"] & \\
      \Init_{w}[X] \ar[ru, "\quot_{\widetilde{\Init_{w}[X]}}"] \ar[rr, swap, "\eta^{X}"] && \Init_{s}[X] \ar[lu, shift right, swap, "s"]
    \end{tikzcd}\)\end{center}
  We can now see that $\eta^{\CC}$ is a retract of $\quot_{\widetilde{\CC}}$: the following diagram commutes.
  \begin{center}\(\begin{tikzcd}
      \Init_{w}[X] \ar[r, equal] \ar[d, "\eta^{X}"] &
      \Init_{w}[X] \ar[r, equal] \ar[d, "\quot_{\widetilde{\Init_{w}[X]}}"] &
      \Init_{w}[X] \ar[d, "\eta^{X}"] \\
      \Init_{s}[X] \ar[r, "s"] &
      \Quot_{\widetilde{\Init_{w}[X]}} \ar[r, "r"] &
      \Init_{s}[X]
    \end{tikzcd}\)\end{center}
  The left square of that diagram commutes thanks to the universal property of $\Init_{w}[X]$.

  Since $\quot_{\widetilde{\CC}}$ is a strong contextual equivalence and strong contextual equivalences are closed under retracts, $\eta_{\CC}$ is also a strong contextual equivalence.
\end{proof}

\begin{thm}\label{thm:conservativity_strict}
  Let $\Th_{w}$ be a type theory over the theory of cumulative CwFs with universes and weak identity types that includes the UIP principle and let $\Th_{s}$ be the extension of $\Th_{w}$ with the equality reflection rule.

  If either of the following two conditions holds, then the theories $\Th_{w}$ and $\Th_{s}$ are Morita equivalent.
  \begin{enumerate}
    \item The theory $\Th_{w}$ includes $\Pi$-types with a strict $\beta$-rule (and function extensionality).
    \item The theory $\Th_{w}$ is semi-model.
  \end{enumerate}
\end{thm}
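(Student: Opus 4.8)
The plan is to reduce, via \cref{prop:weq_theories_char}, to checking that for every cellular model $\CC = \Init_{w}[X]$ of $\Th_{w}$ the unit $\eta^{X} : \Init_{w}[X] \to \Init_{s}[X]$ is a weak contextual equivalence, and then to apply \cref{lem:congruence_seq}. By that lemma it suffices to produce, on each cellular $\CC$, a fibrant contextual congruence $\widetilde{\CC}$ which is compatible with $\Th_{w}$, includes the marked equations of $\Th_{e}$, and satisfies $\widetilde{\CC} \subseteq \ker \eta^{X}$; since $\Th_{s}$ is the extension of $\Th_{w}$ by the equality reflection rule, ``the marked equations of $\Th_{e}$'' here means all internal equalities of $\Th_{w}$.

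On each cellular model $\CC$ I would take $\widetilde{\CC}$ to be the \emph{path-congruence}, defined internally to $\CPsh\ \CC$ (and, by the same formula, on every contextual slice): $A \sim B$ holds iff $\trunc{\Tm_{\CC}\ (\Id\ \{\UU\}\ A\ B)}$, and for $a : \Tm_{\CC}\ A$, $b : \Tm_{\CC}\ B$ the relation $a \sim b$ holds iff there merely exists a pair $(q, r)$ with $q : \Tm_{\CC}\ (\Id\ \{\UU\}\ A\ B)$ and $r : \Tm_{\CC}\ (\Id\ (q^{\star}\ a)\ b)$, where $q^{\star}$ is transport along $q$ in $\El$. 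Under the UIP principle these are mere propositions and the choice of witnesses is irrelevant, so $\widetilde{\CC}$ is well defined, and reflexivity, symmetry and transitivity follow from $\refl$ and from inversion and composition of internal equalities (the needed coherences again reducing to UIP). The congruence is automatically fibrant: for $A \sim B$ witnessed by $q$ and any $a : \Tm_{\CC}\ A$, the term $q^{\star}\ a : \Tm_{\CC}\ B$ satisfies $a \sim q^{\star}\ a$ by $\refl$. It includes the marked equations essentially by construction, because each marked equation is witnessed by a genuine internal equality $p : \Tm\ (\Id\ a\ b)$ of $\CC$ (transported along any morphism out of the relevant finite cellular model), which is moreover related to $\refl$ by UIP. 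And $\widetilde{\CC} \subseteq \ker \eta^{X}$ is immediate: $\Init_{s}[X]$ satisfies the equality reflection rule, so $\eta^{X}$ sends any path between types or terms to a strict equality.

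The one genuinely non-trivial point — and the main obstacle — is that the path-congruence is \emph{compatible} with $\Th_{w}$, i.e. that every type- and term-former of $\Th_{w}$ sends related arguments to related results, including the action on telescopes (clause \ref{itm:cong_compat} of the definition of fibrant congruence). This is a ``fundamental theorem'' for $\Th_{w}$: for operations with first-order arguments it is just the $\ap$ operation derived from the weak identity types, but for operations with higher-order arguments (the eliminators, $\lam$, $\funext$, \dots) one has to commute $\ap$ past the binders, which is what the lifting of identity types to telescopes of \cref{ssec:lift_tele} and the parametrized eliminators of \cref{ssec:param_id_elim} are for. This is precisely where the two hypotheses come in, as two sufficient conditions. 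If $\Th_{w}$ has $\Pi$-types with a strict $\beta$-rule and function extensionality, the congruence laws for the $\Pi$-type operations — in particular for $\funext$, which depends on the $\Pi$-encoding of homotopies (cf.\ the discussion after \cref{def:weak_pi}) and on \cref{prop:weq_pi_singl} — can be proved directly. If instead $\CMod_{w}^{\cxl}$ is a semi-model category, the same compatibility is obtained from the behaviour of weak equivalences, fibrations and cofibrations supplied by the semi-model structure. Once compatibility is in hand, \cref{lem:congruence_s} shows that $\Quot_{\widetilde{\CC}}$ is a model of $\Th_{s}$, and \cref{lem:congruence_seq} concludes that $\eta^{X}$ is a strong — hence a weak — contextual equivalence; by \cref{prop:weq_theories_char}, $\Th_{w}$ and $\Th_{s}$ are Morita equivalent.
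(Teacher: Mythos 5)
Your skeleton is exactly the paper's: reduce to cellular models via \cref{prop:weq_theories_char}, define the same UIP-based path congruence ($A \sim B$ iff some $p : \Id\ \{\UU\}\ A\ B$ exists, $a \sim b$ iff some $q : \Id\ (p^{\star}a)\ b$ exists), check fibrancy, inclusion of the marked equations, and $\widetilde{\CC} \subseteq \ker \eta^{X}$, then conclude with \cref{lem:congruence_s} and \cref{lem:congruence_seq}. The genuine gap sits precisely at the point you yourself flag as the main obstacle: compatibility of the congruence with operations having higher-order arguments is asserted, not proved, and the mechanisms you gesture at are not the ones that make it work. Under hypothesis (1) the issue is not ``proving the congruence laws for the $\Pi$-type operations directly'' (and \cref{prop:weq_pi_singl} plays no role here); the point is that the strict $\beta$-rule lets you \emph{internalize binders}: an operation with a higher-order argument $B : \Tm\ A \to \Ty$ factors through the telescope $(A : \UU) \times (B : A \to \UU)$ because $\app\ (\lam\ B) = B$ strictly, and then from $p : \Id\ A_{1}\ A_{2}$ and the pointwise family $q : (a : A_{1}) \to \Id\ (B_{1}\ a)\ (B_{2}\ (p^{\star}a))$ one uses $\funext$ to obtain an internal equality between $(A_{1}, \lam\ B_{1})$ and $(A_{2}, \lam\ B_{2})$, to which the $\ap$ of the internalized operation applies exactly as in the first-order case. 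Your attribution of this step to the telescope lifting of \cref{ssec:lift_tele} and the parametrized eliminators of \cref{ssec:param_id_elim} alone does not suffice: without strict $\beta$ (or some substitute) there is no way to turn the pointwise data $q$ into a single equality to which $\ap$ can be applied.

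Under hypothesis (2) you give no argument at all, and this is the entire content of that alternative. The paper's argument encodes the premises of the operation as a finite cellular model $\CC = \Init_{w}[\bm{A} : \UU,\ (a : \bm{A}) \vdash \bm{B}(a) : \UU]$, forms the cylinder $\CD = \Init_{w}[\bm{A_{1}}, \bm{A_{2}}, \bm{p}, \bm{B_{1}}, \bm{B_{2}}, \bm{q}]$, observes that $i_{1} : \CC \to \CD$ is a composite of basic $J$-cellular extensions and hence (since $\CC$ is contextual and cofibrant and $\Th_{w}$ is semi-model) a weak contextual equivalence, deduces by $2$-out-of-$3$ that the common retraction $r : \CD \to \CC$ is one too, and then weakly lifts $\refl : \Id\ (\Pi\ \bm{A}\ \bm{B})\ (\Pi\ \bm{A}\ \bm{B})$ along $r$ to get an equality $\Id\ (\Pi\ \bm{A_{1}}\ \bm{B_{1}})\ (\Pi\ \bm{A_{2}}\ \bm{B_{2}})$ in $\CD$, which is pushed into $\Init_{w}[X]$ along the classifying morphism; this has to be carried out (uniformly) for every operation of $\Th_{w}$, not just the $\Pi$-former, and also yields the action of dependent types and terms on the relations required by the definition of fibrant congruence. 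Without some such argument the hypotheses of \cref{lem:congruence_s} are not verified, so as written the proof is incomplete at its decisive step; the remaining parts (fibrancy, inclusion of the reflection equations, $\widetilde{\CC} \subseteq \ker \eta^{X}$, and the final assembly) are correct and agree with the paper.
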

\begin{proof}
  We have to show that for every cellular model $\Init_{w}[X]$ of $\Th_{w}$, the morphism $\eta^{X} : \Init_{w}[X] \to \Init_{s}[X]$ is a strong contextual equivalence.
  We will do so using \cref{lem:congruence_seq}.

  We define a congruence $\widetilde{\Init_{w}[X]}$ over $\Init_{w}[X]$.
  \begin{itemize}
    \item Two types $A,B : \abs{\Ty_{w}}_{\Gamma}$ are congruent if there exists some equality $p : \abs{\Tm_{w}}\ (\Id\ \{\UU\}\ A\ B)$.
    \item Two terms $a : \abs{\Tm_{w}}_{\Gamma}\ A$ and $b : \abs{\Tm_{w}}_{\Gamma}\ B$ are congruent if there exists some equality $p : \abs{\Tm_{w}}\ (\Id\ \{\UU\}\ A\ B)$ along with some equality $q : \abs{\Tm_{w}}\ (\Id\ \{B\}\ (p^{\star}\ a)\ b)$.
      Since $\Th_{w}$ includes UIP, the choice of $p$ is irrelevant.
  \end{itemize}
  The reflexivity, symmetry and transitivity properties are easily seen to hold.

  We can also check that $\widetilde{\Init_{w}[X]}$ is fibrant.
  Indeed, take any two congruent types $(A \sim B) \in \widetilde{\Init_{w}[X]}$ and a term $a : \Tm\ A$ of type $A$.
  Then we have an equality $p$ between $A$ and $B$, and we obtain a term $p^{\star}\ a : \Tm\ B$ such that $(a \sim p^{\star}\ a) \in \widetilde{\Init_{w}[X]}$.

  We still have to check the actions of dependent types and terms on the relations as well as the compatibility with the operations of $\Th_{w}$.
  All operations can be dealt with uniformly; we will only look at the $\Id$ and $\Pi$ type formers.

  In the case of the identity type former $\Id$, we have types $A_{1},A_{2} : \abs{\Ty_{w}}_{\Gamma}$, and terms $x_{1},y_{1} : \abs{\Tm_{w}}_{\Gamma}\ A_{1}$ and $x_{2},y_{2} : \abs{\Tm_{w}}_{\Gamma}\ A_{2}$ such that $(A_{1} \sim A_{2})$, $(x_{1} \sim x_{2})$ and $(y_{1} \sim y_{2})$, and we need to prove that $(\Id\ \{A_{1}\}\ x_{1}\ y_{1} \sim \Id\ \{A_{2}\}\ x_{2}\ y_{2})$.

  By definition of the relations of $\widetilde{\Init_{w}[X]}$, we have internal equalities $p : \abs{\Tm_{w}}_{\Gamma}\ (\Id\ A_{1}\ A_{2})$,
  $q_{x} : \abs{\Tm_{w}}_{\Gamma}\ (\Id\ (p^{\star}\ x_{1})\ x_{2})$ and
  $q_{y} : \abs{\Tm_{w}}_{\Gamma}\ (\Id\ (p^{\star}\ y_{1})\ y_{2})$.
  Here we have to use UIP to ensure that $q_{x}$ and $q_{y}$ both lie over the same type equality $p$.

  We can now see $\Id$ as an operation from the $\Sigma$-type $(A : \UU) \times (x : A) \times (y : A)$ to $\UU$.
  We have not assumed that $\Th_{w}$ has $\Sigma$-types, but we can use telescopes and the results of \cref{ssec:lift_tele} instead.
  Then from $p$, $q_{x}$ and $q_{y}$ we obtain an equality between $(A_{1}, x_{1}, y_{1})$ and $(A_{2}, x_{2}, y_{2})$ in the $\Sigma$-type (or telescope) $(A : \UU) \times (x : A) \times (y : A)$.
  The action on equalities of $\Id$ then provides an equality between $\Id\ \{A_{1}\}\ x_{1}\ y_{1}$ and $\Id\ \{A_{2}\}\ x_{2}\ y_{2}$, as needed.

  Let's also look at an operation with a higher-order argument: the $\Pi$-type former.
  In that case, we have types $A_{1}, A_{2} : \abs{\Ty_{w}}_{\Gamma}$ and dependent types $B_{1} : \abs{\Ty_{w}}_{\Gamma, (x : A_{1})}$ and $B_{2} : \abs{\Ty_{w}}_{\Gamma, (x : A_{2})}$, such that $(A_{1} \sim A_{2})$ and $(B_{1}\ a_{1} \sim B_{2}\ a_{2})$ for every pair $(a_{1} \sim a_{2})$.
  This means that we can find a type equality $p : \Id\ A_{1}\ A_{2}$ and a dependent type equality $q : (a : A_{1}) \to \Id\ (B_{1}\ a)\ (B_{2}\ (p^{\star}\ a))$.
  We need to construct a type equality between $\Pi\ A_{1}\ B_{1}$ and $\Pi\ A_{2}\ B_{2}$.

  There are then two cases.
  \begin{enumerate}
    \item If $\Th_{w}$ has $\Pi$-types with a strict $\beta$-rule, then we can view the $\Pi$-type former as an operation from the type $(A : \UU) \times (B : A \to \UU)$ to $\UU$.
      We can then conclude as in the case of identity types above.
    \item If the theory $\Th_{w}$ is semi-model, then we view the operation $\Pi$ as the type $\Pi\ \bm{A}\ \bm{B}$ of the cellular model
      \[ \CC \triangleq \Init_{w}[ \bm{A} : \UU, (a : A) \vdash \bm{B}(a) : \UU]. \]
     
      We then consider the cellular model
      \[ \CD \triangleq \Init_{w}\lbrack
      \begin{array}[t]{l} \bm{A_{1}} : \UU, \bm{A_{2}} : \UU, \bm{p} : \Id\ \bm{A_{1}}\ \bm{A_{2}}, \\
        (a : \bm{A_{1}}) \vdash \bm{B_{1}}(a) : \UU, \\
        (a : \bm{A_{2}}) \vdash \bm{B_{2}}(a) : \UU, \\
        (a : \bm{A_{1}}) \vdash \bm{q}(a) : \Id\ \bm{B_{1}}(a)\ \bm{B_{2}}(\bm{p}^{\star}\ a)\ \rbrack.
      \end{array} \]
     
      Our assumptions imply that there is a morphism $F : \CD \to (\Init_{w}[X] \sslash \Gamma)$ that sends $\bm{A_{1}}$ to $A_{1}$, $\bm{A_2}$ to $A_2$, etc.

      Note that we also have two maps $i_{1}, i_{2} : \CC \to \CD$, sending $(\bm{A}, \bm{B})$ respectively to $(\bm{A_{1}}, \bm{B_{1}})$ and $(\bm{A_{2}}, \bm{B_{2}})$.
      The map $i_{1}$ is a composition of two basic $J$-cellular extensions. The first of these two extensions adds $\bm{A_{2}}$ and $\bm{p}$ while the second adds $\bm{B_{2}}$ and $\bm{q}$.
      Since $\Th_{w}$ is semi-model and $\CC$ is contextual and cofibrant, the map $i_{1}$ is a weak equivalence.

      The maps $i_{1}$ and $i_{2}$ also admit a common retraction $r : \CD \to \CC$, which sends $\bm{A_{1}}$ and $\bm{A_{2}}$ to $\bm{A}$, $\bm{B_{1}}$ and $\bm{B_{2}}$ to $\bm{B}$, $\bm{p}$ to the reflexivity equality, and $\bm{q}$ to some proof of $\Id\ \bm{B}(a)\ \bm{B}(\refl^{\star}\ a)$. By $2$-out-of-$3$, the map $r : \CD \to \CC$ is also a weak contextual equivalence.

      Therefore we can lift the reflexivity equality $\refl : \Id\ (\Pi\ \bm{A}\ \bm{B})\ (\Pi\ \bm{A}\ \bm{B})$ from $\CC$ to $\CD$ in order to obtain an equality $\alpha : \Id\ (\Pi\ \bm{A_{1}}\ \bm{B_{1}})\ (\Pi\ \bm{A_{2}}\ \bm{B_{2}})$ in $\CD$.

      Applying the morphism $F : \CD \to (\Init_{w}[X] \sslash \Gamma)$, we obtain an equality between $\Pi\ A_{1}\ B_{1}$ and $\Pi\ A_{2}\ B_{2}$, as needed.

      The reader familiar with the theory of model categories will have noticed that this proof uses the fact that $\CD$ is a cylinder object for $\CC$.
      This method generalizes to arbitrary type-theoretic operations, replacing $\CC$ by a cellular model encoding the premises of the operation and $\CD$ by a suitable cylinder object for $\CC$.
  \end{enumerate}

  We can now conclude the proof.
  The congruence $\widetilde{\Init_{w}[X]}$ is included in $\ker \eta^{X}$, essentially by definition.
  It satisfies all of the conditions of \cref{lem:congruence_s} and \cref{lem:congruence_seq}, and $\eta^{X} : \Init_{w}[X] \to \Init_{s}[X]$ is thus a strong contextual equivalence.
  As this holds for all cellular models of $\Th_{w}$, the theories $\Th_{w}$ and $\Th_{s}$ are Morita equivalent
\end{proof}

\begin{rem}
  \Cref{thm:conservativity_strict} is actually not a generalization of Hofmann's conservativity theorem.
  Indeed, the type theories considered by Hofmann did not include a hierarchy of universes.
  The presence of universes makes the proof simpler and more uniform, since we can use the same relations on types and terms: internal equality.

  In the absence of universes, we have to use another equivalence relation on types.
  One solution is to use local universes~\cite{LocalUniverses}.
  A local universe in a CwF $\CC$ is a pair $(V, E)$ where $V$ is a closed type of $\CC$ and $E$ is a dependent type over $V$.
  A type $A : \abs{\Ty_{\CC}}_{\Gamma}$ is classified by a local universe $(V, E)$ if there is a term $\chi : \abs{\Tm_{\CC}}_{\Gamma}\ V$ such that $A = E[\chi]$.
  For many type theories, including the type theories considered by Hofmann, it is possible to show (for the cellular models) that every type $A$ is classified by some local universe $(V_{A}, E_{A})$, as witnessed by a term $\chi_{A}$.
  For example, in presence of $\Pi$-types and $\Sigma$-types, the type $\Pi\ (x : \Nat)\ (\Id\ x\ x)$ is classified by the local universe $(V, E)$, with
  \begin{alignat*}{2}
    & V && \triangleq (\Nat \to \Nat) \times (\Nat \to \Nat) \\
    & E(f , g) && \triangleq \Pi\ (x : \Nat)\ (\Id\ (f\ x)\ (g\ x)).
  \end{alignat*}
  We can then define a suitable congruence by saying that two types $A, B$ are related if they have the same local universe $(V_{A}, E_{A}) = (V_{B}, E_{B})$, and their classifying terms $\chi_{A}$ and $\chi_{B}$ are internally equal in $V_{A}$.
\end{rem}

%%% Local Variables:
%%% mode: latex
%%% TeX-master: "main"
%%% End:
\section{Type-theoretic higher congruences and coherence for non-strict type theories}\label{sec:conservativity_nonstrict}

In the previous section, we have seen that for type theories with the UIP principle, conservativity and coherence theorems can be proven by constructing some fibrant congruences on the cellular models of the weak type theory.
We now generalize this to type theories without UIP.
The core idea is to use a suitable notion of higher congruence instead of fibrant congruences.
While ordinary congruences can be seen as models valued in setoids, higher congruences should use some notion of weak $\infty$-groupoid.
There could be many possible ways to define higher congruences, based on different definitions of weak $\infty$-groupoids.
What seems to work best is to use a definition that is as close to type theory as possible, inspired by Brunerie's type-theoretic definition of weak $\infty$-groupoid \cite[Appendix A]{BrunerieThesis}.

The higher congruences over models of $\Th_{w}$ are described by a type theory $\Th_{w,2}$ extending the weak type theory $\Th_{w}$. This new type theory is a two-level type theory, in the sense that it has an inner layer and and outer layer.
Two-level type theories~\cite{2LTT} have been introduced to have a setting in which an inner theory with a non-strict equality and perhaps univalent universes, and an outer theory with strict identity types, can interact.
The two-level type theory that we consider is more minimal, with less structures available in the outer layer. Crucially, we don't assume that the outer layer has UIP.
Instead the fact that a model of $\Th_{w,2}$ validates UIP will be our definition of acyclicicy for higher congruences.

\subsection{Brunerie weak \texorpdfstring{$\infty$}{infinity}-groupoids}

Before defining our notion of $\infty$-congruence, we give a type-theoretic definition of weak $\infty$-groupoids that should be more or less equivalent to Brunerie's definition~\cite[Appendix A]{BrunerieThesis}.
This definition won't be used outside of this subsection, but it should provide intuition for the definition of $\infty$-congruence.

In Brunerie's definition of $\infty$-groupoids, an $\infty$-groupoid consists of a globular set $G$, along with structure given by interpretations of all coherence laws definable in some type theory.
The additional structure on globular sets is \emph{structure} in the categorical sense, which means that the forgetful functor from Brunerie $\infty$-groupoids to globular sets is faithful.
The definition of this structure is however quite involved.

Instead, we define two notions. The first notion is a notion of generalized $\infty$-groupoid, which is easy to define, but does not consist of only structure over a globular set. Secondly, we define a notion of reduced $\infty$-groupoid by identifying a subcategory of the generalized $\infty$-groupoids for which the additional data only consists of structure over the globular set.

We consider the theory $\Th_{\Id}$ of weak identity types.
A model of $\Th_{\Id}$ is a category $\CC$ with a terminal object, along with a family $(\Ty, \Tm)$ of types and terms, equipped with weak identity types.
Only the singletons are required to be representable, which means that only the context extensions of the form $\Gamma, (y : A, p : \Id\ x\ y)$ are guaranteed to exist in $\CC$.

We say that a \defemph{generalized $\infty$-groupoid} consists of a model $\CC$ of $\Th_{\Id}$ along with a closed type $\star$ of $\CC$.
Given a generalized $\infty$-groupoid $(\CC, \star)$, we have a set $\abs{\Tm_{\CC}}_{\diamond}\ \star$ of points, sets $\abs{\Tm_{\CC}}_{\diamond}\ (\Id\ \{\star\}\ -\ -)$ of $1$-cells, etc, generating a globular set.
The eliminator for the identity type then provides all of the operations of an $\infty$-groupoid.

For example, consider the category $\mathbf{sSet}$ of simplicial sets.
It is a model of $\Th_{\Id}$, and a closed type of $\mathbf{sSet}$ is a Kan complex.
Thus, given any Kan complex $X$, we have a generalized $\infty$-groupoid $(\mathbf{sSet}, X)$.

The circle $\mathbb{S}^{1}$ can be defined as the model of $\Th_{\Id}$ freely generated by a closed type $\star$, a point $\mathsf{base} : \star$ and a path $\mathsf{loop} : \Id\ \mathsf{base}\ \mathsf{base}$.

We now define \defemph{reduced $\infty$-groupoids}. Take a generalized $\infty$-groupoid $(\CC, \star)$.
We can regenerate freely its contexts, types and non-closed terms, starting from the empty context and the closed type $\star$, to obtain another generalized $\infty$-groupoid $(\CC', \star)$ along with a morphism $(\CC', \star) \to (\CC, \star)$ that is bijective on closed terms.
We say that $(\CC, \star)$ is reduced if that map is an isomorphism.
This is analogous to our definition of contextuality in \cref{ssec:contextual_models}, and can also be formalized using an orthogonal factorization system on the category of generalized $\infty$-groupoids, whose right maps are the morphisms that are bijective on closed terms.

It seems that the notion of generalized $\infty$-groupoid provides extra generality that is often useful.
For example, given a model $\CC$ of $\Th_{\Id}$ and two closed types $A, B$ of $\CC$, it is usually easier to compare the generalized $\infty$-groupoids $(\CC, A)$ and $(\CC, A)$ internally to $\CC$ instead of comparing their reduced variants $(\CC', A)$ and $(\CC'', B)$.

\subsection{Type-theoretic higher congruences}

We assume given a weak type theory $\Th_{w}$ and a strong type theory $\Th_{s}$ extending $\Th_{w}$ by a family of equations $\Th_{e}$.
\begin{defi}
  We define a type theory $\Th_{w,2}$ (the indices ${-}_{w,2}$ stand for (weak, two-level)), extending the weak type theory $\Th_{w}$ by:
  \begin{enumerate}
    \item An outer family $(\Ty^{o},\Tm^{o})$.
      Its components are annotated by the superscript ${}^{o}$.

      The outer family is not required to be representable, i.e. models of $\Th_{w,2}$ do not have to support context extensions by variables of outer types.
      The family $(\Ty,\Tm)$ corresponding to the theory $\Th_{w}$ is called the inner family.

      The outer family classifies the terms of the inner family: we have for every universe level $n$ a family $\tm_{n} : \Ty_{n} \to \Ty^{o}$ of codes for inner terms, with isomorphisms $\Tm_{n}\ A \simeq \Tm^{o}\ (\tm_{n}\ A)$.
      We will leave these isomorphisms implicit.

      Thanks to the universes of the inner theory, the types of the inner family are also classified by the outer family: we can pose $\ty_{n} \triangleq \tm_{n+1}\ \UU_{n}$, and we then have isomorphisms $\Ty_{n} \simeq \Tm^{o}\ \ty_{n}$.
    \item The outer family has weak identity types with \emph{representable singletons}.
      This consists of operations $\Id^{o}$, $\refl^{o}$, $\J^{o}$ and $\J_{\beta}^{o}$, as specified in \cref{def:weak_id_elim}.
    \item The outer family has $\Pi$-types with arities in the inner family, with a strict $\beta$-rule.
      This consists of a type former
      \begin{alignat*}{2}
        & \Pi^{o} && : (A : \Ty) (B : \Tm\ A \to \Ty^{o}) \to \Ty^{o} \rlap{\ ,}
      \end{alignat*}
      along with operations $\lam^{o}$, $\app^{o}$, $\funext^{o}$, $\funext_{\beta}^{o}$, $\funext\text{-}\app^{o}$ and $\funext\text{-}\app_{\beta}^{o}$ as specified in \cref{def:weak_pi} and \cref{def:strong_pi}.
      \defiEnd
  \end{enumerate}
\end{defi}

A model of $\Th_{w,2}$ has two representable sorts: the terms of the inner family and the singletons of the outer family.
This means that a context can be extended as normal by variables of arbitrary types of the inner family, but can only be extended by contractible pairs $(y : A,p : \Id^{o}\ \{A\}\ x\ y)$ of the outer family.

In any model of $\Th_{w,2}$, it is possible to turn any outer equality $p : \Tm^{o}\ (\Id^{o}\ \{\tm\ A\}\ x\ y)$ into an inner equality $[p] : \Tm\ (\Id\ \{A\}\ x\ y)$, obtained by transporting $\refl : \Tm\ (\Id\ x\ x)$ over $p : \Tm^{o}\ (\Id^{o}\ x\ y)$ in the family $(\Id\ x\ -)$.

\begin{defi}
  A \defemph{type-theoretic higher congruence} over a model $\CC : \CMod_{w}$ of $\Th_{w}$ is a model $\CD : \CMod_{w,2}$ of $\Th_{w,2}$ along with a weak contextual equivalence $\iota : \CMod_{w}(\CC \to \CD)$.
  \defiEnd
\end{defi}

The notion of weak contextual equivalence may now be slightly ambiguous, since the models of $\Th_{w,2}$ have two kinds of identity types (inner and outer).
We only consider the weak equivalences for the underlying models of $\Th_{w}$.

\begin{defi}
  A model $\CC$ of $\Th_{w,2}$ is said to be \defemph{acyclic} if the family of outer types $\tm : \Ty_{\CC} \to \Ty_{\CC}^{o}$ is $0$-truncated in $\CC$ (with respect to the outer identity types of $\CC$).
  This means that for every term $a : \abs{\Tm_{\CC}}_{\Gamma}\ A$ and loop $p : \abs{\Tm^{o}_{\CC}}_{\Gamma}\ (\Id^{o}\ a\ a)$, we have an inhabitant of $\abs{\Tm^{o}_{\CC}}_{\Gamma}\ (\Id^{o}\ p\ \refl^{o})$.
  We don't require these inhabitants to be stable under substitution.
  \defiEnd
\end{defi}

\begin{defi}\label{def:th_e}
  We say that a model $\CC$ of $\Th_{w,2}$ includes the marked equations of $\Th_{e}$ if, for every finite cellular model $\Init_{w}[X]$, marked equality $p : \abs{\Tm_{\Init_{w}[X]}}\ (\Id\ \{A\}\ a\ b)$, object $\Gamma : \abs{\CC}$ and morphism $F : \Init_{w}[X] \to (\CC \sslash \Gamma)$, we have outer equalities
  \[ \widehat{p} : \abs{\Tm^{o}_{\CC}}_{\Gamma}\ (\Id^{o}\ (F\ a)\ (F\ b)) \]
  and
  \[ \widetilde{p} : \abs{\Tm^{o}_{\CC}}_{\Gamma}\ (\Id^{o}\ p\ [\widehat{p}]), \]
  lifting the marked equations from equalities of the inner layer to equalities of the outer layer.
 
  For example, for the extension from weak identity types to strong identity types, we require, for all relevant arguments, outer terms
  \[ \widehat{\J_{\beta}}\ P\ d : \Tm^{o}\ (\Id^{o}\ (\J\ P\ d\ \refl)\ d), \] and
  \[ \widetilde{\J_{\beta}}\ P\ d : \Tm^{o}\ (\Id^{o}\ (\J_{\beta}\ P\ d)\ [\widehat{\J_{\beta}}\ P\ d]). \]
  The two terms $\widehat{\J_{\beta}}\ P\ d$ and $\widetilde{\J_{\beta}}\ P\ d$ can be seen as non-truncated variants of the equations $\J\ P\ d\ \refl = d$ and $\J_{\beta}\ P\ d = \refl$ of strong identity types.

  We denote by $\Th_{w,2,e}$ the extension of $\Th_{w,2}$ by these lifted equations.
  \defiEnd
\end{defi}

\begin{con}\label{con:cong_from_higher_cong}
  Let $\CC$ be an acyclic model of $\Th_{w,2}$.
  We construct a fibrant congruence $\widetilde{\CC}$ on $\CC$.

  Furthermore, if $\CC$ includes the marked equations of some equational extension $\Th_{e}$, then so does $\widetilde{\CC}$.
\end{con}
\begin{proof}[Construction]
  We work internally to the presheaf category $\CPsh\ \CC$.
  Using the results of \cref{ssec:lift_tele}, we can pretend that we have $\Sigma$-types in this construction.
  \begin{itemize}
    \item Two inner types $A, B : \Ty_{\CC}$ are congruent in $\widetilde{\CC}$ if there exists an outer equality between $A$ and $B$ in $\ty$.
    \item Two inner terms $a : \Tm_{\CC}\ A$ and $b : \Tm_{\CC}\ B$ are congruent in $\widetilde{\CC}$ if there exists an outer equality between $(A, a)$ and $(B, b)$ in $(X : \ty) \times \tm\ X$.
      Because $\CC$ is acyclic, given any two congruent terms $a : \Tm_{\CC}\ A$ and $b : \Tm_{\CC}\ B$ and an outer equality $p : \Id^{o}\ A\ B$, we can find an outer equality $q : \Id^{o}\ (p^{\star}\ a)\ b$ between $a$ and $b$ lying over $p$.
  \end{itemize}
  The fact that these relations are equivalence relations follows from the existence of reflexivity outer equality and inverses and compositions of outer equalities.

  We can also check the fibrancy of $\widetilde{\CC}$.
  Take two congruent inner types $A$ and $B$ and a term $a : \Tm\ A$.
  We have a path $p : \Id^{o}\ A\ B$ and can transport $a$ to $p^{\star}\ a : \Tm\ B$. Furthermore, $(a \sim p^{\star}\ a) \in \widetilde{\CC}$, as needed.

  It remains the compatibility of $\widetilde{\CC}$ with dependent types and terms, as well as with the operations of $\Th_{w}$.
  Since $\widetilde{\CC}$ is defined by the outer identity types of $\CC$, this follows from the fact that all operations of $\CC$ preserve the outer identity types.

  For instance, for the $\Id$-type former, we have $(A_{1} \sim A_{2}) \in \widetilde{\CC}$, $(x_{1} \sim x_{2}) \in \widetilde{\CC}$ and $(y_{1} \sim y_{2}) \in \widetilde{\CC}$.
  Thus we have outer equalities $p : \Id^{o}\ A_{1}\ A_{2}$, $q_{x} : \Id^{o}\ (\alpha^{\star}\ x_{1})\ x_{2}$, and $q_{y} : \Id^{o}\ (\alpha^{\star}\ y_{1})\ y_{2}$.
  Note that the acyclicity of $\CC$ is used to ensure that $q_{x}$ and $q_{y}$ lie over $p$ and not some other outer equality between $A_{1}$ and $A_{2}$.
  We can view $\Id$ in $\CC$ as an operation from $(A : \ty) \times (x : \tm\ A) \times (y : \tm\ A)$ to $\ty$.
  From $p$, $q_{x}$ and $q_{y}$, we obtain some outer equality between $(A_{1}, x_{1}, y_{1})$ and $(A_{2}, x_{2}, y_{2})$, seen as elements of $(A : \ty) \times (x : \tm\ A) \times (y : \tm\ A)$.
  Using the action on outer equalities of $\Id$, we derive an outer equality between $\Id\ \{A_{1}\}\ x_{1}\ y_{1}$ and $\Id\ \{A_{2}\}\ x_{2}\ y_{2}$, as needed.

  For higher-order type-theoretic operations, we need to the outer $\Pi$-types to encode the higher-order arguments.
  For instance, the $\Pi$-type former can be seen as an operation from $(A : \ty) \times (B : \tm\ A \to \ty)$ to $\ty$.
\end{proof}

\subsection{Coherence for non-strict type theories}

\begin{lem}\label{lem:higher_congruence_weq}
  Let $\Init_{w}[X]$ be a cellular model of $\Th_{w}$.
  Assume that there exists a higher congruence $\iota : \Init_{w}[X] \to \CC$ satisfying the following properties:
  \begin{enumerate}
    \item\label{itm:higher_congruence_weq_1} The morphism $\eta^{X} : \Init_{w}[X] \to \Init_{s}[X]$ factors through $\iota$: we have a morphism $G : \CMod_{w}(\CC \to \Init_{s}[X])$ such that $\eta^{X} = \iota \cdot G$.
    \item\label{itm:higher_congruence_weq_2} For every pair of inner types $A,B : \Ty_{\CC}$, if $\Id^{o}\ A\ B$ is inhabited in $\CC$, then $G\ A = G\ B$, and similarly, for every pair of inner terms $a,b : \Tm_{\CC}\ A$, if $\Id^{o}\ a\ b$ is inhabited in $\CC$, then $G\ a = G\ b$.
    \item\label{itm:higher_congruence_weq_0} The higher congruence $\CC$ includes the equations of $\Th_{e}$.
    \item\label{itm:higher_congruence_weq_3} The model $\CC$ of $\Th_{w,2}$ is acyclic.
  \end{enumerate}
  Then $\eta^{X} : \Init_{w}[X] \to \Init_{s}[X]$ is a weak contextual equivalence.
\end{lem}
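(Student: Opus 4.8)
The plan is to follow the same pattern as the proof of \cref{lem:congruence_seq}, but this time exhibiting $\eta^{X}$ as a retract of a \emph{weak} contextual equivalence rather than of a strong one. First I would apply \cref{con:cong_from_higher_cong} to the model $\CC$: by hypotheses (\ref{itm:higher_congruence_weq_0}) and (\ref{itm:higher_congruence_weq_3}), $\CC$ is acyclic and includes the marked equations of $\Th_{e}$, so we obtain a fibrant contextual congruence $\widetilde{\CC}$ over (the underlying model of $\Th_{w}$ of) $\CC$, which is compatible with $\Th_{w}$ and includes the marked equations of $\Th_{e}$. (We may assume $\CC$ is contextual; this holds in all the applications, and in general one passes to an appropriate contextual replacement through which $\iota$ factors, carrying the relevant structure — this routine reduction is the one point of bookkeeping to check below.)

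Next, by \cref{lem:congruence_s} the quotient $\Quot_{\widetilde{\CC}}$ is a model of $\Th_{s}$, and by \cref{prop:fib_cong_quotient} the quotient map $\quot : \CC \to \Quot_{\widetilde{\CC}}$ is a strong contextual equivalence. The congruence $\widetilde{\CC}$ produced by \cref{con:cong_from_higher_cong} is defined purely in terms of the outer identity types of $\CC$, so hypothesis (\ref{itm:higher_congruence_weq_2}) gives $\widetilde{\CC} \subseteq \ker G$, where $G : \CC \to \Init_{s}[X]$ is the morphism of hypothesis (\ref{itm:higher_congruence_weq_1}). By the universal property of the quotient (\cref{prop:fib_cong_quotient},\,(\ref{itm:fib_cong_quotient_1})), $G$ therefore factors as $G = \quot \cdot H$ for some morphism $H : \Quot_{\widetilde{\CC}} \to \Init_{s}[X]$, which is automatically a morphism of models of $\Th_{s}$ since $\Th_{s}$ is an equational extension of $\Th_{w}$ (so $R_{s}$ is fully faithful).

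Now $\iota \cdot \quot : \Init_{w}[X] \to \Quot_{\widetilde{\CC}}$ is the composite of $\iota$ — a weak contextual equivalence by the definition of higher congruence — with the strong, hence weak, contextual equivalence $\quot$, so it is a weak contextual equivalence by \cref{prop:sweq_props},\,(\ref{itm:sweq_props_3}) and (\ref{itm:sweq_props_5}). Since $\Quot_{\widetilde{\CC}}$ is a model of $\Th_{s}$ and $\iota \cdot \quot$ is a morphism of $\Th_{w}$-models into it, the universal property of $\eta^{X}$ (the unit of $L_{s} \dashv R_{s}$ at $\Init_{w}[X]$) produces a $\Th_{s}$-morphism $s : \Init_{s}[X] \to \Quot_{\widetilde{\CC}}$ with $\eta^{X} \cdot s = \iota \cdot \quot$. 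Using hypothesis (\ref{itm:higher_congruence_weq_1}) we then compute $\eta^{X} \cdot (s \cdot H) = (\iota \cdot \quot) \cdot H = \iota \cdot G = \eta^{X}$, so by the uniqueness clause of the universal property of $\eta^{X}$ we get $s \cdot H = \id$. Hence $\eta^{X}$ is a retract of $\iota \cdot \quot$: the diagram whose top row is the identity on $\Init_{w}[X]$ (twice), whose bottom row is $\Init_{s}[X] \xrightarrow{s} \Quot_{\widetilde{\CC}} \xrightarrow{H} \Init_{s}[X]$, and whose vertical maps are $\eta^{X}$, $\iota \cdot \quot$, $\eta^{X}$, commutes (the left square by $\eta^{X} \cdot s = \iota \cdot \quot$, the right square by $(\iota \cdot \quot) \cdot H = \eta^{X}$), and both horizontal composites are identities. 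Since weak contextual equivalences are closed under retracts (\cref{prop:sweq_props},\,(\ref{itm:sweq_props_8})), $\eta^{X}$ is a weak contextual equivalence, as required.

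All of the mathematical content sits in \cref{con:cong_from_higher_cong}, which converts acyclicity into fibrancy of the congruence; given that, the present argument is the same diagram chase as in \cref{lem:congruence_seq}. The step I expect to need the most care is the verification that $\widetilde{\CC} \subseteq \ker G$: the term-level relation of $\widetilde{\CC}$ is phrased via an outer equality in the telescope $(X : \ty) \times \tm\ X$, which must be reconciled with the componentwise form of hypothesis (\ref{itm:higher_congruence_weq_2}) using the join constructions of \cref{ssec:lift_tele} together with the acyclicity of $\CC$ (to realign equalities over a common base equality). One should also double-check the contextuality reduction mentioned above, and confirm that $\quot$ and the induced $H$ are morphisms of $\Th_{w}$- and $\Th_{s}$-models in the way used here.
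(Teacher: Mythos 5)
Your proposal is essentially the paper's own proof: it builds the fibrant congruence via \cref{con:cong_from_higher_cong}, passes to a contextual replacement through which $\iota$ factors (the paper does exactly this by restricting the congruence to the contextual core $\cxl\ \CC$ and factoring $\iota$ as $\iota' \cdot j$), quotients, factors $G$ through the quotient using hypothesis (\ref{itm:higher_congruence_weq_2}), obtains the section from the universal property of $\eta^{X}$, and exhibits $\eta^{X}$ as a retract of the composite of a weak with a strong contextual equivalence. The verification $\widetilde{\CC} \subseteq \ker G$ that you flag as the delicate point (reconciling the telescope-level outer equalities with the componentwise hypothesis) is indeed the only place the paper is terse, and your proposed use of acyclicity to realign the term-level equality over the type-level one is the right way to carry it out.
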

\begin{proof}
  We consider the congruence $\widetilde{\CC}$ constructed in \cref{con:cong_from_higher_cong}.
  The congruence $\widetilde{\CC}$ does not meet the conditions of \cref{lem:congruence_s}, but only because $\CC$ is not a contextual model, and we have not shown that non-contextual models admit quotients.
  However, we can simply restrict our congruence to the contextual core $\cxl\ \CC$.
  We write $j$ for the inclusion $j : \cxl\ \CC \to \CC$.

  We obtain a quotient $\Quot_{\widetilde{\CC}}$ that is a model of $\Th_{s}$, and the quotient inclusion $\quot_{\widetilde{\CC}} : \cxl\ \CC \to \Quot_{\widetilde{\CC}}$ is a strong contextual equivalence. The assumption \ref{itm:higher_congruence_weq_2} implies that $\widetilde{\CC} \subseteq \ker (j \cdot G)$.
  We can then conclude similarly to the proof of \cref{lem:congruence_seq}: $(j \cdot G) : \cxl\ \CC \to \Init_{s}[X]$ factors through some map $r : \Quot_{\widetilde{\CC}} \to \Init_{s}[X]$, and since $\Quot_{\widetilde{\CC}}$ is a model of $\Th_{s}$, the universal property of $\Init_{s}[X]$ yields a section $s : \Init_{s}[X] \to \Quot_{\widetilde{\CC}}$ of $r$.

  \begin{center}\(\begin{tikzcd}
      & \cxl\ \CC \ar[d, "\rotatebox{90}{$\sim$}"', "j"] \ar[rrd, "\quot_{\widetilde{\CC}}", "\sim"'] &&& \\
      & \CC \ar[rd, "G"] & & \Quot_{\widetilde{\CC}} \ar[ld, shift right, swap, "r"] & \\
      \Init_{w}[X] \ar[ruu, bend left, "\iota'"] \ar[ru, "\iota", "\sim"'] \ar[rr, swap, "\eta^{X}"] && \Init_{s}[X] \ar[ru, shift right, swap, "s"] &
    \end{tikzcd}\)\end{center}
  Since $\Init_{w}[X]$ is contextual, $\iota$ factors through some map $\iota' : \Init_{w}[X] \to \cxl\ \CC$.
  By $2$-out-of-$3$, $\iota'$ is also a weak equivalence.
  By the universal property of $\Init_{w}[X]$, we also have an equality $\eta^{X} \cdot s = \iota' \cdot \quot_{\widetilde{\CC}}$.

  We can now check that $\eta^{X}$ is a retract of $\iota' \cdot \quot_{\widetilde{\CC}}$.
  Indeed, the following diagram commutes.
  \begin{center}\(\begin{tikzcd}
      \Init_{w}[X] \ar[r, equal] \ar[d, "\eta^{X}"] &
      \Init_{w}[X] \ar[r, equal] \ar[d, "\iota' \cdot \quot_{\widetilde{\CC}}"] &
      \Init_{w}[X] \ar[d, "\eta^{X}"] \\
      \Init_{s}[X] \ar[r, "s"] &
      \Quot_{\widetilde{\CC}} \ar[r, "r"] &
      \Init_{s}[X]
    \end{tikzcd}\)\end{center}

  Since $\iota' \cdot \quot_{\widetilde{\CC}}$ is a composition of two weak equivalences, and weak equivalences are closed under compositions and retracts, $\eta^{X} : \Init_{w}[X] \to \Init_{s}[X]$ is a weak equivalence.
\end{proof}

For every cellular model $\Init_{w}[X]$ of $\Th_{w}$, there is a corresponding cellular model $\Init_{w,2,e}[X]$ of $\Th_{w,2,e}$, defined as the image of $\Init_{w}[X]$ by the left adjoint $L_{w,2,e} : \CMod_{w} \to \CMod_{w,2,e}$ to the forgetful functor $R_{w,2,e} : \CMod_{w,2,e} \to \CMod_{w}$.
We denote the universal morphism in $\CMod_{w}(\Init_{w}[X] \to \Init_{w,2,e}[X])$ by $\eta_{w}^{X}$.
In practice, we will apply \cref{lem:higher_congruence_weq} to the models $\Init_{w,2,e}[X]$, which can be seen as the higher congruences freely generated by the equations of $\Th_{e}$.
% We will need to prove that these models are indeed higher congruences, i.e. that the maps $\eta_{w}^{X} : \Init_{w}[X] \to \Init_{w,2,e}[X]$ are weak contextual equivalences.
First, we show that the morphism $\eta^{X} : \Init_{w}[X] \to \Init_{s}[X]$ factors through $\eta_{w}^{X} : \Init_{w}[X] \to \Init_{w,2,e}[X]$.

\begin{con}\label{con:init_s_two}
  Any model $\CC : \CMod_{w}$ is canonically extended to a model $\CC^{=}$ of $\Th_{w,2}$, whose outer identity types correspond to the definitional equality of $\CC$.
  We construct it in the internal language of $\CPsh\ \CC$.
  The outer family is given by the presheaf universe $\SPsh_{\CC}$.
  \begin{alignat*}{2}
    & \Ty_{\CC^{=}}^{o} && \triangleq \SPsh_{\CC} \\
    & \Tm_{\CC^{=}}^{o}\ A && \triangleq A
    \end{alignat*}
  The outer type families $\tm$ is defined by seeing the presheaf family $\Tm$ as a family of elements of the presheaf universe.
    \begin{alignat*}{2}
      & \tm\ A && \triangleq \Tm\ A
    \end{alignat*}
  The outer identity types and outer $\Pi$-types are given by the extensional equality types and dependent function types of the presheaf model $\CPsh\ \CC$.
  \begin{alignat*}{2}
    & \Id^{o}\ x\ y && \triangleq (x = y) \\
    & \Pi^{o}\ A\ B && \triangleq (x : \Tm\ A) \to B\ x
  \end{alignat*}
  As we have isomorphisms of presheaves $\Singl^{o}\ x \simeq \Unit$ and the terminal presheaf $\Unit$ is representable, the outer family has representable singletons, as required.

  Whenever $\CC$ is actually a model of $\Th_{s}$, then $\CC^{=}$ includes the equations of $\Th_{e}$.
\end{con}

The construction of \cref{con:init_s_two} can be seen as a variant of the Yoneda embedding $\yo : \CC \to \widehat{\CC}$.
The presheaf category $\widehat{\CC}$ is one of the intended models of two-level type theory~\cite{2LTT}, with $\Th_{w}$ or $\Th_{s}$ as the inner theory and extensional type theory as the outer theory.
The Yoneda embedding is a morphism of models of $\Th_{w}$ or $\Th_{s}$ (and a contextual isomorphism).
Because the sort $\Tm^{o}$ is not representable in the theory $\Th_{w,2}$, the construction of $\CC^{=}$ can stay over the category $\CC$ instead of moving to the presheaf category $\widehat{\CC}$.

For any cellular model $\Init_{w}[X]$, by the universality of the arrow $\eta^{X}_{w} : \Init_{w}[X] \to \Init_{w,2,e}[X]$ and \cref{con:init_s_two}, we have a morphism $\eta_{s}^{X} : \Init_{w,2,e}[X] \to (\Init_{s}[X])^{=}$ of models of $\Th_{w,2,e}$ such that $\eta_{w}^{X} \cdot \eta_{s}^{X} = \eta^{X}$ in $\CMod_{w}(\Init_{w}[X] \to \Init_{s}[X])$.

\begin{center}\(\begin{tikzcd}
    \Init_{w}[X] \ar[rr, "\eta^{X}"] \ar[rd, swap, "\eta_{w}^{X}"] && \Init_{s}[X] \\
    & \Init_{w,2,e}[X] \ar[ru, "\eta_{s}^{X}", swap] &
  \end{tikzcd}\)\end{center}

\begin{thm}\label{thm:higher_congruence_weq_cell}
  Let $\Init_{w}[X]$ be a cellular model of $\Th_{w}$.
  If the map $\eta_{w}^{X} : \Init_{w}[X] \to \Init_{w,2,e}[X]$ is a weak contextual equivalence and the model $\Init_{w,2,e}[X]$ is acyclic, then the map $\eta^{X} : \Init_{w}[X] \to \Init_{s}[X]$ is a weak contextual equivalence.
\end{thm}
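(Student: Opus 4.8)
The plan is to apply \cref{lem:higher_congruence_weq} to the type-theoretic higher congruence $\iota \triangleq \eta_{w}^{X} : \Init_{w}[X] \to \CC$, where $\CC \triangleq \Init_{w,2,e}[X]$. This is a higher congruence in the required sense precisely because the first hypothesis of the theorem says that $\eta_{w}^{X}$ is a weak contextual equivalence. It then suffices to verify the four hypotheses of \cref{lem:higher_congruence_weq} for this congruence.

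Two of these hypotheses are immediate. Acyclicity of $\CC$ is the second hypothesis of the theorem. The requirement that $\CC$ include the equations of $\Th_{e}$ holds because $\CC = \Init_{w,2,e}[X]$ is by construction a model of $\Th_{w,2,e}$, and the lifting operations $\widehat{p}$, $\widetilde{p}$ witnessing inclusion of the marked equations (see \cref{def:th_e}) are part of the signature of $\Th_{w,2,e}$. For the factorization hypothesis, I would take $G : \CMod_{w}(\CC \to \Init_{s}[X])$ to be the underlying $\Th_{w}$-morphism of the map $\eta_{s}^{X} : \Init_{w,2,e}[X] \to (\Init_{s}[X])^{=}$ obtained from \cref{con:init_s_two}: the underlying $\Th_{w}$-model of $(\Init_{s}[X])^{=}$ is $\Init_{s}[X]$, and the identity $\eta_{w}^{X} \cdot \eta_{s}^{X} = \eta^{X}$ in $\CMod_{w}$ gives exactly $\iota \cdot G = \eta^{X}$.

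The only hypothesis that needs an argument is the condition that $G$ identifies outer-equal inner types and terms. Here I would use that $\eta_{s}^{X}$ is a morphism of models of $\Th_{w,2}$, hence preserves the outer identity types together with the coding isomorphisms $\Ty_{n} \simeq \Tm^{o}\ \ty_{n}$ and $\Tm_{n}\ A \simeq \Tm^{o}\ (\tm_{n}\ A)$. By \cref{con:init_s_two}, the outer identity type of $(\Init_{s}[X])^{=}$ is the strict equality of the presheaf model $\CPsh\ \Init_{s}[X]$. Therefore an inhabitant of $\Id^{o}\ A\ B$ in $\CC$ is sent by $\eta_{s}^{X}$ to a proof that $\eta_{s}^{X}\ A = \eta_{s}^{X}\ B$ in $(\Init_{s}[X])^{=}$, which through the coding isomorphisms is precisely $G\ A = G\ B$; the same reasoning with $\Tm_{n}\ A \simeq \Tm^{o}\ (\tm_{n}\ A)$ handles inner terms. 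Once all four hypotheses are in place, \cref{lem:higher_congruence_weq} delivers the conclusion that $\eta^{X} : \Init_{w}[X] \to \Init_{s}[X]$ is a weak contextual equivalence.

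I expect the main obstacle not to lie in this step, but upstream: the heavy lifting is done by \cref{lem:higher_congruence_weq} (which routes through the fibrant congruence of \cref{con:cong_from_higher_cong}) and by the two hypotheses of the theorem themselves. Within the present argument the only genuine subtlety is unwinding the condition on $G$ correctly, by noticing that the passage to the $(-)^{=}$-model is exactly what collapses outer equalities of $\CC$ to strict equalities of types and terms of $\Init_{s}[X]$, as required by the second condition of \cref{lem:higher_congruence_weq}.
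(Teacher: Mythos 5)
Your proposal is correct and follows essentially the same route as the paper: it applies \cref{lem:higher_congruence_weq} to the higher congruence $\eta_{w}^{X} : \Init_{w}[X] \to \Init_{w,2,e}[X]$, obtains the factorization via the map $\eta_{s}^{X}$ into $(\Init_{s}[X])^{=}$ from \cref{con:init_s_two}, and discharges condition (2) by noting that $\eta_{s}^{X}$ preserves outer equalities, which in $(\Init_{s}[X])^{=}$ are strict equalities. The only difference is cosmetic: the paper sets up $\eta_{s}^{X}$ and the identity $\eta^{X} = \eta_{w}^{X} \cdot \eta_{s}^{X}$ just before the theorem, whereas you rederive them inside the proof.
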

\begin{proof}
  Since $\eta_{w}^{X}$ is a weak contextual equivalence, $\Init_{w,2,e}[X]$ is indeed a higher congruence on $\Init_{w}[X]$.
  We check that it satisfies the conditions of \cref{lem:higher_congruence_weq}.
  We have just proven that condition \ref{itm:higher_congruence_weq_1} holds: $\eta^{X} : \Init_{w}[X] \to \Init_{s}[X]$ factors through $\eta_{w}^{X} : \Init_{w}[X] \to \Init_{w,2,e}[X]$, as $\eta^{X} = \eta_{w}^{X} \cdot \eta_{s}^{X}$.

  The morphism $\eta_{s}^{X} : \Init_{w,2,e}^{X} \to (\Init_{s}[X])^{=}$ preserves outer equalities, which implies that condition \ref{itm:higher_congruence_weq_2} of \cref{lem:higher_congruence_weq} is satisfied.

  The last two conditions of \cref{lem:higher_congruence_weq}, namely the facts that $\Init_{w,2,e}[X]$ includes the equations of $\Th_{e}$ and is acyclic, hold respectively by definition of $\Init_{w,2,e}[X]$ and by assumption of the present proposition.

  Therefore, by \cref{lem:higher_congruence_weq}, the map $\eta^{X}$ is a weak contextual equivalence.
\end{proof}

\begin{thm}\label{thm:conservativity_nonstrict}
  Let $\Th_{w}$ be a type theory with a cumulative hierarchy of universes and weak identity types, and let $\Th_{s}$ be an extension of $\Th_{w}$ by a family of equations $\Th_{e}$.

  If, for every cellular model $\Init_{w}[X]$ of $\Th_{w}$, the map $\eta_{w}^{X} : \Init_{w}[X] \to \Init_{w,2,e}[X]$ is a weak equivalence and the model $\Init_{w,2,e}[X]$ of $\Th_{w,2}$ is acyclic, then the type theories $\Th_{w}$ and $\Th_{s}$ are Morita equivalent.
\end{thm}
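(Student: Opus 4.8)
The plan is to obtain the statement as a direct assembly of \cref{thm:higher_congruence_weq_cell} with the reduction to cellular models provided by \cref{prop:weq_theories_char}. First I would invoke \cref{prop:weq_theories_char}: to prove that $\Th_{w}$ and $\Th_{s}$ are Morita equivalent it is enough to check, for every cellular model $\Init_{w}[X]$ of $\Th_{w}$, that the unit $\eta_{\Init_{w}[X]} : \Init_{w}[X] \to R_{s}\,(L_{s}\,\Init_{w}[X])$ of the adjunction $(L_{s} \dashv R_{s})$ is a weak contextual equivalence. I would then recall the observation made when setting up this adjunction: since $L_{s}$ preserves colimits and cellular models are built by iterated pushouts, $L_{s}\,\Init_{w}[X] \simeq \Init_{s}[X]$, so that, up to this identification, the unit $\eta_{\Init_{w}[X]}$ is precisely the morphism $\eta^{X} : \Init_{w}[X] \to \Init_{s}[X]$. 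Thus it suffices to show that $\eta^{X}$ is a weak contextual equivalence for every cellular model $\Init_{w}[X]$.

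Second, I would observe that the hypothesis of the theorem supplies, for each cellular model $\Init_{w}[X]$, exactly the two inputs required by \cref{thm:higher_congruence_weq_cell}: the map $\eta_{w}^{X} : \Init_{w}[X] \to \Init_{w,2,e}[X]$ is a weak equivalence, and the model $\Init_{w,2,e}[X]$ of $\Th_{w,2}$ is acyclic. Applying \cref{thm:higher_congruence_weq_cell} then gives that $\eta^{X} : \Init_{w}[X] \to \Init_{s}[X]$ is a weak contextual equivalence. Since this holds for every cellular model of $\Th_{w}$, condition~(\ref{item:weq_theories_char_2}) of \cref{prop:weq_theories_char} is verified, and the proposition concludes that $\Th_{w}$ and $\Th_{s}$ are Morita equivalent.

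There is no genuine obstacle in this final step: the theorem is an immediate corollary of \cref{thm:higher_congruence_weq_cell} together with \cref{prop:weq_theories_char}. The only point that deserves a word of care is the identification of the unit $\eta_{\Init_{w}[X]}$ of $(L_{s} \dashv R_{s})$ with the morphism $\eta^{X}$, which rests on $L_{s}$ preserving the colimits used to construct cellular models. All of the real mathematical content has already been discharged upstream, in \cref{thm:higher_congruence_weq_cell} and, behind it, in \cref{lem:higher_congruence_weq} and the construction of a fibrant congruence from an acyclic higher congruence in \cref{con:cong_from_higher_cong}.
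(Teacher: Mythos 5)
Your proposal is correct and matches the paper's own proof, which is exactly the combination of \cref{prop:weq_theories_char} with \cref{thm:higher_congruence_weq_cell} applied to each cellular model. The extra care you take in identifying the unit $\eta_{\Init_{w}[X]}$ with $\eta^{X}$ via $L_{s}\,\Init_{w}[X] \simeq \Init_{s}[X]$ is a point the paper discharges when setting up the adjunction, so nothing is missing.
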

\begin{proof}
  By \cref{prop:weq_theories_char} and \cref{thm:higher_congruence_weq_cell}.
\end{proof}

%%% Local Variables:
%%% mode: latex
%%% TeX-master: "main"
%%% End:
\section{Existence of freely generated higher congruences}\label{sec:first_weq}

In the previous section, we have established that in order to prove the conservativity of the extension of a theory $\Th_{w}$ by a family of equations $\Th_{e}$, it suffices to check two conditions for each cellular model $\Init_{w}[X]$ of $\Th_{w}$: that the map $\eta_{w}^{X} : \Init_{w}[X] \to \Init_{w,2,e}[X]$ is a weak contextual equivalence and that the model $\Init_{w,2,e}[X]$ is acyclic.

In this section, we investigate the first of these two conditions, which we view as a way to state that the higher congruences freely generated by the equations of $\Th_{e}$ exist.
Our claim is that this condition does not really depend on the equational extension $\Th_{e}$, but rather on the fact that the internal equalities of the weak theory $\Th_{w}$ are well-behaved.
We have already said in \cref{sec:homotopy_theory} that the well-behavedness of the internal equalities of $\Th_{w}$ can be tested by the fact that $\Th_{w}$ is semi-model.
We conjecture that these two ways of expressing the well-behavedness of the internal equalities of $\Th_{w}$ are actually equivalent.

\begin{conj}\label{conj:model_to_weq}
  Let $\Th_{w}$ be a theory over the theory of cumulative CwFs with universes and weak identity types.
  The theory $\Th_{w}$ is semi-model if and only if for every cellular model $\Init_{w}[X]$ and equational extension $\Th_{e}$ of $\Th_{w}$, the canonical morphism $\eta^{X}_{w} : \Init_{w}[X] \to \Init_{w,2,e}[X]$ is a weak contextual equivalence.
  \defiEnd
\end{conj}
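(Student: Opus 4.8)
The plan is to establish the two implications separately. In both directions the decisive structural fact is that the \emph{outer} layer of $\Th_{w,2,e}$ carries $\Pi$-types with a strict $\beta$-rule and weak identity types, so that the outer layer is automatically well-behaved (by the recognition principle discussed after \cref{prop:weq_pi_singl}); the semi-model hypothesis is only ever needed to control the \emph{inner} layer.

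\emph{From $\Th_{w}$ semi-model to weak equivalences.} Fix a cellular model $\Init_{w}[X]$; it is cofibrant and contextual. Using the small object argument I would factor the unit $\eta_{w}^{X} : \Init_{w}[X] \to \Init_{w,2,e}[X]$ as a transfinite composite of pushouts of the generating (trivial) cofibrations of $\Th_{w,2,e}$ relative to $\Th_{w}$. The central claim is that each of these elementary steps, projected along the forgetful functor $\CMod_{w,2,e} \to \CMod_{w}$, is either a contextual isomorphism --- this happens for steps that only add outer types and outer terms which are not codes of inner terms, e.g. freely adjoining the type former $\Pi^{o}$ --- or a basic $J$-cellular extension of the underlying $\Th_{w}$-model. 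The latter is the case for every step that adjoins outer data lying over inner terms: the outer singleton extensions presenting the weak outer identity types and their eliminator $\J^{o}$, the contractibility data presenting $\funext^{o}$ exactly as in the proof of \cref{prop:weq_pi_singl}, and the lifted marked equations $\widehat{p}$, $\widetilde{p}$, whose combined inner effect is the basic $J$-cellular extension adjoining a fresh proof of the inner identity type together with a fresh path connecting it to the pre-existing witness $p$. Since $\Init_{w}[X]$ is cofibrant and $\Th_{w}$ is semi-model, every basic $J$-cellular extension under it is a trivial cofibration between cofibrant objects, hence a weak contextual equivalence; contextual isomorphisms are weak equivalences by \cref{prop:sweq_props}; and the relevant transfinite composite of these is again a weak contextual equivalence by the finitariness argument used in the proof of \cref{prop:weq_theories_char}. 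Hence $\eta_{w}^{X}$ is a weak contextual equivalence.

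\emph{From weak equivalences to $\Th_{w}$ semi-model.} Conversely, assume $\eta_{w}^{X}$ is a weak contextual equivalence for every cellular model $\Init_{w}[X]$ and every equational extension. I would first record that $\CMod_{w,2,e}^{\cxl}$ is itself semi-model: its outer layer has weak identity types and $\Pi^{o}$-types with a strict $\beta$-rule, so \cref{prop:weq_pi_singl} applies to the outer basic $J$-cellular extensions and a recognition theorem for left semi-model structures then applies. To prove that $\Th_{w}$ is semi-model it suffices to check that each basic $J$-cellular extension $j : \CC \to \CC[\Gamma \vdash \bm{b} : A, \bm{p} : \Id\ a\ \bm{b}]$ with $\CC$ cofibrant is a weak contextual equivalence, the remaining left semi-model axioms being formal consequences of the small object arguments for $I$ and $J$ and of \cref{prop:sweq_props}. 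For $\CC = \Init_{w}[X]$, write $\CC' = \Init_{w}[X']$ for the codomain, with $X'$ adjoining the generators $\bm{b}, \bm{p}$; the square with horizontal maps $\CC \to \CC'$ and $\Init_{w,2,e}[X] \to \Init_{w,2,e}[X']$ and vertical maps $\eta_{w}^{X}, \eta_{w}^{X'}$ commutes. Its vertical maps are weak equivalences by hypothesis, and its lower map is a basic $J$-cellular extension between cofibrant models of $\Th_{w,2,e}$, hence a weak equivalence by the previous point; by \cref{prop:sweq_props} the upper map $j$ is then a weak equivalence too. The case of a general cofibrant $\CC$ follows because $\CC$ is a retract of a cellular model and weak equivalences are closed under retracts.

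\emph{Main obstacle.} The hard step is the decomposition used in the first implication: one must show precisely that the small-object-argument presentation of the free $\Th_{w,2,e}$-extension of a $\Th_{w}$-model projects down to $\CMod_{w}$ as a tower consisting only of contextual isomorphisms and basic $J$-cellular extensions. This requires analysing how the operation $[{-}]$ turns freely adjoined outer equalities into inner terms, and in particular a normalization- or logical-relations-style argument over the free models showing that no further inner content is created --- so that, for instance, freely adjoining $\J^{o}$ with its weak computation rule contributes on the inner layer only the contractibility data for outer singletons already accounted for. I also expect that making the recognition theorem for $\CMod_{w,2,e}^{\cxl}$ fully rigorous, given that the sort $\Tm^{o}$ is not representable, will require a non-trivial adaptation of the existing recognition theorems for left semi-model structures.
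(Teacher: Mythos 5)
This statement is stated as a \emph{conjecture} in the paper and is left open there: the paper only proves the forward direction in a restricted form (\cref{thm:pi_first_weq}, which additionally assumes $\Pi$-types with a strict $\beta$-rule, acyclicity of $\Init_{w,2,e}[X]$, and that the $\J_{\beta}$ rule is marked in $\Th_{e}$), and for the reverse direction it only sketches a reduction to an unproven recognition theorem via the final lemma of \cref{sec:first_weq}. So your proposal should be judged on its own merits, and on those merits it has a genuine gap at exactly the point you yourself flag as the ``main obstacle''.

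The decisive unproven claim is that the relative cell presentation of $\eta_{w}^{X}$ in $\CMod_{w,2,e}$, when pushed along the forgetful functor $R_{w,2,e} : \CMod_{w,2,e} \to \CMod_{w}$, becomes a tower of contextual isomorphisms and basic $J$-cellular extensions of $\Th_{w}$-models. Two things go wrong. First, $R_{w,2,e}$ is a right adjoint and does not preserve the pushouts and transfinite composites produced by the small object argument, so the factorization in $\CMod_{w,2,e}$ does not ``project'' to a factorization of the underlying $\Th_{w}$-morphism at all; the inner content of $\Init_{w,2,e}[X]$ has to be analysed globally. Second, and more substantively, freely adjoining the outer structure does create new inner content in a way that is not captured by basic $J$-cellular extensions: since outer types include the codes $\tm\ A$ of inner types, the eliminator $\J^{o}$ applied to inner-valued families, the coercion $[-]$, and the lifted equations $\widehat{p}$, $\widetilde{p}$ all generate new inner terms and new identifications between pre-existing inner terms. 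Controlling exactly this phenomenon is the entire content of the paper's \cref{thm:pi_first_weq}, which needs the extra hypotheses above and a logical-relations-style displayed model $\Init_{w,2,e}[X]^{\bullet}$ to establish even the restricted weak term lifting; your appeal to ``a normalization- or logical-relations-style argument showing that no further inner content is created'' is precisely the missing proof, not a routine verification. Without it the forward implication is not established, and with it one would in effect have proven the conjecture directly.

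Two further points on the reverse direction. Your $2$-out-of-$3$ square is essentially the paper's suggested route, but it needs the specific equational extension marking reflection (or some comparable device) so that inner and outer equalities agree in $\Th_{w,2,e}$: the bottom map adjoins a generator of the \emph{inner} identity type, whereas the well-behavedness you can extract from the outer layer (via \cref{prop:weq_pi_singl} applied to $\Pi^{o}$ and $\Id^{o}$) only controls extensions by \emph{outer} equalities, and the weak contextual equivalences relevant to the square are those of the underlying $\Th_{w}$-models, not the weak equivalences of a putative semi-model structure on $\CMod^{\cxl}_{w,2,e}$. Finally, both of the recognition principles you invoke (that strict-$\beta$ $\Pi$-types plus \cref{prop:weq_pi_singl} yield a left semi-model structure, and its adaptation to the two-level theory with the non-representable sort $\Tm^{o}$) are only conjectured in the paper, so they cannot be treated as available lemmas.
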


Instead of proving the full conjecture, we will only prove that the maps $\eta_{w}^{X} : \Init_{w}[X] \to \Init_{w,2,e}[X]$ are weak equivalences under some additional assumptions:
\begin{itemize}
  \item We assume that the type theory $\Th_{w}$ includes $\Pi$-types with a strict $\beta$-rule.
  \item We assume that $\Init_{w,2,e}[X]$ is acyclic.
    This fact is required anyway to apply \cref{thm:conservativity_nonstrict}, and it simplifies the proof quite a bit.
  \item We assume that the computation rules of identity types are marked in $\Th_{e}$.
\end{itemize}

\begin{con}\label{con:model_simeq}
  Let $\Th_{w}$ be a theory over the theory of cumulative CwFs with universes, weak identity types and $\Pi$-types with a strict $\beta$-rule.

  Let $\CC$ be a model of $\Th_{w}$.
  We extend $\CC$ to a model $\CC^{\simeq}$ of $\Th_{w,2,e}$.
\end{con}
\begin{proof}[Construction]
  We work internally to $\CPsh\ \CC$. \\
  The outer types are the coproduct of the presheaves of inner types at each universe level.
  \begin{alignat*}{2}
    & \Ty_{\CC^{\simeq}}^{o} && \triangleq (n : \Nat) \times \Ty_{\CC,n} \\
    & \Tm_{\CC^{\simeq}}^{o}\ (n, A) && \triangleq \Tm_{\CC,n}\ A
  \end{alignat*}

  The outer types of codes for the inner terms are:
  \begin{alignat*}{2}
    & \tm\ A && \triangleq (n, A)
  \end{alignat*}
  The outer identity types are interpreted by the inner identity types.
  \begin{alignat*}{2}
    & \Id^{o}\ \{(n, A)\}\ x\ y && \triangleq (n, \Id\ \{A\}\ x\ y)
  \end{alignat*}
  And the outer $\Pi$-types are interpreted by the inner $\Pi$-types.
  \begin{alignat*}{2}
    & \Pi^{o} && : \{n : \Nat\} (A : \Ty_{\CC,n}) (B : \Tm_{\CC,n}\ A \to \Ty_{\CC}^{o}) \to \Ty_{\CC}^{o} \\
    & \Pi^{o}\ A\ (m, B) && \triangleq (\mathsf{max}\ n\ m, \Pi\ A\ B)
  \end{alignat*}
  Above, $A : \Ty_{\CC,n}$ is an inner type of $\CC$ at level $n$, $B$ is a dependent type over $A$ at level $m$, and the outer $\Pi$-type $\Pi^{o}\ A\ (m, B)$ is given by the inner $\Pi$-type at level $\mathsf{max}\ n\ m$.

  Finally, we have to provide an interpretation for every marked equation of $\Th_{e}$.
  Take a cellular model $\Init_{w}[X]$ and a marked equation $p : \abs{\Tm_{\Init_{w}[X]}}\ (\Id\ \{A\}\ a\ b)$.
  For every morphism $F : \Init_{w}[X] \to (\CC \sslash \Gamma)$, we have to construct
  $\widehat{p} : \abs{\Tm^{o}_{\CC^{\simeq}}}_{\Gamma}\ (\Id^{o}\ (F\ a)\ (F\ b))$ and
  $\widetilde{p} : \abs{\Tm^{o}_{\CC^{\simeq}}}_{\Gamma}\ (\Id^{o}\ (F\ p)\ [\widehat{p}])$.
  Since the outer identity types of $\CC^{\simeq}$ are interpreted by the inner identity types, we can simply define $\widehat{p} \triangleq F\ p$.

  The term $\widetilde{p}$ should then be an inhabitant of the outer type $\Id^{o}\ (F\ p)\ [F\ p]$.
  Recall that $[F\ p] : \Tm\ (\Id\ (F\ a)\ (F\ b))$ is defined as the outer transport of $\refl\ \{F\ a\}$ over $F\ p$.
  Now that outer transport coincides with inner transport, this is just the composition $\refl\ \{F\ a\} \cdot F\ p$.
  Thus $\widetilde{p}$ is simply an instance of the left identity law for path composition.
\end{proof}

\begin{thm}\label{thm:pi_first_weq}
  Let $\Th_{w}$ be a theory over the theory of cumulative CwFs with universes, weak identity types and $\Pi$-types with a strict $\beta$-rule.
  Let $\Th_{s}$ be the equational extension of $\Th_{w}$ by a family of equations $\Th_{e}$, such that the computation rule of identity types is marked in $\Th_{e}$.
 
  Given a cellular model $\Init_{w}[X]$ of $\Th_{w}$, if the model $\Init_{w,2,e}[X]$ of $\Th_{w,2,e}$ is acyclic, then the map $\eta^{X}_{w} : \Init_{w}[X] \to \Init_{w,2,e}[X]$ is a weak contextual equivalence
\end{thm}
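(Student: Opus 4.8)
The plan is to leverage \cref{con:model_simeq}. First I would apply it to $\CC = \Init_{w}[X]$, producing a model $(\Init_{w}[X])^{\simeq}$ of $\Th_{w,2,e}$ whose underlying model of $\Th_{w}$ is $\Init_{w}[X]$ itself. Since $\Init_{w,2,e}[X] = L_{w,2,e}\,\Init_{w}[X]$ and $R_{w,2,e}((\Init_{w}[X])^{\simeq}) = \Init_{w}[X]$, the universal property of the unit $\eta_{w}^{X}$ applied to the identity morphism of $\Init_{w}[X]$ yields a morphism $\phi : \Init_{w,2,e}[X] \to (\Init_{w}[X])^{\simeq}$ of models of $\Th_{w,2,e}$ such that $\psi := R_{w,2,e}(\phi) : \Init_{w,2,e}[X] \to \Init_{w}[X]$ satisfies $\eta_{w}^{X} \cdot \psi = \id_{\Init_{w}[X]}$; that is, $\psi$ is a retraction of $\eta_{w}^{X}$.

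Next I would reduce the statement to a homotopy. Because $\eta_{w}^{X}$ admits a left inverse it is injective on contexts, types and terms, and moreover $\eta_{w}^{X} \cdot \psi \cdot \eta_{w}^{X} = \eta_{w}^{X}$. Hence for any context $\Gamma$ and type $A$ coming from $\Init_{w}[X]$ one has $\eta_{w}^{X}\psi(\eta_{w}^{X}\Gamma) = \eta_{w}^{X}\Gamma$ and $\eta_{w}^{X}\psi(\eta_{w}^{X}A) = \eta_{w}^{X}A$ on the nose. Consequently, for the weak type and term lifting properties of $\eta_{w}^{X}$ one may always take $\psi$ to supply the lift (of a type $A$ over $\eta_{w}^{X}\Gamma$, take $\psi A$; of a term $a$ of type $\eta_{w}^{X}A$, take $\psi a$), and all that remains to be produced is a path from $\eta_{w}^{X}\psi A$ to $A$ (resp. from $\eta_{w}^{X}\psi a$ to $a$). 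So it suffices to construct a natural homotopy witnessing $\eta_{w}^{X}\circ\psi \sim \id$ on the types and terms of $\Init_{w,2,e}[X]$.

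The core of the proof is the construction of this homotopy, which I would carry out as a displayed model — a logical relation — over $\Init_{w,2,e}[X]$ in the theory $\Th_{w,2,e}$: a displayed inner type over $B$ is an inner path $\eta_{w}^{X}\psi B \sim B$ in $\UU$; a displayed inner term over $t$ is an inner path from $\eta_{w}^{X}\psi t$ to $t$ lying over it; and the displayed outer sorts record the analogous path-data for the outer layer, transported between the two layers by the operation $[-]$ that turns outer equalities into inner ones, and compared through $\phi$ and the explicit description of $(\Init_{w}[X])^{\simeq}$ in \cref{con:model_simeq} (whose outer identity and $\Pi$-types, and outer transport, are the corresponding inner ones). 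Since $\Init_{w,2,e}[X]$ is freely generated by $X$, a global section is obtained by interpreting the generators — each generator $\bm{a}$ satisfies $\eta_{w}^{X}\psi\bm{a} = \bm{a}$, so is sent to the reflexivity path — provided the displayed structure is closed under all operations and equations of $\Th_{w,2,e}$. The inner operations of $\Th_{w}$ are strictly preserved by $\eta_{w}^{X}\circ\psi$, so their displayed interpretations are instances of the congruence of these operations with respect to the inner identity types; for the operations with higher-order arguments (the $\Pi$-type former, $\J$, $\funext$, \dots) this congruence requires the strict $\beta$-rules of inner $\Pi$-types and of $\Pi^{o}$, via the telescope liftings of \cref{ssec:lift_tele} and arguments in the style of \cref{prop:weq_pi_singl}. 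The outer operations $\Id^{o}$, $\refl^{o}$, $\J^{o}$, $\Pi^{o}$, $\lam^{o}$, $\app^{o}$, $\funext^{o}$, \dots are not preserved by $\eta_{w}^{X}\circ\psi$, but by \cref{con:model_simeq} the discrepancy between $\eta_{w}^{X}\psi(o(\vec{u}))$ and $o(\vec{u})$ is in each case an instance of a coherence law of weak identity types or of function extensionality (a unit law for path composition, $\J^{o}_{\beta}$, $\funext^{o}_{\beta}$, \dots), hence witnessed by a path; the marked equations of $\Th_{e}$ are handled using the lifted data $\widehat{p}$, $\widetilde{p}$ available in $\Init_{w,2,e}[X]$ (present since it includes the marked equations of $\Th_{e}$ — among which the computation rules of identity types — by \cref{def:th_e}) together with their interpretation $\widehat{p} = F\,p$ in $(\Init_{w}[X])^{\simeq}$. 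Finally, every equation the displayed model is required to satisfy is an identity between displayed outer terms, i.e. between outer paths, and these hold precisely because $\Init_{w,2,e}[X]$ is acyclic: the $0$-truncatedness of $\tm$ with respect to the outer identity types forces the outer-path data to be essentially unique.

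Granting the displayed model, its global section provides the homotopy $\eta_{w}^{X}\circ\psi \sim \id$, which as explained yields the weak type and term lifting properties for $\eta_{w}^{X}$, so that $\eta_{w}^{X} : \Init_{w}[X] \to \Init_{w,2,e}[X]$ is a weak contextual equivalence. I expect the main obstacle to be the third step: pinning down the displayed outer relation precisely and verifying that the displayed model is a model of all of $\Th_{w,2,e}$ — the operations with higher-order arguments (where the strict $\beta$-rule is indispensable) and the equational coherences (where acyclicity is indispensable) are the delicate points, and this is the part one would formalize.
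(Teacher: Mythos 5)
Your overall route is the paper's own: use \cref{con:model_simeq} to produce a retraction $\psi$ (the paper's $r$) of $\eta^{X}_{w}$ via the universal property of $\Init_{w,2,e}[X]$, reduce the weak lifting properties to a homotopy between $\eta^{X}_{w} \circ \psi$ and the identity on types and terms of $\Init_{w,2,e}[X]$, and obtain that homotopy from a section of a displayed model (logical relation) over $\Init_{w,2,e}[X]$, with the strict $\beta$-rule of inner $\Pi$-types carrying the higher-order operations, the marked computation rule of identity types carrying the $\J^{o}$ case, and acyclicity carrying the coherence.

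The one point where your set-up, as literally stated, would break is the choice of displayed data: you take a displayed inner type or term over $B$ to be an actual (proof-relevant) path from $\eta^{X}_{w}\psi B$ to $B$. A displayed model of $\Th_{w,2,e}$ must validate its strict equations (the strict inner and outer $\beta$-rules, the equations of $\Th_{w}$, stability under substitution) on the nose, and with chosen paths as displayed data the two sides of such an equation are in general only connected by higher paths, not equal; acyclicity tells you that parallel outer paths between inner terms are linked by outer $2$-paths (``essentially unique''), which is weaker than the literal uniqueness these strict obligations require. The paper's proof resolves exactly this by valuing the displayed sorts not in paths but in the $\SProp$-valued congruence $({\sim})$ of \cref{con:cong_from_higher_cong}, i.e.\ in the quotient $(\Tm_{w,2,e}/{\sim})$: displayed data are then propositional, so all strict equalities hold automatically, and the remaining work is only to exhibit congruence witnesses --- which is where your points about strict $\beta$, the marked $\J_{\beta}$, and the explicit description of $(\Init_{w}[X])^{\simeq}$ are used exactly as you indicate. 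With that replacement (mere outer-equality relation instead of chosen paths) your argument coincides with the paper's; note that ending with an outer equality is enough, since $[-]$ converts it into the inner equality demanded by the weak lifting property.
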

\begin{proof}
  We will denote the components of $\Init_{w}[X]$ and $\Init_{w,2,e}[X]$ by $\Ty_{w}$, $\Ty_{w,2,e}$, $\Tm_{w}$, $\Tm_{w,2,e}$, etc.

  From the model $(\Init_{w}[X])^{\simeq}$ of $\Th_{w,2,e}$ constructed in \cref{con:model_simeq}, we obtain a retraction $r : \Init_{w,2,e}[X] \to (\Init_{w}[X])^{\simeq}$ of the map $\eta_{w}^{X} : \Init_{w}[X] \to \Init_{w,2,e}[X]$.
  From this data we know that given any type $A$ of $\Init_{w}[X]$ and term $a$ of type $\eta^{X}_{w}\ A$, there is some term $a_{0}$ of type $A$, namely $a_{0} \triangleq r\ a$.
  In order to show the weak term lifting property, it remains to show that $\eta^{X}_{w}\ a_{0}$ and $a$ are always equal up to inner equality.
  In fact we will show that they are even equal up to outer equality.

  We construct a model $\Init_{w,2,e}[X]^{\bullet}$ of $\Th_{w,2,e}$ for this purpose.
  We present it as a displayed model over $\Init_{w,2,e}[X]$, which means that all of its components depend on the corresponding components of $\Init_{w,2,e}[X]$ (we say that they are displayed over the base components).

  We will present $\Init_{w,2,e}[X]^{\bullet}$ using the internal language of $\CPsh\ \Init_{w,2,e}[X]$.
  For this purpose, we make use of the Yoneda embedding $\yo : \Init_{w,2,e}[X] \to \CPsh\ \Init_{w,2,e}[X]$.
  An inner type $A : \abs{\Ty_{w,2,e}}_{\Gamma}$ can be represented internally by a global element $A : \yo_{\Gamma} \to \Ty_{w,2,e}$, and inner terms and outer types and terms can be represented similarly.

  We also need to internalize the actions of the map $(r \cdot \eta^{X}_{w}) : \Init_{w,2,e}[X] \to \Init_{w,2,e}[X]$.
  In order to simplify the notations, we will just write $r$ for the internalized actions of this map.
  Note that any context in the image of $\eta^{X}_{w}$ can be seen as a telescope of inner types.
  Thus given any context $\Gamma$ of $\Init_{w,2,e}$, we have a global element $(r\ \Gamma) : \Ty_{w,2,e}^{\star}$.
  Given a morphism $f : \Init_{w,2,e}(\Delta \to \Gamma)$, we have a global natural transformation $(r\ f) : \Tm_{w,2,e}^{\star}\ (r\ \Delta) \to \Tm_{w,2,e}^{\star}\ (r\ \Delta)$.
  Given an inner type $A : \yo_{\Gamma} \to \Ty_{w,2,e}$, we have $(r\ A) : \Tm_{w,2,e}^{\star}\ (r\ \Gamma) \to \Ty_{w,2,e}$.
  Given an inner term $a : (\gamma : \yo_{\Gamma}) \to \Tm_{w,2,e}\ (A\ \gamma)$, we have $(r\ a) : (\gamma : \Tm_{w,2,e}^{\star}\ (r\ \Gamma)) \to \Tm_{w,2,e}\ (r\ A\ \gamma)$.
  Given an outer type $A : \yo_{\Gamma} \to \Ty^{o}_{w,2,e}$, we have $(r\ A) : \Tm_{w,2,e}^{\star}\ (r\ \Gamma) \to \Ty_{w,2,e}$.
  Given an outer term $a : (\gamma : \yo_{\Gamma}) \to \Tm^{o}_{w,2,e}\ (A\ \gamma)$, we have $(r\ a) : (\gamma : \Tm_{w,2,e}^{\star}\ (r\ \Gamma)) \to \Tm_{w,2,e}\ (r\ A\ \gamma)$.

  Since we assume that $\Init_{w,2,e}[X]$ is acyclic, we can make use of the internal fibrant congruence $\widetilde{\Init_{w,2,e}[X]}$ defined in \cref{con:cong_from_higher_cong}.
  We will just write $(x \sim y)$ when two types (or terms) $x$ and $y$ are congruent in $\widetilde{\Init_{w,2,e}[X]}$.
  We will write $(\Ty_{w,2,e} / {\sim})$ and $(\Tm_{w,2,e} / {\sim})$ for the components of the internal quotient of $\widetilde{\Init_{w,2,e}[X]}$, and we will implicitly coerce from the inner family to that quotient.

  We now give all of the components of $\Init_{w,2,e}[X]^{\bullet}$.
  \begin{itemize}
    \item A displayed context of $\Init_{w,2,e}[X]^{\bullet}$ over $\Gamma : \abs{\Init_{w,2,e}[X]}$ is given by a global element
      \[ \alpha : \yo_{\Gamma} \to (\Tm_{w,2,e}^{\star}/{\sim})\ (r\ \Gamma). \]
    \item A displayed morphism from the displayed context $\alpha : \yo_{\Gamma} \to (\Tm_{w,2,e}^{\star}/{\sim})\ (r\ \Gamma)$ to $\beta : \yo_{\Delta} \to (\Tm_{w,2,e}^{\star}/{\sim})\ (r\ \Delta)$ over a morphism $f : \Init_{w,2,e}[X](\Gamma \to \Delta)$ is given by a family of equalities
      \[ f^{\bullet} : (\delta : \yo_{\Delta}) \to r\ f\ (\beta\ \delta) = \alpha\ (\yo_{f}\ \delta). \]
    \item A displayed inner type $A^{\bullet}$ in a displayed context $\alpha : \yo_{\Gamma} \to (\Tm_{w,2,e}^{\star}/{\sim})\ (r\ \Gamma)$ over an inner type $A : \yo_{\Gamma} \to \Ty_{w,2,e}$ is a quotiented outer equality
      \[ A^{\bullet} : (\gamma : \yo_{\Gamma}) \to A\ \gamma \sim r\ A\ (\alpha\ \gamma). \]
      Similarly, a displayed outer term $a^{\bullet}$ over an inner term $a : (\gamma : \yo_{\Gamma}) \to \Tm_{w,2,e}\ (A\ \gamma)$ is a quotiented outer equality
      \[ a^{\bullet} : (\gamma : \yo_{\Gamma}) \to a\ \gamma \sim r\ a\ (\alpha\ \gamma). \]
    \item The extension of a displayed context $\alpha : \yo_{\Gamma} \to (\Tm_{w,2,e}^{\star}/{\sim})\ (r\ \Gamma)$ by a displayed inner type $A^{\bullet}$ should be a global element
      \[ \alpha' : \yo_{\Gamma \rhd A} \to (\Tm^{\star}_{w,2,e}/{\sim})\ (r\ (\Gamma \rhd A)). \]
      By the definition of context extensions in $\Init_{w,2,e}[X]$, we have an isomorphism
      \[ \yo_{\Gamma \rhd A} \simeq (\gamma : \yo_{\Gamma}) \times \Tm_{w,2,e}\ (A\ \gamma) \]
      and by definition of telescopes, we have an isomorphism
      \[ (\Tm^{\star}_{w,2,e}/{\sim})\ (r\ (\Gamma \rhd A)) \simeq (\gamma : (\Tm^{\star}_{w,2,e}/{\sim})\ (r\ \Gamma)) \times (\Tm_{w,2,e}/{\sim})\ (r\ A\ \gamma). \]
      Thus, up to these isomorphisms, $\alpha'$ can be defined by
      \[ \alpha'\ (\gamma, a) \triangleq (\alpha\ \gamma, a), \]
      where $a : \Tm_{w,2,e}\ (A\ \gamma)$ is transported to an element of $(\Tm_{w,2,e}/{\sim})\ (r\ A\ (\alpha\ \gamma))$ thanks to $A^{\bullet}\ \gamma : (A\ \gamma) \sim r\ A\ (\alpha\ \gamma)$.
    \item A displayed outer type in a displayed context $\alpha : \yo_{\Gamma} \to (\Tm_{w,2,e}^{\star}/{\sim})\ (r\ \Gamma)$ over an outer type $A : \yo_{\Gamma} \to \Ty^{o}_{w,2,e}$ is a natural transformation
      \[ A^{\bullet} : (\gamma : \yo_{\Gamma}) \to \Tm^{o}_{w,2,e}\ (A\ \gamma) \to (\Tm_{w,2,e}^{\star}/{\sim})\ (r\ A\ (\alpha\ \gamma)). \]
      A displayed outer term of type $A^{\bullet}$ over an outer term $a : (\gamma : \yo_{\Gamma}) \to \Tm^{o}_{w,2,e}\ (A\ \gamma)$ is a quotiented outer equality
      \[ a^{\bullet} : (\gamma : \yo_{\Gamma}) \to A^{\bullet}\ \gamma\ (a\ \gamma) \sim r\ a\ (\alpha\ \gamma). \]
    \item We now define the displayed outer identity type.
      Take a displayed context $\alpha : \yo_{\Gamma} \to (\Tm_{w,2,e}^{\star}/{\sim})\ (r\ \Gamma)$, a displayed type
      \[ A^{\bullet} : (\gamma : \yo_{\Gamma}) \to \Tm^{o}_{w,2,e}\ (A\ \gamma) \to (\Tm_{w,2,e}^{\star}/{\sim})\ (r\ A\ (\alpha\ \gamma)). \]
      over an outer type $A : \yo_{\Gamma} \to \Ty^{o}_{w,2,e}$ and displayed outer terms $x^{\bullet}$ and $y^{\bullet}$.
      The displayed outer identity type $(\Id^{o\bullet}\ \{A^{\bullet}\}\ x^{\bullet}\ y^{\bullet})$ is defined by:
      \begin{alignat*}{3}
        & (\Id^{o\bullet}\ \{A^{\bullet}\}\ x^{\bullet}\ y^{\bullet}) && :{ } && (\gamma : \yo_{\Gamma}) \to \Tm^{o}_{w,2,e}\ (\Id^{o}\ \{A\ \gamma\}\ (x\ \gamma)\ (y\ \gamma)) \to \\
        &&&&& (\Tm_{w,2,e}^{\star}/{\sim})\ (\Id\ \{r\ A\ (\alpha\ \gamma)\}\ (r\ x\ (\alpha\ \gamma))\ (r\ y\ (\alpha\ \gamma))) \\
        & (\Id^{o\bullet}\ \{A^{\bullet}\}\ x^{\bullet}\ y^{\bullet})\ \gamma\ p && \triangleq{ } && \refl\ (r\ x\ (\alpha\ \gamma)),
      \end{alignat*}
      where the above line is well-typed thanks to the fact that $(r\ x\ (\alpha\ \gamma)) \sim (r\ y\ (\alpha\ \gamma))$, which is derived from $x^{\bullet}\ \gamma : A^{\bullet}\ \gamma\ (x\ \gamma) \sim r\ x\ (\alpha\ \gamma)$, $y^{\bullet}\ \gamma : A^{\bullet}\ \gamma\ (y\ \gamma) \sim r\ y\ (\alpha\ \gamma)$ and $p : x\ \gamma \sim y\ \gamma$.

      The reflexivity outer term
      \begin{alignat*}{3}
        & (\refl^{o\bullet}\ \{A^{\bullet}\}\ x^{\bullet}) && :{ } && (\gamma : \yo_{\Gamma}) \to (\Id^{o\bullet}\ \{A^{\bullet}\}\ x^{\bullet}\ x^{\bullet})\ \gamma\ (\refl^{o}\ (x\ \gamma)) \sim \refl\ (r\ x\ (\alpha\ \gamma)),
      \end{alignat*}
      follows from the reflexivity of $(\sim)$.
    \item The extension of contexts by outer singletons is defined similarly to the extension by inner terms.
      Given a displayed context $\alpha : \yo_{\Gamma} \to (\Tm_{w,2,e}^{\star}/{\sim})\ (r\ \Gamma)$, a displayed outer type $A^{\bullet}$ and a displayed outer term $x^{\bullet}$, we need to define
      $\alpha' : \yo_{\Gamma\rhd \Singl^{o}\ x} \to (\Tm_{w,2,e}^{\star}/{\sim})\ (r\ (\Gamma \rhd \Singl^{o}\ x))$.
      We have isomorphisms
      \[ \yo_{\Gamma\rhd \Singl^{o}\ x} \simeq (\gamma : \yo_{\Gamma}) \times (y : \Tm_{w,2,e}^{o}\ (A\ \gamma)) \times (p : \Tm_{w,2,e}^{o}\ (\Id^{o}\ (x\ \gamma)\ y)), \]
      and
      \begin{alignat*}{3}
        & (\Tm_{w,2,e}^{\star}/{\sim})\ (r\ (\Gamma \rhd \Singl^{o}\ x)) && \simeq{ } && (\gamma : (\Tm_{w,2,e}^{\star}/{\sim})\ (r\ \Gamma)) \times{ } \\
        &&&&& (y : (\Tm_{w,2,e}/{\sim})\ (r\ A\ \gamma)) \times {} \\
        &&&&& (p : (\Tm_{w,2,e}/{\sim})\ (\Id\ (r\ x\ \gamma)\ y)).
      \end{alignat*}
      We can therefore define
      \[ \alpha'\ (\gamma, y, p) \triangleq (\alpha\ \gamma, A^{\bullet}\ \gamma\ y, \refl\ (r\ x\ (\alpha\ \gamma))), \]
      where the fact that $\refl\ (r\ x\ (\alpha\ \gamma))$ is in $(\Tm_{w,2,e}/{\sim})\ (\Id\ (r\ x\ (\alpha\ \gamma))\ (A^{\bullet}\ \gamma\ y))$ follows from $x^{\bullet}\ \gamma : A^{\bullet}\ \gamma\ (x\ \gamma) \sim (r\ x\ (\alpha\ \gamma))$ and $p : x\ \gamma \sim y$.

      To define the outer identity type elimination structure over a displayed context $\alpha : \yo_{\Gamma} \to (\Tm_{w,2,e}^{\star}/{\sim})\ (r\ \Gamma)$, take a displayed outer type $A^{\bullet}$, a displayed outer term $x^{\bullet}$, a displayed outer type
      \begin{alignat*}{3}
        & P^{\bullet} && :{ } && (\gamma : \yo_{\Gamma})\ (y : \Tm^{o}_{w,2,e}\ (A\ \gamma))\ (p : \Tm^{o}_{w,2,e}\ (\Id^{o}\ (x\ \gamma)\ y)) \to \\
        &&&&& \Tm^{o}_{w,2,e}\ (P\ \gamma\ y\ p) \to (\Tm_{w,2,e}^{\star}/{\sim})\ (r\ P\ (\alpha'\ (\gamma, y, p)))
      \end{alignat*}
      over the displayed context $\alpha'$ defined above, and displayed outer terms
      \[ d^{\bullet} : (\gamma : \yo_{\Gamma}) \to P^{\bullet}\ \gamma\ (x\ \gamma)\ (\refl^{o}\ x)\ (d\ \gamma) \sim r\ d\ (\alpha\ \gamma), \]
      $y^{\bullet}$ and $p^{\bullet}$.

      We need to construct
      \[ \J^{o\bullet} : (\gamma : \yo_{\Gamma}) \to P^{\bullet}\ \gamma\ (y\ \gamma)\ (p\ \gamma)\ (\J^{o}\ (P\ \gamma)\ (d\ \gamma)\ (y\ \gamma)\ (p\ \gamma)) \sim r\ (\J^{o}\ P\ d\ y\ p)\ (\alpha\ \gamma). \]
      Fix $\gamma : \yo_{\Gamma}$. By outer path induction, it suffices to show that
      \[ P^{\bullet}\ \gamma\ (x\ \gamma)\ (\refl^{o}\ x)\ (\J^{o}\ (P\ \gamma)\ (d\ \gamma)\ (x\ \gamma)\ (\refl^{o}\ x)) \sim r\ (\J^{o}\ P\ d\ y\ p)\ (\alpha\ \gamma). \]
      On the left hand side of this equation, we have
      \[ P^{\bullet}\ \gamma\ (x\ \gamma)\ (\refl^{o}\ x)\ (\J^{o}\ (P\ \gamma)\ (d\ \gamma)\ (x\ \gamma)\ (\refl^{o}\ x)) \sim P^{\bullet}\ \gamma\ (x\ \gamma)\ (\refl^{o}\ x)\ (d\ \gamma) \]
      by the weak computation rule for outer identity types and
      \[ d^{\bullet} : P^{\bullet}\ \gamma\ (x\ \gamma)\ (\refl^{o}\ x)\ (d\ \gamma) \sim r\ d\ (\alpha\ \gamma). \]
      On the right hand side, we can compute
      \[ r\ (\J^{o}\ P\ d\ y\ p)\ (\alpha\ \gamma) = \J\ (r\ P\ (\alpha\ \gamma))\ (r\ d\ (\alpha\ \gamma))\ (r\ y\ (\alpha\ \gamma))\ (r\ p\ (\alpha\ \gamma)). \]
      By $p^{\bullet}$, we have $r\ p\ (\alpha\ \gamma) \sim \refl\ (r\ x\ (\alpha\ \gamma))$, so we can deduce
      \[ r\ (\J^{o}\ P\ d\ y\ p)\ (\alpha\ \gamma) \sim \J\ (r\ P\ (\alpha\ \gamma))\ (r\ d\ (\alpha\ \gamma))\ (r\ x\ (\alpha\ \gamma))\ \refl, \]
      and since the computation rule of identity types is marked in $\Th_{e}$, we derive
      \[ r\ (\J^{o}\ P\ d\ y\ p)\ (\alpha\ \gamma) \sim r\ d\ (\alpha\ \gamma), \]
      completing the derivation of $\J^{o\bullet}$.

      In order to define $\J_{\beta}^{o\bullet}$, we need to prove, for every $\gamma : \yo_{\Gamma}$, that
      \[ r\ (\J^{o}_{\beta}\ P\ d\ y\ p)\ (\alpha\ \Gamma) \sim \refl\ d. \]
      This follows from the definition of $\J^{o}_{\beta}$ in the model $(\Init_{w}[X])^{\simeq}$ and the fact that the computation rule of weak identity types is marked in $\Th_{e}$.

    \item We also have to define the displayed outer $\Pi$-types. Take a displayed context $\alpha : \yo_{\Gamma} \to (\Tm_{w,2,e}^{\star}/{\sim})\ (r\ \Gamma)$, a displayed inner type $A^{\bullet}$ and a displayed outer type
      \begin{alignat*}{3}
        & B^{\bullet} && :{ } && (\gamma : \yo_{\Gamma})\ (a : \Tm_{w,2,e}\ (A\ \gamma)) \to \Tm_{w,2,e}^{o}\ (B\ \gamma\ a) \to \\
        &&&&& (\Tm_{w,2,e}/{\sim})\ (r\ B\ (\alpha\ \gamma)\ (r\ a\ (\alpha\ \gamma))).
      \end{alignat*}
      The displayed outer $\Pi$-types are then defined by
      \begin{alignat*}{3}
        & \Pi^{o\bullet}\ A^{\bullet}\ B^{\bullet} && :{ } && (\gamma : \yo_{\Gamma}) \to \Tm_{w,2,e}^{o}\ (\Pi^{o}\ (A\ \gamma)\ (B\ \gamma)) \to \\
        &&&&& (\Tm_{w,2,e}/{\sim})\ (r\ (\Pi^{o}\ A\ B)\ (\alpha\ \gamma)) \\
        & (\Pi^{o\bullet}\ A^{\bullet}\ B^{\bullet})\ \gamma\ f && \triangleq{ } && \lam\ (a \mapsto B^{\bullet}\ \gamma\ a\ (\app^{o}\ f\ a)),
      \end{alignat*}
      where $r\ (\Pi^{o}\ A\ B)\ (\alpha\ \gamma)$ computes to $\Pi\ (r\ A\ (\alpha\ \gamma))\ (r\ B\ (\alpha\ \gamma))$
      and $A^{\bullet}\ \gamma$ is used to coerce between terms of type $A\ \gamma$ and terms of type $r\ A\ (\alpha\ \gamma)$.

      To define the displayed outer lambda abstraction $\lam^{o\bullet}\ b^{\bullet}$ given a displayed outer term
      $b^{\bullet} : (\gamma : \yo_{\Gamma}) (a : \Tm_{w,2,e}\ (A\ \gamma)) \to B^{\bullet}\ \gamma\ a\ (b\ a) \sim r\ b\ (\alpha\ \gamma, a)$,
      we need to prove a quotiented outer equality
      \[ (\gamma : \yo_{\Gamma}) \to \lam\ (a \mapsto B^{\bullet}\ \gamma\ a\ (\app^{o}\ (\lam^{o}\ b)\ a)) \sim r\ (\lam^{o}\ b)\ (\alpha\ \gamma). \]
      Fix $\gamma : \yo_{\Gamma}$.
      We can compute $r\ (\lam^{o}\ b)\ (\alpha\ \gamma) = \lam\ (a \mapsto r\ b\ (\alpha\ \gamma, a))$, so it suffices to show, given $a : \Tm_{w,2,e}\ (A\ \gamma)$, that
      $B^{\bullet}\ \gamma\ a\ (\app^{o}\ (\lam^{o}\ b)\ a) \sim r\ b\ (\alpha\ \gamma, a)$.
      This follows from $b^{\bullet}\ \gamma$ and the fact that the $\beta$-rule for inner $\Pi$-types is strict.

      To define the displayed outer application $\app^{o\bullet}\ f^{\bullet}\ a^{\bullet}$ given displayed terms
      $f^{\bullet} : (\gamma : \yo_{\Gamma}) \to (\Pi^{o\bullet}\ A^{\bullet}\ B^{\bullet})\ \gamma\ f \sim r\ f\ (\alpha\ \gamma)$ and $a^{\bullet} : (\gamma : \yo_{\Gamma}) \to a\ \gamma \sim r\ a\ (\alpha\ \gamma)$,
      we need to prove a quotiented outer equality
      \[ (\gamma : \yo_{\Gamma}) \to B^{\bullet}\ \gamma\ a\ (\app^{o}\ f\ a) \sim r\ (\app^{o}\ f\ a)\ (\alpha\ \gamma). \]
      Fix $\gamma : \yo_{\Gamma}$. We can compute $r\ (\app^{o}\ f\ a)\ (\alpha\ \gamma) = \app\ (r\ f\ (\alpha\ \gamma))\ (r\ a\ (\alpha\ \gamma))$.
      By $f^{\bullet}$ and $a^{\bullet}$ we have $\app\ (r\ f\ (\alpha\ \gamma))\ (r\ a\ (\alpha\ \gamma)) \sim \app\ (\lam\ (a' \mapsto B^{\bullet}\ \gamma\ a'\ (\app^{o}\ f\ a')))\ (a\ \gamma)$.
      We can conclude using the fact that the $\beta$-rule for inner $\Pi$-types is strict.

      We omit the definition of the extensionality structure of the displayed $\Pi$-types.
    \item The definitions of the displayed operations of the inner layer follow from the fact that the congruence $(\sim)$ and the morphisms $r$ and $\eta^{X}_{w}$ preserves these operations.
    \item Because the displayed inner types and terms are propositional in this model, the strict equalities of the inner layer are automatically strict in $\Init_{w,2,e}[X]^{\bullet}$.
    \item Finally, we have to construct displayed outer terms $\widehat{p}$ and $\widetilde{p}$ for every marked equation $p$ of $\Th_{e}$ over a displayed context $\alpha : \yo_{\Gamma} \to (\Tm_{w,2,e}^{\star}/{\sim})\ (r\ \Gamma)$.
      This means that we have to show, for every $\gamma : \yo_{\Gamma}$, that $r\ \widehat{p}\ (\alpha\ \gamma) \sim \refl$ and $r\ \widetilde{p}\ (\alpha\ \gamma) \sim \refl$.

      We have $r\ \widehat{p}\ (\alpha\ \gamma) = r\ p\ (\alpha\ \gamma)$ by definition of $r$, $r\ p\ (\alpha\ \gamma) \sim p\ \gamma$ by induction hypothesis, $p\ \gamma \sim [\widehat{p}\ \gamma]$ is witnessed by the outer equality $\widetilde{p}$, and $[\widehat{p}\ \gamma] \sim \refl$ by definition of $[-]$, so the first equality holds.

      For the second equality, $r\ \widetilde{p}\ (\alpha\ \gamma)$ was defined by some instance $\J_{\beta}$ of the weak computation rule for the inner equality, and $\J_{\beta} \sim \refl$ because $\J_{\beta}$ is marked in $\Th_{e}$.
  \end{itemize}

  Thus we have constructed a displayed model $\Init_{w,2,e}[X]^{\bullet}$ over $\Init_{w,2,e}[X]$, and the universal property of $\Init_{w,2,e}[X]$ gives us a section of $\Init_{w,2,e}[X]^{\bullet}$.

  For every context $\Gamma$ of $\Init_{w,2,e}[X]^{\bullet}$, we obtain $\alpha_{\Gamma} : \yo_{\Gamma} \to (\Tm^{\star}/{\sim})\ (r\ \Gamma)$.
  A direct induction on contexts shows that for every context $\Gamma$ of $\Init_{w}[X]$, the map $\alpha_{(\eta^{X}_{w}\ \Gamma)} : \yo_{\eta^{X}_{w}\ \Gamma} \to (\Tm^{\star}/{\sim})\ (r\ (\eta^{X}_{w}\ \Gamma))$ is essentially the quotienting map $\Tm^{\star}\ (\eta^{X}_{w}\ \Gamma) \to (\Tm^{\star}/{\sim})\ (\eta^{X}_{w}\ \Gamma)$, up to the canonical isomorphism $\yo_{\eta^{X}_{w}\ \Gamma} \simeq \Tm^{\star}\ (\eta^{X}_{w}\ \Gamma)$ and the equality $r \circ \eta^{X}_{w} = \eta^{X}_{w}$.

  Now for every context $\Gamma$ of $\Init_{w}[X]$, inner type $A$ in $\Gamma$ and inner term $a$ of type $\eta^{X}_{w}\ A$, the section of $\Init_{w,2,e}[X]^{\bullet}$ provides a proof of $(\gamma : \yo_{\eta^{X}_{w}\ \Gamma}) \to a\ \gamma \sim r\ a\ (\alpha_{(\eta^{X}_{w}\ \Gamma)}\ \gamma)$.
  Since $\alpha_{(\eta^{X}_{w}\ \Gamma)}$ is essentially the identity map, this proves that $r\ a$ is a weak lift of $a$.

  Thus $\eta^{X}_{w}$ satisfies the weak term lifting property, and is indeed a weak contextual equivalence.
\end{proof}

\begin{thm}\label{thm:conservativity_nonstrict_pi}
  Let $\Th_{w}$ be a theory over the theory of cumulative CwFs with universes, weak identity types and $\Pi$-types with a strict $\beta$-rule.
  Let $\Th_{s}$ be the equational extension of $\Th_{w}$ by a family of equations $\Th_{e}$, such that the computation rule of identity types is marked $\Th_{e}$ .
 
  If every cellular model $\Init_{w}[X]$ is acyclic, then the theories $\Th_{w}$ and $\Th_{s}$ are Morita equivalent.
\end{thm}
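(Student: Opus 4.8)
The plan is to obtain this theorem as an immediate consequence of \cref{thm:pi_first_weq} and \cref{thm:conservativity_nonstrict}, which between them already contain all of the technical content. The hypotheses here are chosen precisely so that both results apply: $\Th_{w}$ extends the theory of cumulative CwFs with universes, weak identity types and $\Pi$-types with a strict $\beta$-rule, the computation rule of identity types is among the marked equations of $\Th_{e}$, and for every cellular model $\Init_{w}[X]$ the freely generated higher congruence $\Init_{w,2,e}[X]$ is acyclic.

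First I would fix an arbitrary cellular model $\Init_{w}[X]$ of $\Th_{w}$ and apply \cref{thm:pi_first_weq}. All of its hypotheses hold in the present setting — the strict $\beta$-rule for $\Pi$-types, the marked computation rule of identity types, and the acyclicity of $\Init_{w,2,e}[X]$ — so it gives that the canonical morphism $\eta_{w}^{X} : \Init_{w}[X] \to \Init_{w,2,e}[X]$ is a weak contextual equivalence.

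Next I would invoke \cref{thm:conservativity_nonstrict}. Its hypothesis is exactly that, for every cellular model $\Init_{w}[X]$, the map $\eta_{w}^{X}$ is a weak equivalence and $\Init_{w,2,e}[X]$ is acyclic. The first clause was just established via \cref{thm:pi_first_weq}, and the second is the standing assumption of the present theorem. Hence \cref{thm:conservativity_nonstrict} yields that $\Th_{w}$ and $\Th_{s}$ are Morita equivalent, which is the claim.

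There is no genuine obstacle left in this particular argument: the hard work — the construction of the displayed model $\Init_{w,2,e}[X]^{\bullet}$ over $\Init_{w,2,e}[X]$ and the extraction of the weak term lifting property from its section — was carried out in the proof of \cref{thm:pi_first_weq}, and the passage from cellular models to all cofibrant contextual models is handled by \cref{prop:weq_theories_char} inside \cref{thm:conservativity_nonstrict}. The only point that requires a moment's care is to read the hypothesis that "every cellular model $\Init_{w}[X]$ is acyclic" as the assertion that the associated model $\Init_{w,2,e}[X]$ of $\Th_{w,2,e}$ is acyclic in the sense of \cref{def:th_e}, which is what both \cref{thm:pi_first_weq} and \cref{thm:conservativity_nonstrict} require.
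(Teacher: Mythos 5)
Your proposal is correct and takes exactly the paper's route: the paper's proof is literally ``By \cref{thm:conservativity_nonstrict} and \cref{thm:pi_first_weq}'', and your reading of the hypothesis ``every cellular model is acyclic'' as acyclicity of the associated model $\Init_{w,2,e}[X]$ of $\Th_{w,2,e}$ is the intended one. The only quibble is a misplaced cross-reference: acyclicity is defined in the paper's definition of acyclic models of $\Th_{w,2}$, not in \cref{def:th_e} (which defines inclusion of the marked equations), but this does not affect the argument.
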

\begin{proof}
  By \cref{thm:conservativity_nonstrict} and \cref{thm:pi_first_weq}.
\end{proof}

\subsection*{Towards a proof of conjecture~\ref{conj:model_to_weq}}

The reverse implication of \cref{conj:model_to_weq} should easily be provable as a consequence of the following lemma and general recognition theorems for left semi-model structures.
\begin{lem}
  Let $\Th_{w}$ be some type theory and $\Th_{e}$ be the equational extension consisting of the marked equality
  \[ (A : \UU)(x, y : A)(p : \Id\ x\ y) \to (p : \Id\ x\ y). \]
  Note that this adds the equality reflection rule to the strong type theory $\Th_{s}$ defined by $\Th_{e}$, but we only consider the two-level type theory $\Th_{w,2,e}$ here.

  If for every cellular model $\Init_{w}[X]$, the morphism $\eta^{X}_{w} : \Init_{w}[X] \to \Init_{w,2,e}[X]$ is a weak contextual equivalence, then every basic $J$-cellular extension $j : \Init_{w}[Y] \to \Init_{w}[Y][\Gamma\ \vdash \bm{p} : \Id\ a\ \bm{b}]$ with a cellular source is a weak contextual equivalence.
\end{lem}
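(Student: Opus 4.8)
The idea is to reduce the statement to a property of basic $J$-cellular extensions \emph{inside} $\CMod_{w,2,e}$, and then to prove that property by transporting the proof of \cref{prop:weq_pi_singl} to the outer layer of $\Th_{w,2,e}$, whose $\Pi^{o}$-types do have a strict $\beta$-rule.

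First I would observe that the target $\Init_{w}[Y][\Gamma \vdash \bm{p} : \Id\ a\ \bm{b}]$ is again a cellular model: adjoining the term $\bm{b}$ of type $A$ and then the term $\bm{p}$ of type $\Id\ a\ \bm{b}$ amounts to adding two new cellular generators, so we may write it as $\Init_{w}[Y']$ for a cellular extension $Y'$ of $Y$, and $j : \Init_{w}[Y] \to \Init_{w}[Y']$ is the canonical inclusion. Then $j$ sits in the naturality square of the unit $\eta_{w}$ of $L_{w,2,e} \dashv R_{w,2,e}$, namely $j \cdot \eta^{Y'}_{w} = \eta^{Y}_{w} \cdot L_{w,2,e}(j)$. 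Since $L_{w,2,e}$ is a left adjoint and cellular models are built by iterated pushouts, $L_{w,2,e}(j)$ is again a basic $J$-cellular extension, $\Init_{w,2,e}[Y] \to \Init_{w,2,e}[Y][\Gamma \vdash \bm{p} : \Id\ a\ \bm{b}]$, now of models of $\Th_{w,2,e}$ (the inner identity type of $\Th_{w,2,e}$ being the identity type of $\Th_{w}$). By hypothesis $\eta^{Y}_{w}$ and $\eta^{Y'}_{w}$ are weak contextual equivalences, so by closure under composition (\cref{prop:sweq_props}\,(\ref{itm:sweq_props_5})) and the $2$-out-of-$3$ property (\cref{prop:sweq_props}\,(\ref{itm:sweq_props_6}), applied with $F = j$ and $G = \eta^{Y'}_{w}$) it suffices to show that $L_{w,2,e}(j)$ is a weak contextual equivalence of the underlying models of $\Th_{w}$.

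The core step is therefore: every basic $J$-cellular extension $\CC \to \CC_{0}$, where $\CC_{0} = \CC[\Gamma \vdash \bm{b} : A, \bm{p} : \Id\ a\ \bm{b}]$, of models of $\Th_{w,2,e}$ is a weak contextual equivalence. I would follow the pattern of \cref{prop:weq_pi_singl}, using the outer $\Pi$-types. Since $\CC$ is contextual we may regard $\Gamma$ as a telescope, and by the lifting constructions of \cref{ssec:lift_tele} we may form the outer $\Pi$-type of $\Gamma$-indexed families. Consider the auxiliary model $\CC_{1}$ obtained by adjoining a term $\bm{c} : \Pi^{o}\ \Gamma\ (\tm\ A)$ and a section $\bm{q}$ of $\Pi^{o}\ \Gamma\ (\gamma \mapsto \tm\ (\Id\ a(\gamma)\ (\app^{o}\ \bm{c}\ \gamma)))$. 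The strict $\beta$-rule for $\Pi^{o}$ yields mutually defined morphisms $F : \CC_{0} \to \CC_{1}$ (sending $\bm{b} \mapsto \app^{o}\ \bm{c}$, $\bm{p} \mapsto \app^{o}\ \bm{q}$) and $G : \CC_{1} \to \CC_{0}$ (sending $\bm{c} \mapsto \lam^{o}\ \bm{b}$, $\bm{q} \mapsto \lam^{o}\ \bm{p}$, which typechecks by $\beta^{o}$), with $F \cdot G = \id$ strictly; this exhibits $\CC \to \CC_{0}$ as a retract of $\CC \to \CC_{1}$ under $\CC$, so by closure of weak contextual equivalences under retracts (\cref{prop:sweq_props}\,(\ref{itm:sweq_props_8})) it remains only to show that $\CC \to \CC_{1}$ is a weak contextual equivalence. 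For this one checks that the pair $(\bm{c},\bm{q})$ is $\Id^{o}$-contractible, with center $(\lam^{o}\ (\gamma \mapsto a(\gamma)),\ \lam^{o}\ (\gamma \mapsto \refl))$ and contraction obtained from function extensionality for $\Pi^{o}$ --- and here the equality-reflection marking of $\Th_{e}$ is used: it provides, for every inner path $p : \Id\ x\ y$, an outer path $\widehat{p} : \Id^{o}\ x\ y$ together with $\widetilde{p} : \Id^{o}\ p\ [\widehat{p}]$, which is exactly what converts the pointwise inner equalities recorded by $\bm{q}$ into the pointwise outer equalities that $\funext^{o}$ consumes. The parametrized $\Id^{o}$-eliminator of \cref{ssec:param_id_elim} (available since $\Id^{o}$ is a weak identity type structure with representable singletons) then upgrades this $\Id^{o}$-contractibility to the required weak lifting property, exactly as in the proof of \cref{prop:contr_weq}, using that an $\Id^{o}$-equality yields an inner $\Id$-equality via $[-]$.

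The main obstacle I expect is the last point: the sort of outer terms is not representable, so $\bm{c}$ and $\bm{q}$ cannot simply be adjoined as cellular generators, and one must instead realize $\CC_{1}$ legitimately --- packaging $(\bm{c},\bm{q})$ as an outer $\Id^{o}$-singleton extension (using $\funext^{o}$ and the $\widehat{-}/[-]$ correspondence to identify the type of such pairs with the $\Id^{o}$-singleton of $\lam^{o}\ (\gamma \mapsto a(\gamma))$), or as a contractible outer telescope --- and then verifying that the outer analogues of \cref{prop:contr_weq} and of the retract argument go through with $\Id^{o}$ in place of $\Id$ and $\Pi^{o}$ in place of $\Pi$. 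None of the ingredients are new, but the bookkeeping is delicate: the retraction $F \cdot G = \id$ must hold \emph{strictly}, which forces $\bm{q}$ to be valued in the \emph{inner} identity type, and this in turn forces the use of $\widehat{-}$ in the contractibility argument rather than in the definition of $G$; keeping straight which identity type (inner or outer) is in play at each step is the bulk of the work.
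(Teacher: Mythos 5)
Your first reduction is exactly the paper's: the naturality square of the unit, together with closure under composition and the $2$-out-of-$3$ property of \cref{prop:sweq_props}, reduces the claim to showing that the induced basic $J$-cellular extension $j' : \Init_{w,2,e}[Y] \to \Init_{w,2,e}[Y, \Gamma \vdash \bm{p} : \Id\ a\ \bm{b}]$ of models of $\Th_{w,2,e}$ is a weak contextual equivalence. Where you diverge is in how this extension is handled. The paper first uses the marked equality to note that inner and outer equalities are equivalent in models of $\Th_{w,2,e}$, yielding a weak equivalence $\Init_{w,2,e}[Y, \Gamma \vdash \bm{p} : \Id\ a\ \bm{b}] \to \Init_{w,2,e}[Y, \Gamma \vdash \bm{p} : \Id^{o}\ a\ \bm{b}]$, and then applies \cref{prop:weq_pi_singl} as a black box to the outer layer (which has $\Pi^{o}$ with a strict $\beta$-rule) to handle the outer-path cell, concluding by $2$-out-of-$3$. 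You instead rerun the proof of \cref{prop:weq_pi_singl} directly on the inner-path cell, with $\Pi^{o}$ in place of $\Pi$, pushing the inner-to-outer conversion $\widehat{(-)}, \widetilde{(-)}$ into the contractibility argument; this forces you to establish $\Id^{o}$-contractibility of a mixed singleton (with $\bm{q}$ valued in a $\Pi^{o}$ of the \emph{inner} identity type), which follows from the $\Id^{o}$-contractibility of inner singletons (given by $\widehat{p}$, $\widetilde{p}$) plus weak function extensionality for $\Pi^{o}$, at the cost of the extra bookkeeping you flag. Both routes use the same three ingredients (strict $\beta$ for $\Pi^{o}$ for the retraction, $\funext^{o}$ for contractibility, the marked equality to mediate between $\Id$ and $\Id^{o}$); the paper's factorization through the outer-path cell isolates the use of the marked equality into a single comparison of cells and lets \cref{prop:weq_pi_singl} be reused verbatim, which is tidier. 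Finally, the obstacle you single out --- that $\Tm^{o}$ is not representable, so $\bm{c}, \bm{q}$ are generators of a non-representable sort and \cref{prop:contr_weq} does not literally apply --- is genuine but not specific to your route: the paper's ``\cref{prop:weq_pi_singl} applied to the outer layer'' silently requires the same transfer of \cref{prop:contr_weq} and \cref{prop:weq_pi_singl} to the outer family (free extensions by outer-term generators, mixed telescopes, and the observation that outer equalities yield inner ones via $[-]$), so your proposal is not worse off than the paper's own proof on this point.
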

\begin{proof}
  Consider the following square.
  \[ \begin{tikzcd}[column sep=60pt]
      \Init_{w}[Y] \ar[r, "\eta^{Y}_{w}"] \ar[d, "j"] & \Init_{w,2,e}[Y] \ar[d, "j'"] \\
      \Init_{w}[Y, \Gamma \vdash \bm{p} : \Id\ a\ \bm{b}] \ar[r, "\eta^{(Y,\bm{p}:\Id\ a\ \bm{b})}_{w}"] & \Init_{w,2,e}[Y, \Gamma \vdash \bm{p} : \Id\ a\ \bm{b}]
    \end{tikzcd} \]
  We want to prove that $j$ is a weak equivalence.
  The horizontal maps are weak equivalences by assumption, so it suffices to check that $j'$ is a weak equivalence.

  The marked equality of $\Th_{e}$ implies that inner and outer equalities are equivalent in models of $\Th_{w,2,e}$, and we can derive from this a weak equivalence
  \[ \Init_{w,2,e}[Y, \Gamma \vdash \bm{p} : \Id\ a\ \bm{b}] \to \Init_{w,2,e}[Y, \Gamma \vdash \bm{p} : \Id^{o}\ a\ \bm{b}]. \]

  Now \cref{prop:weq_pi_singl}, applied to the outer layer, implies that
  \[ j'' : \Init_{w,2,e}[Y] \to \Init_{w,2,e}[Y, \Gamma \vdash \bm{p} : \Id^{o}\ a\ \bm{b}] \]
  is a weak equivalence, and we can conclude by $2$-out-of-$3$.
\end{proof}

The forward implication of \cref{conj:model_to_weq} is significantly more complicated.
We believe that it can be proven by refining \cref{con:model_simeq} and \cref{thm:pi_first_weq} to work without our simplifying assumptions.
\begin{itemize}
  \item If the weak type theory $\Th_{w}$ does not include $\Pi$-types with strict $\beta$, we cannot equip $\Init_{w}[X]$ with the structure of a model of $\Th_{w,2,e}$ to obtain the morphism $r : \Init_{w,2,e}[X] \to \Init_{w}[X]$.
    However, when $\Th_{w}$ is semi-model, we should be able to construct another model $\CM$ of $\Th_{w,2,e}$, along with a contextual isomorphism $\Init_{w}[X] \to \CM$.
    We would then obtain the following diagram.
    \[ \begin{tikzcd}
        \Init_{w}[X] \ar[d, "\eta_{w}^{X}"'] \ar[r, "\sim"] & \CM \\
        \Init_{w,2,e}[X] \ar[ru, "r"']
      \end{tikzcd} \]
    It should then be possible to proceed similarly to \cref{thm:pi_first_weq}.

    There may be several possible constructions of the model $\CM$.
    If there are no marked equations in $\Th_{e}$, we can let $\CM$ be the presheaf category $\widehat{\Init_{w}[X]}$.
    This is used in the conservativity proof for two-level type theory of \cite[Proposition 2.17]{2LTT}.
    This choice does not work if $\Th_{e}$ is non-empty.

    Using something based on space-valued presheaves, instead of set-valued presheaves, could maybe work.
    Our approach, that almost works, is to choose $\CM \triangleq (\CMod_{w}^{\cxl})^{\op}$, the dual of the category of contextual models of $\Th_{w}$.

    Indeed, that category can be equipped with the structure of a model of $\Th_{w}$ as follows.
    An inner type (resp. term) over a context $\CC : \CMod_{w}^{\cxl}$ is a closed type (resp. term) of $\CC$.
    The extension of an context $\CC$ by a type $A$ is the free extension $\CC[\bm{a} : A]$.
    All type-theoretic operations over a context $\CC$ are interpreted by the corresponding type-theoretic operations of $\CC$ at the empty context.

    It can be shown that the unique morphism $\Init_{w} \to (\CMod_{w}^{\cxl})^{\op}$ is a contextual isomorphism.
    More generally, for any contextual model $\CC$, we have a contextual isomorphism $\CC \to (\CMod_{w}^{\cxl} \slash \CC)^{\op}$ defined by induction on the contexts of $\CC$.

    The model $(\CMod_{w}^{\cxl})^{\op}$ can almost be extended to the outer layer of $\Th_{w,2,e}$.
    An outer type over $\CC$ is a cellular extension $i : \CC \to \CC[X]$, and an outer term of $i : \CC \to \CC[X]$ is a retraction of $i$.
    The outer identity type over $i : \CC \to \CC[X]$ is interpreted by a relative cylinder object for $i$.
    The outer identity type eliminator is interpreted using the weak term lifting property of trivial cofibrations, which are weak equivalences when $\Th_{w}$ is semi-model.
    This construction is essentially the same as the homotopy theoretic model of identity types of \cite{HomotopyTheoreticModels}.
    It suffers from the same problem as the model of \cite{HomotopyTheoreticModels}: the eliminator for the outer identity types is not stable under substitution.

    The interpretation of the outer $\Pi$-types is however unproblematic, in particular thanks to the fact that outer types are cellular extensions, rather than arbitrary cofibrations: given a closed type $A$ of $\CC$ and a cellular extension $\CC[\bm{a} : A] \to \CC[\bm{a} : A][X]$, the outer $\Pi$-type is represented as the cellular extension $\CC \to \CC[\bm{a} : A \vdash X]$ (i.e. $(\bm{a} : A)$ is added as an argument of every generating term of $X$).

    To use this construction, we would thus need another coherence theorem, showing that $\Th_{w,2,e}$ is equivalent to its variant with an outer identity type eliminator that is not stable under substitution.
  \item If we don't assume that $\Init_{w,2,e}$ is acyclic, the definition of the displayed model $\Init_{w,2,e}[X]^{\bullet}$ used in \cref{thm:pi_first_weq} would have to be more complicated.
    Many of its components that are propositional in the current proof would become proof-relevant.

    The inclusion of the computation rules for identity types and $\Pi$-types in $\Th_{e}$ is only used to simplify the construction of $\Init_{w,2,e}[X]^{\bullet}$ using the acyclicity of $\Init_{w,2,e}$, and shouldn't be needed in the general case.
\end{itemize}

%%% Local Variables:
%%% mode: latex
%%% TeX-master: "main"
%%% End:

\section*{Acknowledgements}
The author would like to thank Thorsten Altenkirch, Martin Bidlingmaier, Paolo Capriotti, Thierry Coquand, Simon Huber, Ambrus Kaposi, András Kovács, Nicolai Kraus, Chaitanya Leena Subramaniam, Christian Sattler and Bas Spitters for expressing interest in this work and helpful discussions.

\bibliographystyle{alpha}
\bibliography{main}

\end{document}